\newcommand{\blind}{1}
\newtheorem{lemma}{Lemma}
\newtheorem{theorem}{Theorem}
\newtheorem{proposition}{Proposition}
\newtheorem{assumption}{Assumption}
\newtheorem{definition}{Definition}
\newtheorem{example}{Example}
\newtheorem{remark}{Remark}
\def\arrvline{\hfil\kern\arraycolsep\vline\kern-\arraycolsep\hfilneg}
\DeclareMathOperator*{\argmin}{arg\,min}
\DeclarePairedDelimiter\floor{\lfloor}{\rfloor}
\begin{document}

\def\spacingset#1{\renewcommand{\baselinestretch}%
{#1}\small\normalsize} \spacingset{0}

%%%%%%%%%%%%%%%%%%%%%%%%%%%%%%%%%%%%%%%%%%%%%%%%%%%%%%%%%%%%%%%%%%%%%%%%%%%%%%

\if1\blind
{
  \title{\bf Population Interference In Panel Experiments}
  \author{Kevin Wu Han \vspace{.2cm} \hspace{.2cm}\\
    Department of Statistics, Stanford University \vspace{.2cm}\\
    and \vspace{.2cm} \\
    Iavor Bojinov \vspace{.2cm}\\ 
    Harvard Business School \vspace{.2cm}\\
    and \vspace{.2cm}\\
    Guillaume Basse \vspace{.2cm}\\
    Department of MS\&E and Department of Statistics, Stanford University}
  \maketitle
} \fi

\if0\blind
{
  \bigskip
  \bigskip
  \bigskip
  \begin{center}
    {\LARGE\bf Population Interference In Panel Experiments}
\end{center}
  \medskip
} \fi

\bigskip
\begin{abstract}
The phenomenon of population interference, where a treatment assigned to one experimental unit affects another experimental unit's outcome, has received considerable attention in standard randomized experiments. The complications produced by population interference in this setting are now readily recognized, and partial remedies are well known. Less understood is the impact of population interference in panel experiments where treatment is sequentially randomized in the population, and the outcomes are observed at each time step. This paper proposes a general framework for studying population interference in panel experiments and presents new finite population estimation and inference results. Our findings suggest that, under mild assumptions, the addition of a temporal dimension to an experiment alleviates some of the challenges of population interference for certain estimands. In contrast, we show that the presence of carryover effects --- that is, when past treatments may affect future outcomes --- exacerbates the problem. Our results are illustrated through both an empirical analysis and an extensive simulation study.  %Revisiting the special case of standard experiments with population interference, we prove a central limit theorem under weaker conditions than previous results in the literature and highlight the trade-off between flexibility in the design and the interference structure.  
\end{abstract}

\noindent%
{\it Keywords:}  Finite-population Inference, Randomization Distribution, Potential Outcomes, Dynamic Causal Effects
\vfill

%\tableofcontents

\newpage
\spacingset{1.25} % DON'T change the spacing!

\section{Introduction}

When researchers estimate causal effects from randomized experiments, they almost always make assumptions that restrict the number of counterfactual outcomes to simplify the subsequent inference. In standard experiments, where units are randomly assigned to either a treatment or control, researchers commonly assume that one unit’s assignment does not affect another unit’s response; this is usually referred to as no interference \citep[Chapter~2]{cox58}. In panel experiments, where units are exposed to different interventions over time \citep{bojinov2020panel}, in addition to no interference, researchers regularly assume that the observed outcomes were not impacted by past assignments; this is often called the no carryover assumption \citep[Chapter~13]{cox58}. Although these two assumptions are useful, there are numerous empirical examples where they are violated. This mismatch between practical applications and theoretical assumptions has catalyzed a growing amount of literature dedicated to studying relaxations of these stringent conditions for either standard or panel experiments, but not both.

In standard experiments without evoking the no interference assumption, each unit’s outcome depends on the assignments received by all other experimental units. Allowing for such arbitrary population interference\footnote{We use the term \emph{population interference} to emphasize that the interference occurred across units.} makes causal inference challenging \citep{basseairoldi2018}. In practice, researchers look for an underlying structure that limits the scope of interference. For example, when studying electoral participation during a special election in 2009 in Chicago, \cite{sinclair2012} assumed that interference occurred within-household but not across; more broadly, this type of interference has been found in many other applications, including education (\cite{hong2006}; \cite{rosenbaum2007}), economics (\cite{sobel2006}; \cite{manski2013}) and public health (\cite{halloran1995}). Inference in this setting is challenging because interference increases the number of potential outcomes and makes observations dependent. \cite{aronow2017} introduce a general framework for studying causal inference with interference: they introduce the concept of exposure mapping, define useful estimands, and construct asymptotically valid confidence intervals based on the Horvitz-Thompson estimator. 
% Other researchers have also focused on identifying and estimating estimands arising from interference like peer effects \citep{BRAMOULLE200941, pinkhamImbens2013} and spillover effects \citep{alik2019, VAZQUEZBARE2022, BRAUN2022}. 

The literature on panel experiments has similarly shifted towards relaxing the no carryover effects assumption that precludes outcomes from being impacted by past assignments. For example, in the most extreme case, \cite{bojinov19} allows for arbitrary carryover effects when studying whether algorithms or humans are better at executing large financial orders; more generally, these types of relaxations have also been studied in economics \citep{angrist2011causal, rambachan2019econometric}, epidemiology \citep{robins1986new,RobinsGreenlandHu99}, public health \citep{boruvka2018assessing}, and political science \citep{blackwell2018make}. Similarly to relaxing the no interference assumption, removing the no carryover assumption enables researchers to develop and explore a richer class of causal estimands that capture both the contemporaneous and delayed causal effects \citep{bojinov19}. The latter is particularly important for technology companies seeking to understand the long-term impact of their interventions \citep{basse2019minimax,hohnhold2015focusing}. Similarly to the population interference setting, researchers use the analogous Horvitz-Thompson type estimator estimators to analyze experiments with carryover effects.

An apparent gap in the literature is an understanding of whether the possibility of running a panel experiment alleviates the challenges posed by population interference or makes them worse. This is particularly important for researchers wishing to run field experiments for two reasons. First, it is often the case the researchers are constrained on the maximal number of experimental units they can recruit, for instance, because of costs or limits in the total population. Second, population interference often leads to increased uncertainty that reduces only by increasing the sample size. Panel experiments can alleviate these as it is often cheaper to change an experimental unit's treatment than to recruit more subjects, and uncertainty tends to decrease as the sample size and the number of time period increase. However, what happens when there is population interference and carryover effects is unclear. 

To address this gap, we introduce a unifying framework for studying panel experiments with population interference. We begin by focusing on panel experiments with population interference but no carryover effects (Section~\ref{sec: panel}). Here, we provide asymptotically valid confidence intervals for estimands defined at specific time periods and estimands that average contrasts over multiple time periods. We also introduce a novel class of assumptions enabling us to leverage past data to improve inference at a given time. Together, our results show that using panel experiments when there is population interference allows us to achieve valid inference under much weaker conditions on the population interference and even drop all restrictions for large time horizons. These results should be particularly encouraging for researchers wishing to run field experiments when the number of experimental subjects is constrained, as is often the case in Economics (for example, \cite{RePEc:eee:jetheo:v:127:y:2006:i:1:p:117-154}) and Management Science (for example, \cite{choudhury2021virtual}). 

We then tackle the most general setting featuring both population and temporal interference (Section~\ref{subsecsec: mixed_int}). Under additional assumptions, we derive a general central limit theorem, which fails to provide the same clear benefit because of the data complexity caused by carryover effects. We also state asymptotic results for a restricted type of mixed interference that generalizes the usual stratified interference to panel experiments and provides a blueprint for deriving additional results in specific contexts. Here we show a clear benefit that incorporating a temporal dimension allows us to relax the main restriction on the maximal cluster size to obtain valid inference. For researchers, these results are slightly less encouraging but, nevertheless, provide an essential next step in understanding how to leverage panel experiments in real-world settings. 

Finally, Section~\ref{sec: setup} details our setup by introducing the potential outcomes framework, our causal estimands and corresponding estimators, and the randomization-based framework that we leverage for all our results. We conclude with simulations (Section \ref{sec: simulations}), empirical applications (Section~\ref{sec: real-data-ex}), and a discussion (Section~\ref{sec: conclusion}). The Appendix contains a detailed discussion of inference under population interference for standard experiments, all proofs, and additional simulations.  
\section{Setup}
\label{sec: setup}
\subsection{Assignments}
\label{subsec: assignments}

Consider a randomized experiment occurring over $T$ periods on a finite population 
of $n$ experimental units. At each time step $t \in \{1, \cdots, T\}$, unit 
$i \in \{1, \cdots, n\}$ can be assigned to treatment ($W_{i, t} = 1$) or 
control ($W_{i, t} = 0$); extensions to non-binary treatments are straightforward. 
We denote by $W_{i, 1:t} = (W_{i, 1}, W_{i, 2}, \cdots, W_{i, t})$ the assignment 
path up to time $t$ for unit $i$, $W_{1:n, t}$ the assignment vector for all $n$ 
units at time step $t$ and $W_{1:n, 1:t} \in \{0, 1\}^{n \times t}$ the assignment 
matrix. Hence, for each $i$ and $t$, $W_{i, 1:t}$ is a vector of length $t$, 
$W_{1:n, t}$ is a vector of length $n$ and $W_{1:n, 1:t}$ is a matrix of 
dimension $n \times t$.

We define an assignment mechanism (or design) to be the probability distribution of the assignment matrix $\mathbb{P}(W_{1:n, 1:T})$.

For example, in a Bernoulli design, the assignment mechanism is independent across time and units such that $\mathbb{P}(W_{i,t} = 1) = p$ for all $i,t$. More generally, we will often work with assignments that are temporally independent.
\begin{definition}[Temporally independent assignment mechanism]\label{def temporal indep assignment}
    We say an assignment mechanism is temporally independent if $W_{1:n, t}$ and $W_{1:n, t^{'}}$ are independent for any $t$ and $t^{'}$
\end{definition}

 Following much of the literature on analyzing complex experiments, we adopt the randomization-based approach to inference, in which the assignment mechanism is the only source of randomness \citep{kempthorne1955}; see \cite{abadie_2020sampling} for a recent review. Throughout, we use lower cases $w$ with the appropriate subscript for realizations of the assignment matrix $W$.  
\subsection{Potential outcomes and exposure mappings}
\label{subsec: po, est}
The goal of causal inference is to study how an intervention impacts an outcome of interest. Following the potential outcomes formulation, for panel experiments without any assumptions, each unit $i$ at time $t$ has $2^{nT}$ potential outcomes corresponding to the total number of distinct realizations of the assignment matrix, denoted by $Y_{i,t} (w_{1:n, 1:T})$. For simplicity, we assume that the potential outcomes are one-dimensional, although it is straightforward to relax this assumption. 

In randomized experiments, where we control the assignment mechanism, the outcomes at time $t$ are not impacted by future assignments that have yet to be revealed to the units \citep{bojinov19}. This assumption drastically reduces the total number of potential outcomes\footnote{The assumption, known as non-anticipating potential outcomes \citep{bojinov19}, can be violated if experimental units are told what their future assignments will be and modify their present behavior as a result. For instance, this could occur for shoppers who expect to receive a considerable discount on a subsequent day and may curtail their spending until they receive the discount.} and will be implicitly made throughout this paper. We now formally define the no carryover effects and no population interference assumption. 
\begin{definition} [No carryover effect and population interference]
We say that there is no carryover effect if and only if
\begin{equation*}
    Y_{i, t}(w_{1:n, 1:t}) = Y_{i, t}(w_{1:n, 1:t}^{'}) \text{ whenever } w_{1:n, t} = w_{1:n, t}^{'}.
\end{equation*}
And we say that there is no population interference if and only if
\begin{equation*}
    Y_{i, t}(w_{1:n, 1:t}) = Y_{i, t}(w_{1:n, 1:t}^{'}) \text{ whenever } w_{i, 1:t} = w_{i, 1:t}^{'}.
\end{equation*}
\end{definition}

If we make both assumptions, inference is relatively straightforward. However, if we only assume that there is population interference and no carryover effects, inference is still impossible without any assumptions on the population interference structure  \citep{basseairoldi2018}. One way forward is to assume that the outcomes of unit $i$ depend only on the treatments assigned to a subset of the population. This intuition extends more generally to the assertion that the outcome of unit $i$ at time $t$ depends on a low-dimensional representation of $w_{1:n, 1:t}$. Formally, for each unique $i,t$ pair we define the exposure mapping $f_{i,t}: \{0,1\}^{n\times t} \to \Delta$, where $\Delta$ is the set of possible exposures\footnote{To make exposure mappings useful, we assume the cardinality of $\Delta$ is (substantially) smaller than $n\times t$.} \citep{aronow2017}. On the other hand, if we assume that there is no population interference, again, we would need additional assumptions to obtain valid randomization-based inference \citep{bojinov2020panel}.

Defining exposure mappings in this flexible manner allows us to unify and transparently consider restrictions on the population interference and the duration of the carryover effect. Throughout this paper, we restrict our focus to properly specified time-invariant exposure mappings, which are formally defined below.

\begin{assumption}[Properly specified time-invariant exposure mapping] \label{assump:expmap}
    The exposure mappings are properly specified if, for all pairs $i \in \{1, \cdots, n\}$ and $t\in\{1, \cdots, T\}$, and any two assignment matrices $w_{1:n, 1:t}$ and $w_{1:n, 1:t}^{'}$,
\begin{align*}
    Y_{i, t}(w_{1:n, 1:t}) &= Y_{i, t}(w_{1:n, 1:t}^{'}) \text{ whenever } \\
    f_{i, t}(w_{1:n, 1:t}) &= f_{i, t}(w_{1:n, 1:t}^{'}).
\end{align*}
For $p \in \{1, \cdots, T\}$, we say the exposure mappings are $p$-time-invariant if for any $t, t^{'} \in \{p, \cdots, T\}$ and any unit $i$,
\begin{align*}
    &f_{i, t}(w_{1:n, 1:t}) = f_{i, t^{'}}(w_{1:n, 1:t^{'}}) \text{ whenever }\\
    &w_{1:n, t - p + 1:t} = w_{1:n, t^{'} - p + 1:t^{'}}.
\end{align*}
The exposure mappings are time-invariant if the exposure mappings are $p$-time-invariant for some $p \in \{1, \cdots, T\}$.
We say the exposure mappings are properly specified time-invariant exposure mappings if they are both properly specified and time-invariant.  
\end{assumption}
Properly specified exposure mappings can be thought of as defining ``effective treatments,'' allowing us to write:
\begin{equation*}
        Y_{i,t}(w_{1:n,1:t}) = Y_{i,t}(f_{i,t}(w_{1:n,1:t})) = Y_{i,t}(h_{i,t}),
\end{equation*}
where $h_{i,t} = f_{i,t}(w_{1:n,1:t}) \in \Delta$. Time-invariant exposure mappings constrain the relationship between experimental units to be invariant over time. Specifically, it does not allow the exposure mappings to change across time. For example, if at time $t = 1$, the outcomes depend on the fraction of treated neighbors in the graph, then it cannot be the case that at time $t = 2$ the outcomes now depend on the number of treated neighbors in the graph. We will see why such an invariance assumption is necessary for the next section when we define causal effects. Of course, the validity of Assumption~\ref{assump:expmap} depends on the exact definition of the exposure mapping and should be informed by the empirical context. 

Throughout this paper, we consider a special class of exposure mappings that restrict the outcomes of unit $i$ to depend only on the assignments of a predefined subset of units that we refer to as $i$'s neighborhood and index by $\mathcal{N}_i$; note that the index set is not dynamically changing over time. For example, for units connected through a social network, $\mathcal{N}_i$ indexes the set of nodes connected to $i$ by an edge; for units organized households,  $\mathcal{N}_i$ indexes the set of units that live in the same household as $i$; and for units located in space, $\mathcal{N}_i$ indexes the set of units who are at most a certain distance away from unit $i$. 
\begin{definition} [Locally Effective Assignments (LEA)]
\label{def: leot}
We say the assignments and exposure mappings are locally effective if the exposure mappings are $p$-time-invariant for some $p \in \{1, \cdots, T\}$ and
\begin{equation*}
    f_{i, t}(w_{1:n, 1:t}) = f_{i, t}(w_{\mathcal{N}_i, t - p + 1:t}),
\end{equation*}
with the convention that $w_{\mathcal{N}_i, t - p + 1:t} = w_{\mathcal{N}_i, 1:t}$ for $t - p + 1 \leq 0$.
\end{definition}

Although LEA imposes further structure, it still provides a great deal of flexibility as it incorporates all notions of traditional population interference and temporal carryover effects as special cases. For example, fixing $p = 1$ makes the exposure values depend only on current assignments, which is equivalent to usual population interference. On the other hand, fixing $\mathcal{N}_i = \{i\}$ is equivalent to the no interference assumption imposed on panel experiments in \cite{bojinov2020panel}. Of course, these special cases are interesting and extensively studied, but our general formulation's real benefit is to consider scenarios where there is both population interference and carryover effects.

\begin{example} [Example of Locally Effective Assignments]
We consider an example where the exposure values depend on past assignments. In particular, let 
\begin{equation*}
    f_{i, t}(w_{1:n, 1:t}) = (w_{i,t - 1}, w_{i,t}, u_{i, t-1}, u_{i, t})
\end{equation*} 
where $u_{i, t - 1} = \frac{1}{|\mathcal{N}_i|}\sum_{j \in \mathcal{N}_i}w_{j, t-1}$ 
and $u_{i, t} = \frac{1}{|\mathcal{N}_i|}\sum_{j \in \mathcal{N}_i}w_{j, t}$; we use $|\mathcal{A}|$ to denote the cardinality of the set $\mathcal{A}$. Hence, one unit's assignment and the fraction of treated neighbors at the previous time step matter as well. This is a special case of LEA with $p = 2$. In this example, the exposure mappings are 2-time-invariant: for $t, t^{'} \geq 2$, if $w_{1:n, (t-1):t} = w_{1:n, (t^{'}-1):t^{'}}$ then $f_{i, t}(w_{1:n, 1:t}) = f_{i, t^{'}}(w_{1:n, 1:t^{'}})$. 
\end{example}
One limitation of LEA($p$) assumption is that it can not directly capture long-range decaying dependency on past assignments and population interference beyond a unit's neighborhood. Such long-range decaying dependency on time is uncommon in econometrics literature \citep{JUDSON19999, wooldridge2010}. For example, if we consider the following parametric model \citep{wooldridge2010}:
\begin{equation*}
    Y_{i, t} = \rho Y_{i, t-1} + \beta W_{i, t} + \epsilon_{i, t}, \text{ where } \rho \in (-1, 1) \text{ and } \mathbb{E}[\epsilon_{i, t} | Y_{i, 1}, \cdots, Y_{i, t}, W_{i, 1}, \cdots, W_{i, t}] = 0,
\end{equation*}
we have an infinite-long decaying dependency on past assignments on the current outcomes. To capture this, we would need to set $p=T$ in the LEA($p$) causal effect, which, although possible, is unlikely to be practically useful as we would not be able to estimate this estimand with any reasonable precision. Despite this limitation, the LEA$(p)$ assumption allows for a great deal of flexibility as it does not require imposing modeling assumptions on the outcomes and is still useful in many real-world situations. 

Finally, population interference beyond local interference has also been studied in the econometrics literature \citep{manski1993, BRAMOULLE200941, leung2022}. We leave the extensions of our work to this setting as future work.

\subsection{Causal effects}
\label{subsec: expcontrast}
Causal effects, within the potential outcomes framework, are defined as contrasts of each unit's potential outcomes under alternate assignments \citep{imbens2015causal}. As the number of possible contrasts grows exponentially with the number of distinct potential outcomes, we focus on two important special cases.

The first---which is well-defined regardless of the interference structure---compares the difference in the potential outcomes across two extreme scenarios: assigning every unit to treatment, $W_{1:n, 1:t} = 1_{1:n, 1:t}$, as opposed to control,  $W_{1:n, 1:t} = 0_{1:n, 1:t}$. 

\begin{definition} [Total effect at time $t$]
\label{def: taonc}
The total effect at time $t$ is
\begin{equation*}
    \tau^{TE}_t = \frac{1}{n}\sum_{i = 1}^{n}Y_{i, t}(1_{1:n, 1:t}) - \frac{1}{n}\sum_{i = 1}^{n}Y_{i, t}(0_{1:n, 1:t}).
\end{equation*}
\end{definition}
Our total effect at time $t$ corresponds to the Global Average Treatment Effect sometimes used in single time experiments \citep{ug2020randomized}. In the absence of interference and carryover effects, the total effect at time $t$ reduces to the usual average treatment effect at time $t$. 

The second---which requires Assumption \ref{assump:expmap}---provides a much richer class of causal effects with important practical applications. The TEC estimand is the generalization of the usual exposure contrast estimands \citep{aronow2017} to the panel experiment setting. Hereafter, the letter $k$ will always represent values in $\Delta$. 
\begin{definition} [Temporal exposure contrast (TEC)]
\label{def: expcontrast}
For any time step $t$ and exposure values $k, k^{'} \in \Delta$, we define the temporal exposure contrast between $k$ and $k^{'}$ to be
\begin{equation*}
    \tau^{k, k^{'}}_{t} = \frac{1}{n}\sum_{i = 1}^{n}Y_{i, t}(k) - \frac{1}{n}\sum_{i = 1}^{n}Y_{i, t}(k^{'})
\end{equation*}
\end{definition}

Here, we implicitly assume that for every unit, the potential outcome is well defined for all values of $k\in\Delta$. This assumption precludes situations where the range of the exposure mappings depends on $\mathcal{N}_i$. Further, note that if there exist carryover effects, then TEC may not be well-defined for the first few time steps. In this case, we may assume that all units are in the control group prior to the first time step in the panel experiment. 

In panel experiments, researchers are often less interested in the idiosyncratic effects at each point in time and instead focus on the temporal average causal effect that captures the intervention's average impact across both time and units \citep{boruvka2018assessing, bojinov19, bojinov2020panel, bojinov2022design, hu2022switchback, xiong2019optimal}. For example, \cite{bojinov19} are not interested in the relative difference between an algorithm or a human executing a large financial order on an arbitrary day of the experiment but are instead interested in the average difference across multiple trades on the same market. Similarly, technology companies like DoorDash \citep{tang2020control}, Lyft \cite{chamandy2016experimentation}, and Uber \cite{farronato2018innovation} use switchback design for panel experiments and consider average effect across time to make product decisions.  Such companies rarely are interested in the effect at time $t$—for instance, on Tuesday 2-3 pm—but instead, want to understand the average performance throughout the experiment.

\begin{definition} [Average total effect]
\label{def:ataonc}
The average total effect is
\begin{equation*}
    \overline{\tau^{TE}} = \frac{1}{T}\sum_{t = 1}^{T}\tau^{TE}_t.
\end{equation*}
\end{definition}
Similar to the total effect, in many applications, we are interested in the TEC's temporal average.  
% We discuss inference for the TEC in Section~\ref{sec: cross-sec}. 

\begin{definition} [Average temporal exposure contrast (ATEC)]
\label{def: avgexpcontrast}
For any exposure values $k, k^{'} \in \Delta$, we define the average temporal exposure contrast between $k$ and $k^{'}$ to be 
\begin{equation*}
    \bar{\tau}^{k, k^{'}} = \frac{1}{T}\sum_{t = 1}^{T}\tau^{k, k^{'}}_{t}
\end{equation*}
\end{definition}

Without assuming that the exposure mappings are time-invariant, the definition of the ATEC becomes more cumbersome as an exposure $k\in \Delta$ may be in the image of $f_{i,t}$ for some $t$, but not in the image of $f_{i,t'}$. That is, $Y_{i, t}(k)$ might be well-defined while $Y_{i, t'}(k)$ is not, which makes taking temporal averages difficult. 

Of course, our causal estimands are not exhaustive, and there are many other causal estimands of interest. For example, there is a vast literature in econometrics and statistics studying estimation and inference of spillover effects under either different designs or different model assumptions \citep{robins1986new,robins1999,leung2020treatment, bramoulle2020peer, VAZQUEZBARE2022}.

\subsection{Estimation and inference}
%%%%%%%%%%%%%

%
\subsubsection{The observed data}

For any choice of exposure mappings $\{f_{i,t}\}$, the observed assignment path 
$W_{1:n,1:t}$ induces the exposure $H_{i,t} = f_{i,t}(W_{1:n,1:t})$ for each $i$ and 
$t$; in particular, the assignment mechanism $\mathds{P}(W_{1:n,1:t})$ induces 
a distribution for the exposures $\mathds{P}(H_{i,t})$ for each $i$ and $t$. 
Under Assumption~\ref{assump:expmap}\footnote{We additionally assume that each unit fully complies with the assignment, leaving the relaxation of this assumption as future work.}, the observed outcomes $Y_{i,t}$ for unit $i$ 
at time $t$ can therefore be written:
\begin{equation*}
 Y_{i,t} = \sum_{k\in \Delta} \mathbf{1}(H_{i,t} = k) Y_{i,t}(k), 
 \quad \forall i \in {1, \cdots, n}, \forall t \in {1, \cdots, T},
\end{equation*}
%
% induces a distribution for the 
% exposureis well specified 
% Recall that we denoted by $W_{1:n}$
% After running a panel experiment, we observe the assignment matrix $w^{\text{obs}}_{1:n,1:T}$, the corresponding exposures induced by the observed assignment matrix $h^\text{obs}_{i,t} = f_{i,t}(w^{\text{obs}}_{1:n,1:T})$, and a matrix of observed outcomes $y_{i,t}^\text{obs}$. To connect the observed outcomes to the potential outcomes, we will assume that there is full compliance (that is, every unit receives and takes the assigned treatment). That is, for every $i\in \{1,\cdots, n\}$ and $t \in \{1,\cdots, T\}$
% \begin{equation*}
%     Y_{i, t} = \sum_k\mathbf{1}(H_{i, t} = k)Y_{i, t}(k), \quad \forall i \in {1, \cdots, n}, \forall t \in {1, \cdots, T},
% \end{equation*}
% and in particular
% \[
%     y_{i,t}^\text{obs} = \sum_k \mathbf{1}(h_{i, t}^\text{obs} = k)Y_{i, t}(k). 
% \]

We use these observed data to estimate the causal effects defined in \ref{subsec: expcontrast}.

\subsubsection{Estimation}
For the different interference structures studied in the following sections, 
we will rely on Horvitz-Thompson estimators \citep{ht1952}, or variations of it; e.g., to estimate $\tau^{k,k^{'}}_{t}$, we will use:
% Regardless of underlying interference structure, we can use a Horvitz-Thompson type \citep{ht1952} estimator to estimate $\tau^{k, k^{'}}_{t}$
\begin{equation}
    \hat{\tau}^{k, k^{'}}_{t} = \frac{1}{n}\sum_{i = 1}^{n}\frac{\mathbf{1}(H_{i,t} = k)}{\mathbb{P}(H_{i, t} = k)}Y_{i, t} - \frac{1}{n}\sum_{i = 1}^{n}\frac{\mathbf{1}(H_{i, t} = k^{'})}{\mathbb{P}(H_{i, t} = k^{'})}Y_{i, t}. \label{est: HT_at_t}
\end{equation}
Taking the temporal average of \eqref{est: HT_at_t} provides a natural estimator of $\bar{\tau}^{k, k^{'}}$,
\begin{equation}
\hat{\bar{\tau}}^{k, k^{'}}_{} = \frac{1}{T}\sum_{t = 1}^{T}\hat{\tau}^{k, k^{'}}_{t}. \label{est: HT_avg}
\end{equation}
Similarly, if we let $h^{1}_{i, t} := f_{i, t}(1_{1:n, 1:t})$ and $h^{0}_{i, t} := f_{i, t}(0_{1:n, 1:t})$, then we can estimate total effect at time $t$ (c.f. Definition \ref{def: taonc}) by the following estimator
\begin{equation}
    \hat{\tau}^{TE}_{t} = \frac{1}{n}\sum_{i = 1}^{n}\frac{\mathbf{1}(H_{i, t} = h_{i, t}^{1})}{\mathbb{P}(H_{i, t} = h_{i, t}^{1})}Y_{i, t} - \frac{1}{n}\sum_{i = 1}^{n}\frac{\mathbf{1}(H_{i, t} = h_{i, t}^{0})}{\mathbb{P}(H_{i, t} = h_{i, t}^{0})}Y_{i, t}. \label{est: taonc_HT}
\end{equation}
Again, we have a natural estimator of average total effect induced by the above estimator
\begin{equation}
    \hat{\bar{\tau}}^{TE} = \frac{1}{T}\sum_{t = 1}^T\hat{\tau}^{TE}_{t}. \label{est: ataonc}
\end{equation}
The properties of these estimators are discussed in details in the rest of this 
manuscript.

\subsubsection{Randomization-based inference}

Throughout this paper, we adopt the randomization-based (sometimes called design-based) framework--- that is, we consider the potential outcomes as fixed, with the assignment being the only source of randomness. Equivalently, we can view the randomization-based framework as conditioning on the full set of potential outcomes and only using the randomness in the assignment for inference. This framework has seen a recent uptake in causal inference \citep{lin2013,li2017,liding2019, basse2020general} and has become standard for analyzing experiments with population interference \citep{aronow2017,svje2017average,basse2018,chin2018central} and unbounded carryover effects \citep{bojinov19,rambachan2019econometric,bojinov2020panel, bojinov2022design}. An additional benefit to adopting a randomization-based approach in the context of population interference is that it explicitly removes the challenges posed by correlations between the potential outcomes and the social relationships, as the potential outcomes are fixed (unknown) constants. 

There are two dominant inferential strategies within the randomization framework. The first is to use Fisher (conditional) randomization tests for sharp null hypotheses of 
no exposure effects, or for pairwise null hypotheses contrasting two exposures. While 
these tests deliver p-values that are exact and non-asymptotic, they are challenging to run with complex exposure mappings \citep{athey2018exact,basse2019randomization,puelz2019graph}. 
% for the sharp null of no causal effect at every point in time, or for pairwise null hypotheses of no constant additive effect.  against a portmanteau alternative. 
% This approach's benefit is that it provides an exact test; however, it is computationally intensive and somewhat stringent. Worst of all, it is challenging to use this approach to obtain meaningful confidence intervals as it requires the researcher to invert a sequence of hypothesis test \citep{imbens2015causal}.

The second, which we focus on in this paper, is to construct confidence intervals based on the asymptotic distribution of our estimators, which can be used for testing if there is an effect on average. Intuitively, the asymptotic distribution represents a sequence of hypothetical randomized experiments in which either the number of units increases, the number of time steps increases, or both \citep{li2017, bojinov2020panel}. Within each step, we apply the analogous assignment mechanism, obtain the observed data, and compute our proposed estimand to estimate the causal effect of interest \citep{aronow2017, chin2018central}.  

Under the randomization framework, it is easy to show that the  
Horvitz-Thompson estimators $\hat{\tau}^{k, k^{'}}_{t}$, $\hat{\bar{\tau}}^{k, k^{'}}$, $\hat{\tau}^{TE}_{t}$ and $\hat{\bar{\tau}}^{TE}$ are unbiased for $\tau^{k, k^{'}}_{t}$, $\bar{\tau}^{k, k^{'}}$, $\tau^{TE}_{t}$ and $\bar{\tau}^{TE}$, respectively\footnote{For example, see \cite{bojinov19} and \cite{aronow2017} for explicit proof.}; obtaining central limit theorems in this setting, however, is 
notoriously challenging. In the next two sections, we develop such results for the 
above four estimators under different experiment assumptions.

%%%%%%%%%
\section{Panel experiments with population interference and no carryover 
effects}
\label{sec: panel}
%%%%%%%%%
Panel experiments are particularly helpful when there is population interference but no carryover effects--- a setting we refer to as puper population interference. This situation commonly occurs when the treatment has a relatively short-lived effect; for example, as is the case for most digital experiments on networks and platforms (see the discussion in \cite{kohavi2020trustworthy, bojinov2022online}). In this setting, inference for the TEC at a fixed point in time is equivalent to the standard experimental setup (see Appendix~\ref{sec: cross-sec} for a full discussion of this setting, including a new central limit theorem that illustrated a fundamental trade-off between the interference structure and the design of the experiment). 

We can leverage time and move beyond the standard experimental setup in two ways. First, we can focus on inference for the ATEC that captures the average effect across units and time. Here, we show that varying the treatment over time allows us to handle settings with more expansive population interference. Second, we revisit the TEC and provide a variance reduction technique that leverages a stability assumption that limits the change in the potential outcomes across time for the same unit receiving the same treatment. Together, our results demonstrate the potential of leveraging panel experiments when there are no carryover effects.

% For instance, suppose the temporal dimension does not interact with the interference mechanism, which occurs when there is only purely population interference. In this case, the inference is equivalent to the standard experimental setup (see Appendix~\ref{sec: cross-sec} for a full discussion of this setting, including a new central limit theorem that illustrated a fundamental trade-off between the interference structure and the design of the experiment), or may benefit from the additional information if we focus on the ATEC (Section~\ref{subsec: atec}) or include additional assumptions (Section~\ref{subsec: epsilonstab}). 
%In contrast, the presence of population spillovers in addition to the carryover effects --- a setting we call ``mixed interference'' --- significantly compromises our ability to draw inference (see Section~\ref{subsecsec: mixed_int}). In this setting, however, if the researcher has control over the temporal dimension, it may be possible to reduce the carryover effects by increasing the physical time between randomization points or adding a ``cool-off" period \cite{bojinov2022design}.

% We assume the same set of assumptions for each time step $t$ as in Appendix~\ref{sec: cross-sec}.

\subsection{Average temporal exposure contrast}
\label{subsec: atec}

% Inference on the TEC at time $t$ could strictly be reduced to the cross-sectional setting, where the only relevant asymptotic regime takes $n\rightarrow \infty$ (see Appendix~\ref{sec: cross-sec} for details). 
There are three distinct asymptotic regimes when considering inference for the ATEC and its natural estimator $\hat{\bar{\tau}}^{k,k'}$: (1) $T$ fixed and $n\rightarrow \infty$; (2) $T \rightarrow \infty$ and $n \rightarrow \infty$; (3) $T \rightarrow \infty$ and $n$ fixed. An important insight from this section is that inference in these three regimes requires different constraints on the population interference mechanism. Roughly speaking, the larger $T$ is relative to $n$, the more interference we can tolerate. 

\subsubsection{Assumptions}

For most of our central limit theorem results, we require a notion of the dependency graph for a collection of random variables. We define the dependency graph $G_{n,t}$ for $H_{1,t}, \cdots, H_{n,t}$ to be the graph with vertices $V_t = \{1, \cdots, n\}$ and edges $E_t$ such that $(i, j) \in E$ if and only if $H_{i,t}$ and $H_{j,t}$ are not independent. The graph $G_n$ models the dependency relationship among $n$ random variables $H_{1,t}, \cdots, H_{n,t}$. Notice that the dependency graph depends both on the interference structure and on the assignment mechanism. Finally, denote by $d_{n,t}$ the maximal number of dependent exposure values for any unit $i$ at time step $t$ and let $d_n = \limsup_{t \rightarrow \infty}d_{n,t}$ with the convention that for fixed $T$, $d_n = \max\{d_{n,1}, \cdots, d_{n,T}\}$. See Appendix~\ref{subsec: design trade-off} for the derivation of $d_{n,t}$ in several specific contexts. 

Throughout this subsection, we work exclusively with temporally independent assignment mechanisms, see Definition~\ref{def temporal indep assignment}.
Our central limit theorems also require three additional assumptions. The first two assumptions bound the potential outcomes and the inverse probabilities of exposure.
\begin{assumption} [Uniformly bounded potential outcomes]
Assume that all the potential outcomes are uniformly bounded, i.e., $|Y_{i,t}(k)| \leq M$ for some $M$ and for all $i\in\{1,n\}$, $k\in \Delta$, and $t\ge 1$. \label{assump: bddpo}
\end{assumption}
\begin{assumption} [Overlap]
Assume all the exposure probabilities are bounded away from 0 and 1, i.e., $\exists \eta > 0$ such that for all $i\in\{1,n\}$, $k\in \Delta$, and $t\ge 1$, $0 < \eta \leq \pi_i(k) \leq 1 - \eta < 1$. \label{assump: overlap}
\end{assumption}
Assumptions \ref{assump: bddpo} and \ref{assump: overlap} are standard in the causal inference literature (\cite{aronow2017}; \cite{leung2022}). Assumption \ref{assump: bddpo} holds in most practical applications as realizations of the outcome variables are almost always bounded. Assumption \ref{assump: overlap} is necessary as vanishing exposure probabilities make the causal question ill-defined as we cannot observe the associated potential outcomes.

The next assumption rules out the existence of a pathological subsequence $n_k$ along which the limiting variance of our estimator is zero.
\begin{assumption} [Nondegenerate asymptotic variance]
Assume that $\liminf_{n \rightarrow \infty}\text{Var}(\sqrt{n}\hat{\tau}^{k, k^{'}}_{t}) > 0$ for any $t$. \label{assump: nondegeneratevar}
\end{assumption}
As a consequence of this assumption, for each $t$, there exists a constant $c > 0$ such that $\text{Var}(\sqrt{n}\hat{\tau}^{k, k^{'}}_{t}) \geq c$ for all sufficiently large $n$. This type of assumption seems unavoidable, even in settings without interference (see, e.g., Corollary~1 in \cite{guo2020generalized}, and subsequent discussion).
\subsubsection{Three central limit theorems}
We now state and discuss our three new central limit theorems for the ATEC under population interference but no carryover effects. 

\begin{theorem} 
\label{thm: temporalavgclt_fixedT}
Suppose we have pure population interference and a temporally independent assignment mechanism, then for any $T$, under Assumption \ref{assump:expmap}-\ref{assump: nondegeneratevar} and the condition that $d_n = o(n^{1/4})$, we have
\begin{equation*}
    \frac{\sqrt{nT}(\hat{\bar{\tau}}^{k, k^{'}} - \bar{\tau}^{k, k^{'}})}{\sqrt{\frac{1}{T}\sum_{t=1}^{T}\sigma^2_{n, t}}} \xrightarrow{d} \mathcal{N}(0, 1),
\end{equation*}
as $n \rightarrow \infty$, where $\sigma_{n, t}^2 = \text{Var}(\sqrt{n}\hat{\tau}^{k, k^{'}}_{t})$.
\end{theorem}
This first theorem states a central limit theorem for the regime where $T$ is fixed and $n \rightarrow \infty$, making it relevant for applications where $n$ is much larger than $T$. The assumption that $d_n = o(n^{1/4})$ quantifies the dependence among observations due to interference. If we compare this result to the analogous CLT in the non-temporal experimental setup with interference, Theorem~\ref{thm: tec_clt_pure_spat_inter} in Appendix~\ref{sec: cross-sec}, we have the same requirement, namely $d_n = o(n^{1/4})$. Intuitively, this is because this asymptotic regime is closest to the standard setting with no temporal dimension—--any finite number of time periods $T$ is negligible compared with infinitely many observations $n$. 

At the other extreme, we consider the regime where $T\rightarrow \infty$ and 
$n$ is fixed:
\begin{theorem}
\label{thm: ateclyapunov}
Suppose we have pure population interference, a temporally independent assignment mechanism, and Assumptions \ref{assump:expmap}, \ref{assump: bddpo}, \ref{assump: overlap} are satisfied. Let $\sigma_{n, t}^2 = \text{Var}(\sqrt{n}\hat{\tau}^{k, k^{'}}_{t})$, we further assume that $\frac{1}{T}\sum_{t=1}^{T}\sigma^2_{n, t}$ is bounded away from 0 for any $T$. We then have
\begin{equation*}
    \frac{\sqrt{nT}(\hat{\bar{\tau}}^{k, k^{'}} - \bar{\tau}^{k, k^{'}})}{\sqrt{\frac{1}{T}\sum_{t=1}^{T}\sigma^2_{n, t}}} \xrightarrow{d} \mathcal{N}(0, 1),
\end{equation*}
as $T \rightarrow \infty$.
\end{theorem}
This central limit theorem makes no assumption whatsoever on the interference 
mechanism, beyond assuming that there are no carryover effects: in particular, we allow 
a unit's outcome to depend on any other unit's assignment. This perhaps 
surprising fact sheds some light into the nature of inference for the 
ATEC, and how it differs from the TEC. Intuitively, a central limit 
theorem requires enough ``nearly independent'' observations: this means 
that even if at any time step $t$, the observations are all correlated, we 
can still have infinitely many independent observations if: (1) 
observations are uncorrelated across time and (2) we observe infinitely 
many time periods.

The next theorem formalizes this intuition, by making the trade-off between the 
growth rates of $T$ and $d_n$ explicit:
\begin{theorem}
\label{thm: atecclt}
Suppose we have pure population interference, a temporally independent assignment mechanism, and Assumption \ref{assump:expmap}-\ref{assump: nondegeneratevar} are satisfied, then for $T = T(n)$ such that either 
\begin{equation}
    \frac{n}{T} \rightarrow 0 \label{atecclt: nodegcond}
\end{equation}
or
\begin{equation}
    \frac{\min\{d_n^2, n\}}{\sqrt{nT}} \rightarrow 0 \label{atecclt: degcond}
\end{equation}
holds, we have
\begin{equation*}
    \frac{\sqrt{nT}(\hat{\bar{\tau}}^{k, k^{'}} - \bar{\tau}^{k, k^{'}})}{\sqrt{\frac{1}{T}\sum_{t=1}^{T}\sigma^2_{n, t}}} \xrightarrow{d} \mathcal{N}(0, 1),
\end{equation*}
as $n \rightarrow \infty$, where $\sigma_{n, t}^2 = \text{Var}(\sqrt{n}\hat{\tau}^{k, k^{'}}_{t})$.
\end{theorem}
Condition (\ref{atecclt: nodegcond}) is actually a special case of 
condition (\ref{atecclt: degcond}): if we do not impose any 
assumptions on the interference, $\min\{d_n^2, n\}$ is just $n$, so we 
need $\frac{n}{\sqrt{nT}} \rightarrow 0$, which is equivalent to 
$\frac{n}{T} \rightarrow 0$. Condition~\ref{atecclt: degcond} gives 
us more subtle control over the rate of growth required of $T$ for any given level 
of interference. For instance, while for finite $T$ we would require $d_n = o(n^{1/4})$, if $T$ grows as $T(n) = \sqrt{n}$ we only require 
$d_n = o(n^{1/2})$. As with the previous theorem, the intuition behind this 
result is that as $d_n$ becomes larger, the number of ``nearly independent'' 
observations at each time point shrinks --- this must be counterbalanced by 
an increase in the the number of temporal observation, i.e, an increase in 
the rate of $T = T(n)$.

\subsubsection{Variance bounds and estimation}

Unfortunately, as is typical in finite population causal inference, $\text{Var}(\hat{\tau}^{k, k^{'}}_t)$ contains terms that are products of potential outcomes that can never be simultaneously observed from a single experiment, making it non-identifiable \citep{basse2020general}. Instead, researchers derive an upper bound to the variance and compute unbiased estimates for this bound, allowing them to conduct conservative inference (i.e., derive confidence intervals with higher coverage than the nominal level). Without making assumptions on the assignment mechanism, we can obtain a simple bound by replacing all non-observable products of potential outcomes with the sum of their squares, we denote the estimate of the bound by $\widehat{\text{Var}}(\sqrt{n}\hat{\tau}_t^{k, k^{'}})$. The specific expression can be found in the following proposition:
\begin{proposition} (Estimator of variance) 
\label{prop: varest}
Let,
\small
\begin{equation*}
\begin{split}
    \widehat{\text{Var}}(\sqrt{n}\hat{\tau}^{k, k^{'}}_t) = \frac{1}{n}\bigg\{\sum_{i = 1}^{n}\mathbf{1}(H_{i,t} = k)(1 - \pi_{i,t}(k))\left[\frac{Y_{i,t}}{\pi_{i,t}(k)}\right]^2
    + \sum_{i = 1}^{n}\sum_{j \neq i, \pi_{ij}(k) = 0}\left[\frac{\mathbf{1}(H_{i,t} = k)Y_{i,t}^2}{2\pi_{i,t}(k)} + \frac{\mathbf{1}(H_{j,t} = k)Y_{j,t}^2}{2\pi_{j,t}(k)}\right] \\
    + \sum_{i = 1}^{n}\sum_{j \neq i, \pi_{ij}(k) > 0}\mathbf{1}(H_{i,t} = k)\mathbf{1}(H_{j,t} = k)
    \times \frac{\pi_{ij}(k) - \pi_{i,t}(k)\pi_{j,t}(k)}{\pi_{ij}(k)}\frac{Y_{i,t}}{\pi_{i,t}(k)}\frac{Y_{j,t}}{\pi_{j,t}(k)} \\
    + \sum_{i = 1}^{n}\mathbf{1}(H_{i,t} = k^{'})(1 - \pi_{i,t}(k^{'}))\left[\frac{Y_{i,t}}{\pi_{i,t}(k^{'})}\right]^2
    + \sum_{i = 1}^{n}\sum_{j \neq i, \pi_{ij}(k^{'}) = 0}\left[\frac{\mathbf{1}(H_{i,t} = k^{'})Y_{i,t}^2}{2\pi_{i,t}(k^{'})} + \frac{\mathbf{1}(H_{j,t} = k^{'})Y_{j,t}^2}{2\pi_{j,t}(k^{'})}\right] \\
    + \sum_{i = 1}^{n}\sum_{j \neq i, \pi_{ij}(k^{'}) > 0}\mathbf{1}(H_{i,t} = k^{'})\mathbf{1}(H_{j,t} = k^{'})
    \times\frac{\pi_{ij}(k^{'}) - \pi_{i,t}(k^{'})\pi_{j,t}(k^{'})}{\pi_{ij}(k^{'})}\frac{Y_{i,t}}{\pi_{i,t}(k^{'})}\frac{Y_{j,t}}{\pi_{j,t}(k^{'})} \\
    -2\sum_{i = 1}^{n}\sum_{j \neq i, \pi_{ij}(k, k^{'}) > 0}\left(\pi_{ij}(k, k^{'}) - \pi_{i,t}(k)\pi_{j,t}(k^{'})\right)
    \times\frac{\mathbf{1}(H_{i,t} = k)\mathbf{1}(H_{j,t} = k^{'})}{\pi_{ij}(k, k^{'})}\frac{Y_{i,t}}{\pi_{i,t}(k)}\frac{Y_{j,t}}{\pi_{j,t}(k^{'})} \\
    +2\sum_{i = 1}^{n}\sum_{j \neq i, \pi_{ij}(k, k^{'}) = 0}\left[\frac{\mathbf{1}(H_{i,t} = k)Y_{i,t}^2}{2\pi_{i,t}(k)} + \frac{\mathbf{1}(H_{j,t} = k^{'})Y_{j,t}^2}{2\pi_{j,t}(k^{'})}\right]\bigg\},
\end{split}
\end{equation*}
\normalsize
where $\pi_{i}(k) = mathbb{P}(H_{i,t} = k)$ and $\pi_{i,j} = \mathbb{P}(H_{i,t} = k \& H_{j,t} = k)$. Then $\mathbb{E}\left[\widehat{\text{Var}}(\sqrt{n}\hat{\tau}^{k, k^{'}})\right] \geq \text{Var}(\sqrt{n}\hat{\tau}^{k, k^{'}})$.
\end{proposition}
We drop the subscript $t$ to ease notations. With the above proposition and central limit theorems, inference proceeds as follows:
\begin{proposition}
\label{prop: ci_atec}
Suppose Theorem \ref{thm: temporalavgclt_fixedT} or \ref{thm: atecclt} holds, then for any $\delta > 0$,
\begin{equation*}
    \mathbb{P}\left(\bar{\tau}^{k, k^{'}} \in \left[\hat{\bar{\tau}}^{k, k^{'}} - \frac{z_{1 - \frac{\alpha}{2}}}{\sqrt{1 - \delta}}\sqrt{\frac{1}{T^2}\sum_{t = 1}^{T}\widehat{\text{Var}}(\hat{\tau}^{k, k^{'}}_t)}, \hat{\tau}^{k, k^{'}} + \frac{z_{1 - \frac{\alpha}{2}}}{\sqrt{1 - \delta}}\sqrt{\frac{1}{T^2}\sum_{t = 1}^{T}\widehat{\text{Var}}(\hat{\tau}^{k, k^{'}}_t)}\right]\right) \geq 1 - \alpha
\end{equation*}
for large $n$. Moreover, suppose Theorem \ref{thm: ateclyapunov} holds, then for any $\delta > 0$,
\begin{equation*}
    \mathbb{P}\left(\bar{\tau}^{k, k^{'}} \in \left[\hat{\bar{\tau}}^{k, k^{'}} - \frac{z_{1 - \frac{\alpha}{2}}}{\sqrt{1 - \delta}}\sqrt{\frac{1}{T^2}\sum_{t = 1}^{T}\widehat{\text{Var}}(\hat{\tau}^{k, k^{'}}_t)}, \hat{\tau}^{k, k^{'}} + \frac{z_{1 - \frac{\alpha}{2}}}{\sqrt{1 - \delta}}\sqrt{\frac{1}{T^2}\sum_{t = 1}^{T}\widehat{\text{Var}}(\hat{\tau}^{k, k^{'}}_t)}\right]\right) \geq 1 - \alpha
\end{equation*}
for large $T$.
\end{proposition}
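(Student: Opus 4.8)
The plan is to follow the two-part template of the proof of Proposition~\ref{prop: ci}, the only genuinely new ingredient being the behaviour of the aggregated variance estimator across the (possibly growing) time index. Because the assignments are temporally independent and the interference is purely spatial, all cross-time covariances of the Horvitz--Thompson estimators vanish, so $\text{Var}(\hat{\bar{\tau}}^{k, k^{'}}) = T^{-2}\sum_{t=1}^{T}\text{Var}(\hat{\tau}^{k, k^{'}}_{t}) = (nT)^{-1}\,\frac{1}{T}\sum_{t=1}^{T}\sigma^2_{n,t}$, and the candidate estimator $\widehat{\text{Var}}(\hat{\bar{\tau}}^{k,k^{'}}) = T^{-2}\sum_{t=1}^{T}\widehat{\text{Var}}(\hat{\tau}^{k,k^{'}}_{t})$ has exactly the same product form, with each $\widehat{\text{Var}}(\hat{\tau}^{k,k^{'}}_{t}) = n^{-1}\widehat{\text{Var}}(\sqrt{n}\hat{\tau}^{k,k^{'}}_{t})$ the estimator of Proposition~\ref{prop: varest} evaluated at time $t$. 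Applying Proposition~\ref{prop: varest} at each $t$ gives $\mathbb{E}[\widehat{\text{Var}}(\sqrt{n}\hat{\tau}^{k,k^{'}}_{t})] \geq \sigma^2_{n,t}$, hence $\mathbb{E}[\widehat{\text{Var}}(\hat{\bar{\tau}}^{k,k^{'}})] \geq \text{Var}(\hat{\bar{\tau}}^{k,k^{'}})$: the aggregated estimator is conservative on average.

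The crux is to upgrade this to the high-probability statement $\mathbb{P}\big(\widehat{\text{Var}}(\hat{\bar{\tau}}^{k,k^{'}})/\text{Var}(\hat{\bar{\tau}}^{k,k^{'}}) \geq 1-\delta\big)\to 1$. Write $\widehat V_n = \frac{1}{T}\sum_{t=1}^{T}\widehat{\text{Var}}(\sqrt{n}\hat{\tau}^{k,k^{'}}_{t})$ and $V_n = \frac{1}{T}\sum_{t=1}^{T}\sigma^2_{n,t}$, so the ratio equals $\widehat V_n/V_n$; by the conservativeness just noted together with Chebyshev's inequality it suffices to prove $\text{Var}(\widehat V_n)/V_n^2 \to 0$. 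Temporal independence makes the $T$ summands of $\widehat V_n$ independent, so $\text{Var}(\widehat V_n) = T^{-2}\sum_{t=1}^{T}\text{Var}(\widehat{\text{Var}}(\sqrt{n}\hat{\tau}^{k,k^{'}}_{t}))$, and the dependency-graph covariance count already carried out in the proof of Proposition~\ref{prop: ci} bounds each term by $O(d_n^3/n)$ uniformly in $t$ (and by a fixed constant when $n$ is held fixed). Hence $\text{Var}(\widehat V_n) = O\big(d_n^3/(nT)\big)$; since $V_n$ is bounded below by a positive constant --- via Assumption~\ref{assump: nondegeneratevar} in the two regimes with $n\to\infty$, and by the explicit hypothesis on $\frac{1}{T}\sum_t\sigma^2_{n,t}$ in Theorem~\ref{thm: ateclyapunov} --- it remains to control $d_n^3/(nT)$ in each regime: for $T$ fixed, $d_n = o(n^{1/4})$ gives $d_n^3/(nT) = o(n^{-1/4})$; for $n$ fixed and $T\to\infty$ it is $O(1/T)$; and for $T = T(n)\to\infty$ one invokes conditions~\eqref{atecclt: nodegcond}/\eqref{atecclt: degcond} of Theorem~\ref{thm: atecclt}, which couple the growth rate of $T$ to the admissible size of $d_n$, to force $\text{Var}(\widehat V_n)\to 0$.

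With the event $A_n = \{\widehat V_n/V_n \geq 1-\delta\}$ now satisfying $\mathbb{P}(A_n)\to 1$, the coverage claim follows from the identical chain of inequalities used at the end of the proof of Proposition~\ref{prop: ci}, playing $\hat{\bar{\tau}}^{k,k^{'}}$, $\bar{\tau}^{k,k^{'}}$, $\text{Var}(\hat{\bar{\tau}}^{k,k^{'}}) = V_n/(nT)$ and $\widehat{\text{Var}}(\hat{\bar{\tau}}^{k,k^{'}}) = \widehat V_n/(nT)$ in the respective roles. On $A_n$ the pivotal bound with the inflated critical value $z_{1-\alpha/2}/\sqrt{1-\delta}$ is implied by the bound with critical value $z_{1-\alpha/2}$, so the coverage probability is at least $\mathbb{P}\big(|\sqrt{nT}(\hat{\bar{\tau}}^{k,k^{'}}-\bar{\tau}^{k,k^{'}})|/\sqrt{V_n} \le z_{1-\alpha/2}\big) - \mathbb{P}(A_n^{c})$, whose first term tends to $1-\alpha$ by whichever of Theorems~\ref{thm: temporalavgclt_fixedT}, \ref{thm: atecclt}, \ref{thm: ateclyapunov} is in force and whose second term tends to $0$; this gives the stated interval for large $n$ (respectively large $T$).

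I expect the second paragraph --- concentration of the aggregated variance estimator in the growing-$T$ regime --- to be the main obstacle. In the cross-sectional case one simply uses that the variance of the Aronow--Samii estimator vanishes, but here a single time slice of it need not even be $O(1)$ when interference is heavy (it is of order $d_n$, with variance of order $d_n^3/n$), so one cannot argue slice by slice; temporal independence is essential, since it is precisely what converts the per-slice order $d_n^3/n$ into the aggregate order $d_n^3/(nT)$, and the delicate point is verifying that this aggregate is controlled under exactly the hypotheses under which the corresponding central limit theorem for $\hat{\bar{\tau}}^{k,k^{'}}$ holds.
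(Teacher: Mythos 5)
Your proposal follows essentially the same route as the paper's proof: the same chain of inequalities as in Proposition~\ref{prop: ci} reduces coverage to showing $\mathbb{P}\bigl(\tfrac{1}{T}\sum_t\widehat{\text{Var}}(\hat{\tau}^{k,k'}_t)\big/\tfrac{1}{T}\sum_t\text{Var}(\hat{\tau}^{k,k'}_t) \geq 1-\delta\bigr)\to 1$, temporal independence is used to write the variance of the aggregated estimator as $\tfrac{1}{T^2}\sum_t\text{Var}\bigl(\widehat{\text{Var}}(\hat{\tau}^{k,k'}_t)\bigr)$, and Chebyshev together with the per-slice covariance count from Proposition~\ref{prop: ci} and the relevant variance lower bound finishes each regime. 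The only divergence is in the growing-$T$ regimes, where the paper does not carry out your $d_n^3/(nT)$ bookkeeping but instead bounds each $\text{Var}\bigl(\widehat{\text{Var}}(\hat{\tau}^{k,k'}_t)\bigr)$ by a uniform constant $M$ and concludes from $M/T\to 0$; so the delicate coupling to conditions~\eqref{atecclt: nodegcond}/\eqref{atecclt: degcond} that you flag as the main obstacle is handled in the paper by this cruder (and, under condition~\eqref{atecclt: nodegcond} with unrestricted interference, no more rigorous) bound rather than by a sharper argument you would need to supply.
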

The proof of the above proposition builds on the proof of Proposition~\ref{prop: ci} in Appedix~\ref{sec: cross-sec}.

The variance bound and inference for the regime when $T\to\infty$ is identical to what is given in \cite{bojinov19} and is therefore omitted. 
%%%%%%%%%
\subsection{Shrinkage estimator under stability assumption}
\label{subsec: epsilonstab}
%%%%%%%%%

% In some applications, when a unit's treatment remains the same the potential outcome might only change slightly over time. For example, consider panel experiment in which the units are playing a twice-played prisoner's dilemma game where the treatment and control represent different pay-off structures between the two rounds, like the one discussed in \cite{RePEc:eee:jetheo:v:127:y:2006:i:1:p:117-154} and analyzed in \ref{sec: real-data-ex}. If the pay-off structure remains the same, we would expect a player to use a similar strategy to what was used in the last round and so the outcome between two rounds with the same treatment should only vary . Of course, if the pay-off structure changes then
We now focus on deriving a better estimator of the TEC, at a fixed point in time. Suppose a unit receives the same treatment for two consecutive periods. If the potential outcomes are similar across time, then we can borrow information from past outcomes to reduce the variance of our eastimate of the TEC. Intuitively, this section provides a bias-variance trade-off, where we introduce some bias in our inference for a potentially substantial reduction in the variance. 
%The key idea of this section is that if the potential outcomes are similar across time, then estimates of the TEC at time $t' < t$ can be used to improve our estimate of the TEC at time $t$. 
For a specific treatment, the similar across time assumption can be formalized as follows.
%More formally, we propose a strategy for leveraging temporal information to improve inference on the TEC. Our results require additional assumptions and are somewhat weaker than in the previous section --- indeed, we provide neither a central limit theorem nor an asymptotic confidence interval --- but we believe they are an exciting avenue for future work. 

% So far, we have considered only Horvitz-Thompson estimators that, while analytically tractable, are known to have large variances when exposure probabilities are small. 

\begin{assumption} [Weak stability of potential outcomes]
\label{def: weakstability}
We say the potential outcome matrix $Y_{i, t}, i = 1, \cdots, N$, $t = 1, \cdots, T$ is $\epsilon$-weakly stable if for each $i$ and exposure value $k$, we have $|Y_{i, t}(k) - Y_{i, t+1}(k)| \leq \epsilon, \forall t \in \{1, \cdots, T - 1\}$. If we further assume that $\epsilon = 0$, we then say that the potential outcome matrix is strongly stable.
\end{assumption}
%Throughout this section, we assume $\epsilon-$stability of potential outcomes and purely spatial interference.
All results in this section easily generalize to the case where the uniform bound $\epsilon$ is replaced by a time dependent bound $\epsilon_t$. Throughout, we focus on the estimation of the total effect at time $t$ as an example to illustrate how we can leverage temporal information under weak 
stability.

Under pure population interference and time-invariant exposure mappings,
\begin{align}
    \tau_t^{TE} = \frac{1}{n}\sum_{i = 1}^{n}Y_{i, t}(h_i^1) - \frac{1}{n}\sum_{i = 1}^{n}Y_{i, t}(h_i^0), \label{equa: tau_t}
\end{align}
where $h_i^1 = f_i(1_{t, 1:n})$ and $h_i^0 = f_i(0_{t, 1:n})$. 

To build some intuition, we first describe how to leverage a single past time period, $t'= t-1$ to improve estimation at time $t$. The idea is that 
by considering a convex combination $\hat{\tau}^c_t = \alpha\hat{\tau}_t^{TE} + (1 - \alpha)\hat{\tau}_{t - 1}^{TE}$,
for some $\alpha \in [0, 1]$ as an estimator of $\hat{\tau}_t^{TE}$, we 
introduce some bias but reduce the variance --- the hope being that under weak 
stability, the bias introduced will be modest compared to the reduction in variance. 
This is formalized in the following proposition.
\begin{proposition} [Bound on the bias of $\hat{\tau}^c_t$]
\label{prop: k=2cvxbias}
\begin{equation}
    |\mathbb{E}[\hat{\tau}^c_t] - \tau_t^{TE}| \leq 2(1 - \alpha)\epsilon
\end{equation}
\end{proposition}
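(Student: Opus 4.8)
The plan is to use linearity of expectation together with the unbiasedness of the Horvitz--Thompson estimators $\hat{\tau}^{TE}_t$ and $\hat{\tau}^{TE}_{t-1}$, and then absorb the resulting discrepancy into the weak stability bound from Assumption~\ref{def: weakstability}. First I would write out $\mathbb{E}[\hat{\tau}^c_t] = \alpha\,\mathbb{E}[\hat{\tau}^{TE}_t] + (1-\alpha)\,\mathbb{E}[\hat{\tau}^{TE}_{t-1}]$ by linearity, and invoke the fact (stated in Section~\ref{subsec: po, est}, and recalled just before Section~\ref{sec: cross-sec}) that each $\hat{\tau}^{TE}_s$ is unbiased for $\tau^{TE}_s$. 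Hence $\mathbb{E}[\hat{\tau}^c_t] = \alpha\,\tau^{TE}_t + (1-\alpha)\,\tau^{TE}_{t-1}$, so that
\begin{equation*}
    \mathbb{E}[\hat{\tau}^c_t] - \tau^{TE}_t = (1-\alpha)\left(\tau^{TE}_{t-1} - \tau^{TE}_t\right).
\end{equation*}

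Next I would bound $|\tau^{TE}_{t-1} - \tau^{TE}_t|$. Using the representation~\eqref{equa: tau_t} valid under pure population interference and time-invariant exposure mappings, we have $\tau^{TE}_s = \frac{1}{n}\sum_{i=1}^n Y_{i,s}(h_i^1) - \frac{1}{n}\sum_{i=1}^n Y_{i,s}(h_i^0)$ with $h_i^1, h_i^0$ not depending on $s$ (this is where time-invariance is used: the same exposure values $h_i^1 = f_i(1_{t,1:n})$ and $h_i^0 = f_i(0_{t,1:n})$ appear at every time step). Therefore
\begin{equation*}
    \tau^{TE}_{t-1} - \tau^{TE}_t = \frac{1}{n}\sum_{i=1}^n\left[\left(Y_{i,t-1}(h_i^1) - Y_{i,t}(h_i^1)\right) - \left(Y_{i,t-1}(h_i^0) - Y_{i,t}(h_i^0)\right)\right].
\end{equation*}
Applying the triangle inequality term by term and then Assumption~\ref{def: weakstability} with exposure values $k = h_i^1$ and $k = h_i^0$ gives $|Y_{i,t-1}(h_i^1) - Y_{i,t}(h_i^1)| \leq \epsilon$ and likewise for $h_i^0$, so each summand is at most $2\epsilon$ in absolute value, yielding $|\tau^{TE}_{t-1} - \tau^{TE}_t| \leq 2\epsilon$. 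Combining with the previous display gives $|\mathbb{E}[\hat{\tau}^c_t] - \tau^{TE}_t| \leq 2(1-\alpha)\epsilon$, as claimed.

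This proof is essentially routine once the pieces are assembled, so there is no serious obstacle; the only point requiring care is making explicit that time-invariance of the exposure mappings is what allows $h_i^1$ and $h_i^0$ to be the \emph{same} across the two time steps, so that the comparison reduces cleanly to differences of potential outcomes at a fixed exposure value — which is precisely the quantity controlled by weak stability. One should also note that the bound holds pointwise in $\alpha$ and does not require any assumption on the design beyond what is already in force in this section (temporal independence of the assignment mechanism, which is not even needed for this particular expectation calculation but is used elsewhere when analyzing the variance of $\hat{\tau}^c_t$).
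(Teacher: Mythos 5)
Your proof is correct and follows essentially the same route as the paper's: linearity plus unbiasedness to reduce the bias to $(1-\alpha)\lvert\tau^{TE}_{t-1}-\tau^{TE}_t\rvert$, then the triangle inequality and $\epsilon$-weak stability applied at the fixed exposures $h_i^1, h_i^0$ to bound that difference by $2\epsilon$. Your explicit remark that time-invariance of the exposure mappings is what makes $h_i^1$ and $h_i^0$ common to both time steps is a useful clarification of a step the paper leaves implicit, but it is not a different argument.
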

As we can see, the absolute bias of $\hat{\tau}_t^c$ is bounded by a quantity that grows linearly with $\epsilon$: if $\epsilon$ is very small, then so will the maximum bias. In particular, $\hat{\tau}_t^c$ is unbiased for $\tau_t^{TE}$ if $\epsilon = 0$, which corresponds to the somewhat unrealistic assumption that the potential outcomes do not vary across time. Under some conditions, it can be guaranteed that the gain in bias is more than counterbalanced by a reduction in variance, making it a worthwhile trade-off in terms of the mean squared error (MSE).

\begin{proposition}
Suppose $Var(\hat{\tau}_t^{TE}) > Cov(\hat{\tau}_t^{TE}, \hat{\tau}_{t-1}^{TE})$, then there exists some $\alpha \in (0, 1)$ such that $\hat{\tau}^c_t = \alpha\hat{\tau}_t^{TE} + (1 - \alpha)\hat{\tau}_{t-1}^{TE}$ has lower MSE than $\hat{\tau}_t^{TE}$. Moreover, if we have $Var(\hat{\tau}_t^{TE}) - Var(\hat{\tau}_{t-1}^{TE}) > 4\epsilon^2$ then we know that $\hat{\tau}^c_t = \frac{1}{2}\hat{\tau}_t^{TE} + \frac{1}{2}\hat{\tau}_{t-1}^{TE}$ has lower MSE than $\hat{\tau}_t^{TE}$. \label{prop: msereduction, k=2}
\end{proposition}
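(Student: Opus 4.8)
The plan is a direct bias--variance computation. Fix $t$ and regard $\hat{\tau}_t^c = \alpha\hat{\tau}_t^{TE} + (1-\alpha)\hat{\tau}_{t-1}^{TE}$ as an estimator of $\tau_t^{TE}$. Write $V_t = \text{Var}(\hat{\tau}_t^{TE})$, $V_{t-1} = \text{Var}(\hat{\tau}_{t-1}^{TE})$, $C = \text{Cov}(\hat{\tau}_t^{TE},\hat{\tau}_{t-1}^{TE})$ and $b = \tau_{t-1}^{TE}-\tau_t^{TE}$. Since $\hat{\tau}_t^{TE}$ and $\hat{\tau}_{t-1}^{TE}$ are unbiased, the bias of $\hat{\tau}_t^c$ equals $(1-\alpha)b$ --- this is precisely the computation in the proof of Proposition~\ref{prop: k=2cvxbias} --- so its mean squared error as an estimator of $\tau_t^{TE}$ is
\begin{equation*}
g(\alpha) := \text{MSE}(\hat{\tau}_t^c) = \alpha^2 V_t + (1-\alpha)^2\bigl(V_{t-1}+b^2\bigr) + 2\alpha(1-\alpha)C .
\end{equation*}
Because $\hat{\tau}_t^{TE}$ is unbiased, $\text{MSE}(\hat{\tau}_t^{TE}) = V_t = g(1)$, so in both parts it suffices to produce $\alpha\in(0,1)$ with $g(\alpha) < g(1)$. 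Note that no assumption on the design (such as temporal independence of the assignments) is needed for this reduction, nor for the variance decomposition above, which is just the variance of a linear combination of two random variables.

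For the first claim, substitute $\beta = 1-\alpha$ and expand to obtain the identity
\begin{equation*}
g(\alpha) - V_t = \beta\Bigl\{-2(V_t-C) + \beta\bigl(V_t + V_{t-1} + b^2 - 2C\bigr)\Bigr\}.
\end{equation*}
The coefficient of $\beta$ inside the braces is $\text{Var}(\hat{\tau}_t^{TE}-\hat{\tau}_{t-1}^{TE}) + b^2 \ge 0$, while $V_t - C > 0$ by hypothesis, so the expression in braces equals $-2(V_t-C) < 0$ at $\beta = 0$ and, being affine in $\beta$, remains negative for all sufficiently small $\beta \in (0,1)$; for any such $\beta$ the choice $\alpha = 1-\beta$ lies in $(0,1)$ and satisfies $g(\alpha) < V_t$. (Equivalently, $g$ is a quadratic in $\alpha$ with $g'(1) = 2(V_t-C) > 0$, hence strictly increasing just to the left of $\alpha = 1$.)

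For the second claim, evaluate the same identity at $\beta = \tfrac12$, which gives $g(\tfrac12) - V_t = \tfrac14\bigl(V_{t-1} + b^2 + 2C - 3V_t\bigr)$; hence it is enough to show $V_{t-1} + b^2 + 2C < 3V_t$. Two inequalities close this. First, $\epsilon$-weak stability yields $|b| = |\tau_{t-1}^{TE}-\tau_t^{TE}| \le 2\epsilon$ (established inside the proof of Proposition~\ref{prop: k=2cvxbias}), so $b^2 \le 4\epsilon^2$. Second, Cauchy--Schwarz together with AM--GM gives $2C \le 2\sqrt{V_tV_{t-1}} \le V_t + V_{t-1}$. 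Combining, $V_{t-1}+b^2+2C \le V_t + 2V_{t-1} + 4\epsilon^2$, which is strictly less than $3V_t$ precisely when $V_{t-1} + 2\epsilon^2 < V_t$; and this follows from the standing hypothesis $V_t - V_{t-1} > 4\epsilon^2 \ge 2\epsilon^2$. Therefore $g(\tfrac12) < V_t$.

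There is no genuine obstacle: the statement is essentially an exercise in the bias--variance trade-off. The only points that need care are keeping the squared-bias term $(1-\alpha)^2b^2$ bookkept correctly throughout --- it is exactly this term that makes the $\epsilon$-stability bound necessary in the second claim, whereas the first claim needs only that $b$ is finite --- and observing that the coefficient $V_t + V_{t-1} + b^2 - 2C$ appearing in the key identity is nonnegative, being a variance plus a nonnegative number, which is what makes the ``small $\beta$'' argument in the first claim valid regardless of the sign of $C$.
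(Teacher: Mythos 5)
Your proof is correct and follows essentially the same route as the paper's: a bias--variance decomposition of the MSE viewed as a quadratic in the weight, with the stability bound $|\tau_{t-1}^{TE}-\tau_t^{TE}|\le 2\epsilon$ from Proposition~\ref{prop: k=2cvxbias} invoked where needed. The only cosmetic difference is in the second claim, where you evaluate the quadratic directly at $\alpha=\tfrac12$ and control the covariance via $2C\le V_t+V_{t-1}$ (Cauchy--Schwarz plus AM--GM), while the paper substitutes the bias bound $4(1-\alpha)^2\epsilon^2$ up front and shows the vertex of the resulting sufficient quadratic lies below $\tfrac12$ exactly when $\mathrm{Var}(\hat{\tau}_t^{TE})-\mathrm{Var}(\hat{\tau}_{t-1}^{TE})>4\epsilon^2$; both arguments are sound and yield the same conclusion.
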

% By Cauchy-Schwartz inequality, 
% \begin{equation*}
%     \text{Cov}(\hat{\tau}_t^{TE}, \hat{\tau}_{t-1}^{TE}) \leq \sqrt{\text{Var}(\hat{\tau}_t^{TE})}\sqrt{\text{Var}(\hat{\tau}_{t-1}^{TE})},
% \end{equation*}
% hence if $\text{Var}(\hat{\tau}_t^{TE}) > \text{Var}(\hat{\tau}_{t-1}^{TE})$, then 
% \begin{equation*}
% Var(\hat{\tau}_t^{TE}) > Cov(\hat{\tau}_t^{TE}, \hat{\tau}_{t-1}^{TE}). 
% \end{equation*}
%Therefore, 
In other words, if the current variance is larger, by choosing some $\alpha$, the convex combination type estimator would give us a better estimator in terms of MSE. Moreover, as the proposition suggests, if we know the difference is bigger than $4\epsilon^2$, we know that $\alpha = \frac{1}{2}$ is sufficient.

If we further assume that the assignment mechanism is temporally independent (Definition~\ref{def temporal indep assignment}), then $\text{Cov}(\hat{\tau}_t^{TE}, \hat{\tau}_{t-1}^{TE}) = 0$, hence we have the following result. % the Proposition \ref{prop: msereductionwithindtreat}.
\begin{proposition}
Suppose that the assignments mechanism is temporally independent, then there exists some $\alpha \in (0, 1)$ such that $\hat{\tau}^c_t = \alpha\hat{\tau}_t^{TE} + (1 - \alpha)\hat{\tau}_{t-1}^{TE}$ has lower MSE than $\hat{\tau}_t^{TE}$. The optimal $\alpha$ is given by $\alpha = 1 - \frac{Var(\hat{\tau}_t^{TE})}{4\epsilon^2 + Var(\hat{\tau}_t^{TE}) + Var(\hat{\tau}_{t - 1}^{TE})}$. \label{prop: msereductionwithindtreat}
\end{proposition}
% We show in the simulation section that the reduction in mean squared error is significant when $n$ is small.
% \subsubsection{Estimating $\epsilon$}

\begin{algorithm}[t]
    \caption{Algorithm to estimate $\epsilon$}
    \label{alg: epsilon}
  \begin{algorithmic}[1]
    \STATE Initialize $\hat{\epsilon} = 0$
    \STATE For $t$ = 1 to $T - 1$:
    \begin{itemize}
        \item [(a)] For $i = 1, 2,\cdots,n$ compute $h_{i, t}$ and $h_{i, t + 1}$.
        \item [(b)] If $h_{i, t} = h_{i, t+1} = k$, compute $\epsilon_{i, t} = |y_{i, t} - y_{i, t+1}|$. 
        \item [(c)] If $\epsilon_{i, t} > \hat{\epsilon}$, set  $\hat{\epsilon} = \epsilon_{i, t}$.
    \end{itemize}
    \STATE Output $\hat{\epsilon}$.
  \end{algorithmic}
\end{algorithm}

Under the $\epsilon-$stability assumption, Algorithm \ref{alg: epsilon} 
provides a data dependent approach to estimate $\epsilon$ and allows us to obtain estimate $\hat{\alpha}$ of the weight parameter 
$\alpha$,

%Now, equipped with estimators of the variance and $\epsilon$, we can estimate $\alpha$ by
\begin{equation*}
    \hat{\alpha} = 1 - \frac{\widehat{\text{Var}}(\hat{\tau}_t^{TE})}{\widehat{\text{Var}}(\hat{\tau}_t^{TE}) + \widehat{\text{Var}}(\hat{\tau}_{t^{'}}^{TE}) + 4(t - t^{'})^2\hat{\epsilon}^2},
\end{equation*}
where $\widehat{\text{Var}}(\hat{\tau}_t^{TE})$ can be any estimator of 
the variance $\text{Var}(\hat{\tau}_t^{TE})$: we discuss a few options in Proposition \ref{prop: totaleffectvarest} of Appendix \ref{appA}. 
In addition, under pure population interference and temporally independent assignments, 
\begin{align*}
    \text{Var}(\hat{\tau}_t^c) &= \text{Var}\bigg(\alpha\hat{\tau}_t^{TE} + (1 - \alpha)\hat{\tau}_{t - 1}^{TE}\bigg) \\
    &= \alpha^2\text{Var}(\hat{\tau}_t^{TE}) + (1 - \alpha)^2\text{Var}(\hat{\tau}_{t - 1}^{TE}),
\end{align*}
which suggests the following plug-in estimator of the variance:
\begin{equation*}
    \widehat{\text{Var}}(\hat{\tau}_t^c) = \hat{\alpha}^2\widehat{\text{Var}}(\hat{\tau}_t^{TE}) + (1 - \hat{\alpha})^2\widehat{\text{Var}}(\hat{\tau}_{t - 1}^{TE}).
\end{equation*}
We also give the expression of $\text{Cov}(\hat{\tau}_t^{TE}, \hat{\tau}_{t-1}^{TE})$ and an estimator for it in Proposition~\ref{prop: var_and_cov_of_tauhat} and Proposition~\ref{prop: totaleffectcovest} of Appendix~\ref{appA} respectively. Equipped with the variance and the covariance estimators, we can directly check the condition in Proposition~\ref{prop: totaleffectcovest}. The optimal $\alpha$ is given in the proof and can be estimated in the similar way as in the independent assignment case.

We can easily generalize the above discussion to a general version of this estimator such that we combine $\hat{\tau}_t^{TE}$ and $\hat{\tau}_{t^{'}}^{TE}$ for arbitrary $t^{'} < t$. The above results trivially generalize and are therefore omitted for brevity.
%We now give the analogous results.
% \begin{proposition}
% Suppose $Var(\hat{\tau}_t^{TE}) > Cov(\hat{\tau}_t^{TE}, \hat{\tau}_{t^{'}}^{TE})$, then there exists some $\alpha \in (0, 1)$ such that $\hat{\tau}^c_t = \alpha\hat{\tau}_t^{TE} + (1 - \alpha)\hat{\tau}_{t^{'}}^{TE}$ has lower MSE than $\hat{\tau}_t^{TE}$. Moreover, if we have $Var(\hat{\tau}_t^{TE}) - Var(\hat{\tau}_{t^{'}}^{TE}) > 4(t - t^{'})^2\epsilon^2$, then $\hat{\tau}^c_t = \frac{1}{2}\hat{\tau}_t^{TE} + \frac{1}{2}\hat{\tau}_{t^{'}}^{TE}$ has lower MSE than $\hat{\tau}_t^{TE}$. \label{prop: msereduction, k = 2, arbitrary t}
% \end{proposition}
% \begin{proposition}
% Suppose that the assignments are independent across time, then there exists some $\alpha \in (0, 1)$ such that $\hat{\tau}^c_t = \alpha\hat{\tau}_t^{TE} + (1 - \alpha)\hat{\tau}_{t^{'}}^{TE}$ has lower MSE than $\hat{\tau}_t^{TE}$. The optimal $\alpha$ is given by $\alpha_0 = 1 - \frac{Var(\hat{\tau}_t^{TE})}{4(t - t^{'})^2\epsilon^2 + Var(\hat{\tau}_t^{TE}) + Var(\hat{\tau}_{t^{'}}^{TE})}$. \label{prop: msereductionarbitraryt}
% \end{proposition}

%

%
%
% As mentioned in the introduction to this section, we do not have formal 
% inferential results at the moment --- this is an open area for future work. However, b
Based on the variance estimator above, we propose two ways to construct confidence intervals. The first one ignores the bias of $\hat{\tau}_t^c$ and uses Gaussian confidence interval. The second one takes advantage of Chebyshev's inequality and incorporates the bias. Specifically, note that
\begin{equation*}
    \mathbb{P}(|\hat{\tau}_t^c - (\mathbb{E}[\hat{\tau}_t^c] - \tau_t^{TE}) - \tau_t^{TE}| \geq \epsilon) \leq \frac{\text{Var}(\hat{\tau}_t^c)}{\epsilon^2},
\end{equation*}
hence $\forall \delta > 0$, 
\begin{equation*}
% \begin{split}
\mathbb{P}\left(\tau_t^{TE} \in \left[\hat{\tau}_t^c - (\mathbb{E}[\hat{\tau}_t^c] - \tau_t^{TE}) - \epsilon, 
\hat{\tau}_t^c - (\mathbb{E}[\hat{\tau}_t^c] - \tau_t^{TE}) + \epsilon\right]\right) \geq 1 - \delta
% \end{split}
\end{equation*}
for $\epsilon = \sqrt{\frac{\text{Var}(\hat{\tau}_t^c)}{\delta}}$. Let $b(\hat{\tau}_t^c) = \mathbb{E}[\hat{\tau}_t^c] - \tau_t^{TE} = (1 - \alpha)(\tau_{t - 1}^{TE} - \tau_t^{TE})$ be the bias of our convex combination estimator. If we estimate $b(\hat{\tau}_t^c)$ by $\hat{b}(\hat{\tau}_t^c) = (1 - \hat{\alpha})(\hat{\tau}_{t - 1}^{TE} - \hat{\tau}_t^{TE})$, then we can use the following interval as an approximate $(1 - \delta)$-level confidence interval of $\tau_t^{TE}$:
\begin{equation*}
    \left[\hat{\tau}_t^c - \hat{b}(\hat{\tau}_t^c) - \sqrt{\frac{\widehat{\text{Var}}(\hat{\tau}_t^c)}{\delta}}, \hat{\tau}_t^c - \hat{b}(\hat{\tau}_t^c) + \sqrt{\frac{\widehat{\text{Var}}(\hat{\tau}_t^c)}{\delta}}\right].
\end{equation*}
We explore empirically the coverage of the above approximate confidence 
intervals with a simulation study in Section~\ref{sec: simulations}.

The approach we have described in this section naturally extends to using the $k - 1$ 
previous time steps, yielding the weighted combination estimator:
\begin{equation*}
    \hat{\tau}_t^c = \alpha_1\hat{\tau}_{t-k+1}^{TE} + \cdots + \alpha_k\hat{\tau}_{t}^{TE},
\end{equation*}
where $\alpha_1, \ldots, \alpha_k$ can be estimated by solving a slightly more involved 
convex optimization problem. We describe this approach in full details in Appendix \ref{suppl_sec_b}.

%%%%%%%%%
\section{Panel experiments with population interference and carryover effects}
\label{subsecsec: mixed_int}
%%%%%%%%%

Section \ref{sec: panel} shows that adding a temporal dimension is particularly useful if no carryover effects exist. We now consider the setting where there are both carryover effects and population interference, which we call mixed interference. As we show below, mixed interference affects our ability to draw inference both for the TEC and ATEC, albeit in different ways. For temporal exposure contrasts (TEC), the same theorem as in Section 
\ref{sec: cross-sec} holds. % (recall that $d_n$ is the maximal degree of the dependency graph of $H_1, \cdots, H_n$):
\begin{theorem}
\label{thm: tec_mixed_inter}
Assume we have a temporally independent assignment mechanism, under Assumption \ref{assump:expmap}-\ref{assump: nondegeneratevar} and the condition that $d_n = o(n^{1/4})$, we have
\begin{equation*}
    \frac{\sqrt{n}(\hat{\tau}^{k, k^{'}}_t - \tau^{k, k^{'}}_t)}{\text{Var}(\sqrt{n}\hat{\tau}^{k, k^{'}}_t)^{1/2}} \xrightarrow{d} \mathcal{N}(0, 1)
\end{equation*}
as $n\to\infty$.
\end{theorem}
The difference with the pure population setting, discussed in Appendix~\ref{sec: cross-sec}, is not mathematical but conceptual: in the mixed setting, the exposures involve the assignments over previous time steps. Consequently, there are generally many more exposures than in the pure population setting, and each unit has a lower probability of receiving each. This leads to Horvitz-Thompson estimators with a much larger variance.
%This result makes intuitive sense: if we only focus on one time step, temporal dependence does not matter. Corollaries \ref{cor: cor1}, \ref{cor: cor2} and \ref{cor: cor3} also carry through, unchanged.

For the average temporal exposure contrast, the difference between population 
interference and mixed interference is starker. The main difficulty is that mixed 
interference breaks the temporal independence that powered the results of 
Section~\ref{subsec: atec}. 

Nevertheless, we can still establish a general central limit theorem, be it under some additional assumptions. In particular, we require a restriction on the rate at which the variance shrinks.
\begin{assumption}
\label{assump: nonvanishvarmixed}
Assume that 
\begin{equation*}
\liminf_{n \rightarrow \infty}\text{Var}(\sqrt{nT}\hat{\bar{\tau}}^{k, k^{'}}) \geq \epsilon > 0
\end{equation*}
for some $\epsilon$.
\end{assumption}
This technical assumption rules out the pathological case that the variance vanishes as $n \rightarrow \infty$. To state our theorem, we also need to introduce the notion of $s$-dependent sequence. 
\begin{definition}[$s$-dependant sequences]
We say that $\{H_{i, t}\}_{i = 1}^{n}$ is an $s$-dependent sequence of random variables if and only if for any index set $I, J \subseteq 1,\dots,n$, $\{H_{i, t}\}_{i = 1}^{n}$ and $\{H_{i, t}\}_{i = 1}^{n}$ are independent so long as $\min_{j \in J}j - \max_{i \in I}i > s$.    
\end{definition}
Intuitively, this assumption limits the depends across units. 
\begin{theorem} \label{thm: mixedgeneral}
Assume we have a temporally independent assignment mechanism, under Assumption \ref{assump:expmap}-\ref{assump: nondegeneratevar} and Assumption \ref{assump: nonvanishvarmixed}, suppose $\{H_{i, t}\}_{i = 1}^{n}$ is an $s$-dependent sequence of random variables for a fixed $t$ and LEA$(p)$ assumption is satisfied with some finite $p$. If $s, n, T$ are such that $s^5T^4 = o(n^{1 - \alpha})$ for some $0 < \alpha < 1$, then we have that
\begin{equation*}
    \frac{\sqrt{nT}(\hat{\bar{\tau}}^{k, k^{'}} - \bar{\tau}^{k, k^{'}})}{\sqrt{\text{Var}(\sqrt{nT}\hat{\bar{\tau}}^{k, k^{'}})}} \xrightarrow{d} \mathcal{N}(0, 1)
\end{equation*}
as $n \rightarrow \infty$.
\end{theorem}

 The above theorem requires general assumptions on exposure values as well as the asymptotic variance though independent assignments are not required. However, it is somewhat difficult to apply this result to practical settings. Therefore, we now focus on a specific setting to illustrate the type of results that can be derived under mixed interference.
\subsection{Stratified interference}

Consider the following natural temporal extension of the stratified interference (\cite{hudgens2008}; \cite{basse2018})
setting (\cite{hudgens2008}; \cite{basse2018}):
\begin{equation*}
    f_{i, t}(w_{1:n, 1:t}) = f(w_{i, t-1}, w_{i, t}, \{w_{j, t-1}\}_{j \in \mathcal{N}_i, j \neq i}, \{w_{j, t}\}_{j \in \mathcal{N}_i, j \neq i})
\end{equation*}
where $\mathcal{N}_i$ is the group to which unit $i$ belongs. For convenience, 
we fix each group to be of size $r$ \footnote{This ensures that each unit is associated with exactly the same set of exposure values so that the exposure contrast between two arbitrary exposure values is well-defined.}.

\begin{theorem}
\label{thm: groupclt}
With the above setting and temporally independent assignments, under Assumption \ref{assump:expmap}-\ref{assump: nondegeneratevar} and Assumption \ref{assump: nonvanishvarmixed}, suppose $n, r, T$ are such that $r = o((nT)^{\frac{1}{4}})$, then we have that
\begin{equation*}
    \frac{\sqrt{nrT}(\hat{\bar{\tau}}^{k, k^{'}} - \bar{\tau}^{k, k^{'}})}{\sqrt{\text{Var}(\sqrt{nrT}\hat{\bar{\tau}}^{k, k^{'}})}} \xrightarrow{d} \mathcal{N}(0, 1)
\end{equation*}
as $n \rightarrow \infty$.
\end{theorem}

The theorem holds for heterogeneous group sizes as long as $\max_i r_i = o((nT)^{\frac{1}{4}})$ where $r_i = |\mathcal{N}_i|$ is the size of the group unit $i$ belongs to.
To do inference, we consider a specific example of stratified interference:
\begin{equation*}
    f_{i, t}(w_{1:n, 1:t}) = \left(w_{i, t-1}, w_{i, t}, \sum_{j \in \mathcal{N}_i, j \neq i}w_{j, t-1}, \sum_{j \in \mathcal{N}_i, j \neq i}w_{j, t}\right).
\end{equation*}
We focus on the Bernoulli design where each unit is independently assigned to treatment with probability $\frac{1}{2}$. We consider the exposures $k = (1,1,r-1,r-1)$ and 
$k' = (0,0,0,0)$. Such exposure contrast is exactly the same as the total effect since essentially we are comparing the world of everyone getting treatment to the world of everyone getting control. Notice that in this case, $r$ cannot be infinite, otherwise the overlap assumption would be violated. To ease notations, we index each unit $i$ by a tuple $(l, q)$, meaning that unit $i$ is the $q$-th unit in the $l$-th group\footnote{If we use a tuple $(l, q)$ to represent the $q-$th unit in the $l-$th household, then we note by passing that $0 < C_1 \leq Y_{(l, q), t}(k) \leq C_2$ for all $l, q, t, k$ for some $C_1, C_2$ is sufficient for Assumption~\ref{assump: nonvanishvarmixed}.}.

\begin{proposition} \label{prop: householdclt-ci}
Assuming the above setup, let  $X_{n,t}=\sqrt{n}(\hat{\tau}^{k, k^{'}}_{t} - \tau^{k, k^{'}}_t)$, then we can estimate the asymptotic variance by 
\begin{equation*}
\widehat{B_n}^2 =  \sum_{t = 1}^{T}\widehat{\text{Var}}(X_{n, t}) + 2\sum_{t = 1}^{T - 1}\widehat{\text{Cov}}(X_{n, t}, X_{n, t + 1}),
\end{equation*}
where 
\begin{multline}
    \widehat{\text{Var}}(X_{n, t}) = \frac{1}{nrT}\left[\sum_{l = 1}^{n}\sum_{q = 1}^{r}(2^{2r} - 1)\frac{\mathbf{1}(H_{(l, q), t} = k)Y_{(l, q), t}^2}{\mathbb{P}(H_{(l, q), t} = k)}
    + \sum_{l = 1}^{n}\sum_{q = 1}^{r}(2^{2r} - 1)\frac{\mathbf{1}(H_{(l, q), t} = k^{'})Y_{(l, q), t}^2}{\mathbb{P}(H_{(l, q), t} = k^{'})} \right.\\
    + \sum_{l = 1}^n\sum_{q = 1}^{r}\left(\frac{\mathbf{1}(H_{(l, q), t} = k)Y_{(l, q), t}^2}{\mathbb{P}(H_{(l, q), t} = k)} + \frac{\mathbf{1}(H_{(l, q), t} = k^{'})Y_{(l, q), t}^2}{\mathbb{P}(H_{(l, q), t} = k^{'})}\right)\\
    + \sum_{l = 1}^{n}\sum_{q_1 = 1}^{r}\sum_{q_2 \neq q_1}\left((2^{2r} - 1)\frac{\mathbf{1}(H_{(l, q_1), t} = k, H_{(l, q_2), t)} = k)Y_{(l, q_1), t}Y_{(l, q_2), t}}{\mathbb{P}(H_{(l, q_1), t} = k, H_{(l, q_2), t)} = k)} + \right. \\
    + \left.(2^{2r} - 1)\frac{\mathbf{1}(H_{(l, q_1), t} = k^{'}, H_{(l, q_2), t)} = k^{'})Y_{(l, q_1), t}Y_{(l, q_2), t}}{\mathbb{P}(H_{(l, q_1), t} = k^{'}, H_{(l, q_2), t)} = k^{'})}\right)\\
    \left. + \sum_{l = 1}^{n}\sum_{q_1 = 1}^{r}\sum_{q_2 \neq q_1}\frac{\mathbf{1}(H_{(l, q_1), t} = k)Y_{(l, q_1), t}^2}{\mathbb{P}(H_{(l, q_1), t} = k)} + \sum_{l = 1}^{n}\sum_{q_1 = 1}^{r}\sum_{q_2 \neq q_1}\frac{\mathbf{1}(H_{(l, q_2), t} = k^{'})Y_{(l, q_2), t}^2}{\mathbb{P}(H_{(l, q_2), t} = k^{'})}\right] \label{est: mixedinter_var}
\end{multline}
and
\begin{multline}
    \widehat{\text{Cov}}(X_{n, t}, X_{n, t+1}) = \frac{1}{nrT}\sum_{l = 1}^n\sum_{q_1 = 1}^r\sum_{q_2 = 1}^r
    \left((2^r - 1)\frac{\mathbf{1}(H_{(l, q_1), t} = k, H_{(l, q_2), t+1} = k)Y_{(l, q_1), t}Y_{(l, q_2), t+1}}{\mathbb{P}(H_{(l, q_1), t} = k, H_{(l, q_2), t+1} = k)}\right.\\
    \left.+ (2^r - 1)\frac{\mathbf{1}(H_{(l, q_1), t} = k^{'}, H_{(l, q_2), t+1} = k^{'})Y_{(l, q_1), t}Y_{(l, q_2), t+1}}{\mathbb{P}(H_{(l, q_1), t} = k^{'}, H_{(l, q_2), t+1} = k^{'})} \right.\\
    + \left.\frac{\mathbf{1}(H_{(l, q_1), t} = k^{'})Y_{(l, q_1), t}^2}{\mathbb{P}(H_{(l, q_1), t} = k^{'})} + \frac{\mathbf{1}(H_{(l, q_2), t+1} = k)Y_{(l, q_2), t+1}^2}{\mathbb{P}(H_{(l, q_2), t+1} = k)} \right. \\
    \left. + \frac{\mathbf{1}(H_{(l, q_1), t} = k)Y_{(l, q_1), t}^2}{\mathbb{P}(H_{(l, q_1), t} = k)} + \frac{\mathbf{1}(H_{(l, q_2), t+1} = k^{'})Y_{(l, q_2), t+1}^2}{\mathbb{P}(H_{(l, q_2), t+1} = k^{'})}\right) \label{est: mixedinter_cov}
\end{multline}
\end{proposition}
The expression of the variance is immediate from the setup, \eqref{varxnt} and \eqref{covnt}. The estimator is obtained by replacing the non-identifiable terms with an upper bound and estimating the upper bound accordingly.
% \begin{remark}

The difficulty in doing inference under a more general setting comes from the fact that it is hard to give the explicit expression of the variance. Since there is dependence across time, the variance of $\hat{\bar{\tau}}^{k, k^{'}}$ also involves covariance between Horvitz-Thompson estimators across different times. Hence, in this case, we need at least more assumptions on the assignment mechanism in order to express the variance term explicitly.

\section{Simulations}
\label{sec: simulations}

We now use a simulation to explore some of our theoretical results. 
Section~\ref{section:simul-clt} explores some of the finite sample properties of our central limit theorems in different realistic settings. 
Section~\ref{section:simul-stab} explores empirically some properties of the convex combination estimator proposed in Section~\ref{subsec: epsilonstab}: in particular, we show that confidence intervals based on normal approximations behave well in our simulation is a reasonable candidate for practical use.
% In this section, we use simulations to illustrate our theoretical results. Specifically, we show three things. First, we show that our central limit theorems in cross-sectional setting work reasonably well for finite samples with moderate size. Second, we show that the central limit theorem works well for ATEC with a more dependent structure among exposure values. Finally, we demonstrate how our convex combination type estimator in Section \ref{subsec: epsilonstab} improves over the naive Horvitz-Thompson estimator when estimating TEC.

\subsection{Simulations for central limit theorems}
\label{section:simul-clt}

We first explore the finite sample behavior of our central limit 
theorems. To make our simulations relevant, we consider a version 
of the popular stratified interference setting (\cite{duflo2003}; 
\cite{basse2018}), in which individuals 
are nested in groups of varying sizes, and interference may occur 
within but not across groups. Specifically, we consider the exposure 
mapping $f_{i,t}(w_{1:n,t}) = (w_{i,t}, u_{i,t})$, where at time $t$ $u_{i,t} = 1$ if unit $i$ 
has at least one treated neighbor and $u_{i,t} = 0$ otherwise, so each 
unit may receive one of four exposures: (0,0), (0,1), (1,0) and (1,1). 
Throughout, we consider a two-stage design whereby each group is assigned 
independently with probability $\frac{1}{2}$ to a high-exposure or 
low-exposure arm, and then each unit is assigned to treatment independently 
with probability $0.9$ in high-exposure groups, and $0.1$ in low-exposure groups.

We focus on the central limit theorems for ATEC. Theorem~\ref{thm: atecclt} establishes asymptotic results for ATEC under less constraining assumptions on the interference mechanism than for TEC (Theorem~\ref{thm: tec_clt_pure_spat_inter} in Appendix~\ref{sec: cross-sec}). To illustrate this point, we consider the stratified interference setting. We assume that the size of each group is bounded by $n^{1/3}$. In this case, $d_n = n^{1/3}$ and hence $T = \sqrt{n}$ suffices for Theorem \ref{thm: atecclt}. Compared to $d_n = o(n^{1/4})$ in the cross-sectional setting, we are able to have larger group size. We consider the exposure mapping $f_{i,t}(w_{1:n}) = (w_{i,t}, u_{i,t})$ where $u_{i,t} = 0$ if less than 25\% of the neighbors are treated; $u_{i,t} = 1$ if between 25\% and 50\% of the neighbors are treated; $u_{i,t} = 2$ if between 50\% and 75\% of the neighbors are treated and $u_{i,t} = 3$ if more than 75\% of the neighbors are treated. We generate the potential outcomes for unit $i$ at time step $t$ according to $\mathcal{N}(3w_{i,t} + 2u_{i,t} + 5 + \epsilon_{i,t}, 1)$, where $\epsilon_{i,t}$ is $\text{uniform}\{-1, 1\}$.
\begin{figure}
    \centering
    \begin{minipage}{0.45\textwidth}
        \centering
        \includegraphics[width=0.9\textwidth]{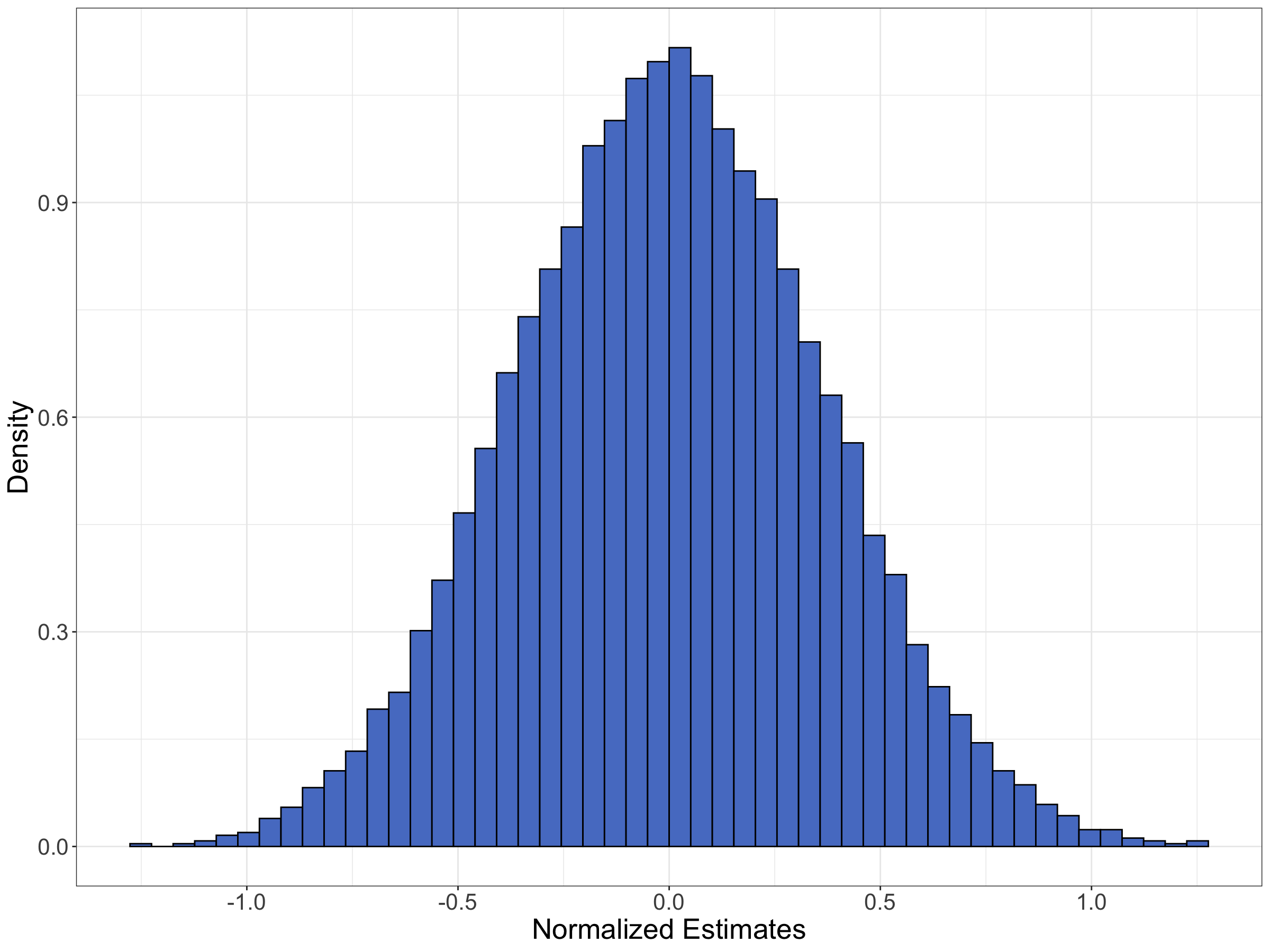} % first figure itself
        \caption{Histogram, $n = 1000$}
        \label{fig: atec_hist}
    \end{minipage}\hfill
    \begin{minipage}{0.45\textwidth}
        \centering
        \includegraphics[width=0.9\textwidth]{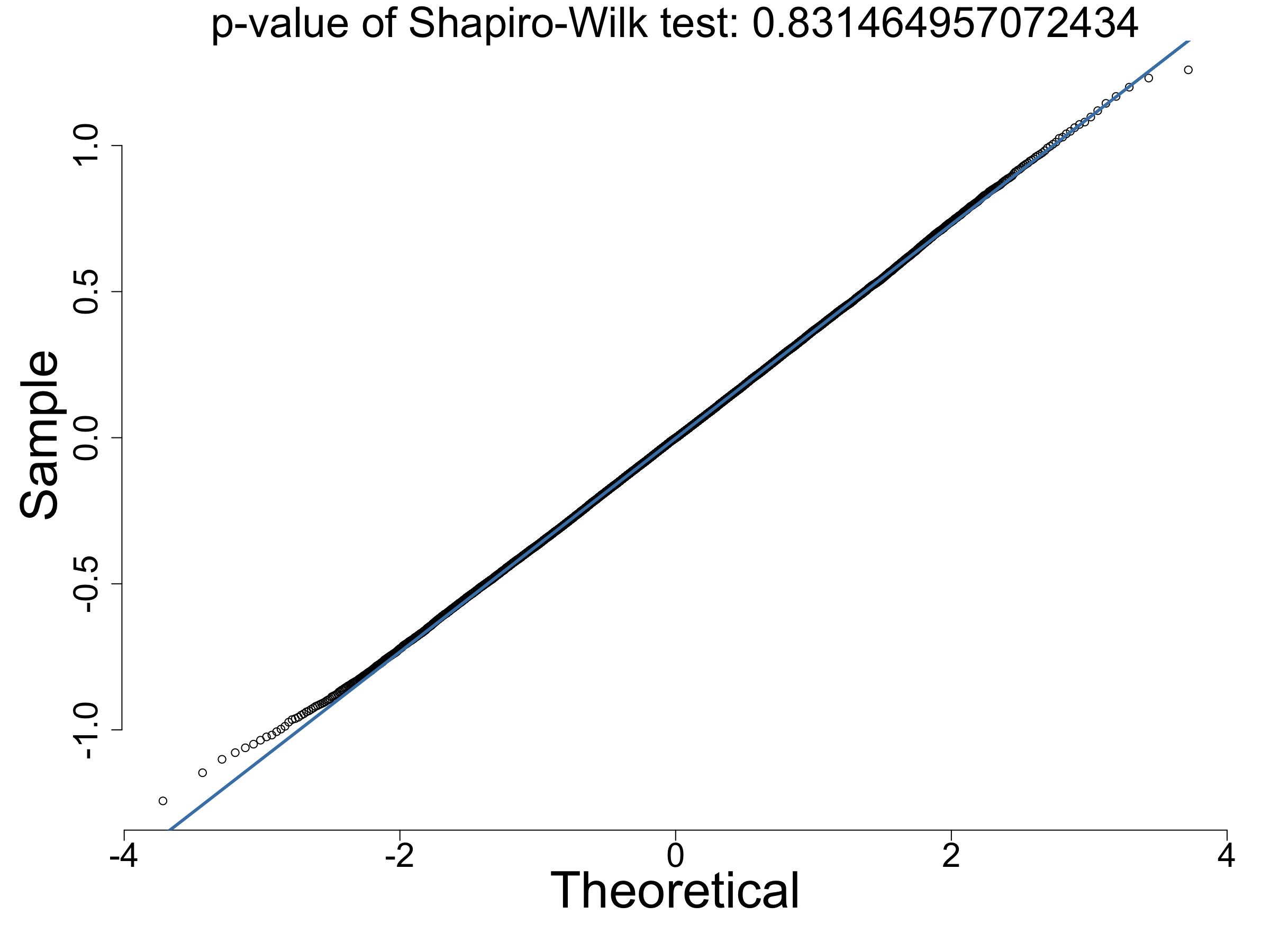} % second figure itself
        \caption{Q-Q normal plot, $n = 1000$}
        \label{fig: atec_qq}
    \end{minipage}
\end{figure}
Figure \ref{fig: atec_hist} and \ref{fig: atec_qq} show that $n = 1000$ suffices for a good approximation. Moreover, the coverage of our 95\% confidence interval is 95.4\%. %The take-away message is that ATEC is not only of its own interest but also allows us to have more general interference structure and assignment mechanism.

\subsection{Estimation under the stability assumption}
\label{section:simul-stab}
In Section \ref{subsec: epsilonstab}, we showed that with an appropriate choice of weights, the family of convex combination estimators outperforms the Horvitz-Thompson estimator. We illustrate this with a simulation study and show that our proposed confidence intervals perform well in our simulated setting.

%We now use a set of simulations to demonstrate the relative superiority of convex combination type estimators. Throughout this section, we work with network interference. 

\subsubsection{Estimation under stability assumption for total effects}
\label{subsubsec:simul_stab_est}

We consider a social network generated according to an Erd\H{o}s-R\'{e}nyi model, in which the units are assigned to treatment or control following a Bernoulli(1/2) design at each time step. We assume a local, pure population form of interference, summarized by the following exposure mappings:
\begin{equation}
f_{i}(w_{1:n, t}) = \left(w_{i, t}, \frac{1}{|\mathcal{N}_i|}\sum_{j \in \mathcal{N}_i}w_{j, t}\right) \label{exposure-model-frac}
\end{equation}
%
% We assume a social network among $n$ units and at each time step, we assign these units to treatment group or control group randomly. We generate the social network according to Erd\H{o}s-R\'{e}nyi Model. As in Section \ref{subsec: epsilonstab}, we assume pure population interference and we are interested in the total effect at time $t$:
% $$
% \tau_t^{TE} = \frac{1}{N}\sum_{i = 1}^{N}Y_{i, t}(1_{1:N, 1:t}) - \frac{1}{N}\sum_{i = 1}^{N}Y_{i, t}(0_{1:N, 1:t}).
% $$
% Further, we assume the Bernoulli design with probability $\frac{1}{2}$ and the following set of exposure mappings:
% $$
% f_{i}(w_{1:N, t}) = (w_{i, t}, \frac{1}{|\mathcal{N}_i|}\sum_{j \in \mathcal{N}_i}w_{j, t}),
% $$
where $\mathcal{N}_i$ is the neighborhood of the $i$-th unit; that is, we assume that only direct neighbors affect one's potential outcomes. For each unit $i$, we generate 
the potential outcomes at $t = 1$ randomly from $\mathcal{N}(10, 1)$. Then, for each time $t > 1$, we generate the potential outcome $Y_{i, t}(k)$ uniformly from the interval $(Y_{i,t-1}(k) - \epsilon, Y_{i, t-1}(k) + \epsilon)$, so $\epsilon$-stability holds. 
Throughout our simulations, we assume that $T = 20$ and we are interested in the total effect at time step $t = 20$. We compare the performance of the standard Horvitz-Thompson 
estimator and the performance of the convex combination estimator for estimating the 
total effect $\tau_T^{TE}$ at time $t=T=20$, varying both the population size $n$ and 
the number of time steps $k$ used in the convex combination. We estimate $\epsilon$ using  Algorithm~\ref{alg: epsilon} described in Section \ref{subsec: epsilonstab}; we use 
Proposition~\ref{prop: msereductionwithindtreat} to estimate $\alpha$ when $k = 2$, and 
solve the optimization problem introduced in Appendix \ref{suppl_sec_b} for $k \geq 3$.
% Since we are working with the Bernoulli design, 
% we know that $\text{Cov}(\hat{\tau}_t^{TE}, \hat{\tau}_{t - 1}^{TE}) = 0$. Hence, in our simulations, we use Proposition \ref{prop: msereductionwithindtreat} to estimate $\alpha$ when $k = 2$ and for larger $k$, we solve the optimization problem we introduced in Section \ref{subsubsec: k = k}.
\begin{table} 
\centering
\scalebox{0.8} {
\begin{tabular}{c l c c c c c c c c} 
\toprule 
\multicolumn{2}{c}{Sample Size} & n = 50 & n = 100 & n = 250 & n = 500 & n = 750 & n = 1000\\
\midrule
    \multirow{1}{*}{RMSE for $\hat{\tau}_{20}^{TE}$} 
            & & 64.68 & 28.98 & 5.80 & 1.84 & 0.95 & 0.70\\
            \midrule
    \multirow{1}{*}{RMSE for $\hat{\tau}^c_{20}$, $k$ = 2} 
            & & 14.17 & 9.18 & 3.72 & 1.42 & 0.68 & 0.52\\
            \midrule
    \multirow{1}{*}{RMSE for $\hat{\tau}^c_{20}$, $k$ = 5} 
            & & 4.39 & 4.58 & 3.01 & 1.17 & 0.58 & 0.45\\
\bottomrule 
\end{tabular}}
\caption{Root mean squared errors (RMSE) for $\hat{\tau}_{20}^{TE}$, $\hat{\tau}^c_{20}$ with $k = 2$ and $\hat{\tau}^c_{20}$ with $k = 5$} 
\label{table:cvxest_diffn} % A label for referencing this table elsewhere, references are used in text as \ref{label}
\end{table}

We first fix $\epsilon$ to be 3 and vary the sample size. To make each unit have the same expected number of neighbors, we scale the probability $p$ in Erd\H{o}s-R\'{e}nyi model accordingly. For each $n$, we fix the graph and generate 100 realizations of assignments. Table \ref{table:cvxest_diffn} shows the root mean squared errors for three kinds of estimators for the total effect: the usual Horvitz-Thompson estimator, the convex combination type estimator with $k = 2$, and the convex combination estimator with $k = 5$. We see that the convex combination type estimators effectively reduce the mean squared error. Moreover, when $n$ is relatively small, the reduction in mean squared error is significant.

\subsubsection{Coverage of two approximate confidence intervals}
Recall that in Section~\ref{subsec: epsilonstab}, we gave two approximate confidence intervals of $\tau_t^{TE}$ based on our convex combination estimator $\hat{\tau}_t^c$ and variance estimator. We now provide coverage results of these two approximate confidence intervals. We assume a social network generated from the Erd\H{o}s-R\'{e}nyi Model with $n = 100$ and $p = 0.05$. We fix the stability parameter $\epsilon$ to be 3 and generate the data in the same way as in the previous section. To calculate the coverage, we generate 1000 realizations of the assignments and construct approximate confidence intervals accordingly.
\begin{table} 
\centering
\scalebox{0.75} {
\begin{tabular}{c l c c c c c} 
\toprule 
\multicolumn{2}{c}{Confidence Interval} & Network 1 & Network 2 & Network 3\\
\midrule
    \multirow{1}{*}{Gaussian CI with variance estimated by $\widehat{\text{Var}}^d$} 
            & & 92.9\% & 98.4\% & 95.9\%\\
            \midrule
    \multirow{1}{*}{Gaussian CI with variance estimated by $\widehat{\text{Var}}^u$} 
            & & 97.2\% & 99.8\% & 100\%\\
            \midrule
    \multirow{1}{*}{Chebyshev CI with variance estimated by $\widehat{\text{Var}}^d$} 
            & & 91.4\% & 94.1\% & 96.4\%\\
            \midrule
    \multirow{1}{*}{Chebyshev CI with variance estimated by $\widehat{\text{Var}}^u$} 
            & & 94.6\% & 95.6\% & 97.7\%\\
\bottomrule 
\end{tabular}}
\caption{Coverage of two approximate confidence intervals for $\tau_t^{TE}$ with $k = 2$} 
\label{table:cvx_est_ci} % A label for referencing this table elsewhere, references are used in text as \ref{label}
\end{table}

Table~\ref{table:cvx_est_ci} shows the two approximate confidence intervals provide reasonable coverage across the three different social networks. Although the Gaussian confidence interval ignores the bias of $\hat \tau_{t}^{c}$, it tends to provide better coverage than the confidence intervals obtained from the Chebyshev approach. Moreover, the Gaussian intervals tend to be shorter, making them practically more useful. Appendix \ref{appC} provides an additional table showing the average lengths of the confidence intervals in Table~\ref{table:cvx_est_ci}.

\section{Two real data examples}
\label{sec: real-data-ex}
We now apply our methods to two empirical applications. In the first application, we use the convex combination estimator to analyze a panel experiment and show it reduces the variance and leads to more reliable estimates of the temporal exposure contrast. In the second application, we run a semi-synthetic experiment on a social network to demonstrate the necessity of our assumptions for the validity of the analysis and provide further empirical evidence of the advantage of the convex combination estimator.
\subsection{Rational cooperation}

The panel experiment we analyze is from \cite{RePEc:eee:jetheo:v:127:y:2006:i:1:p:117-154}. The authors test a game-theoretic model of ``rational cooperation" through a panel experiment. Specifically, in each experiment session, they recruited 22 subjects to play 20 twice-played prisoners' dilemmas, ensuring that no player would meet the same partner twice. The twice-played prisoners' dilemma consists of two periods with different pay-off structures, as shown in Table~\ref{tab: andreoni-samuelson stage game}. The parameters $x_1, x_2$ satisfy $x_1, x_2 \geq 0$, $x_1 + x_2 = 10$.

\begin{table}[!htbp]
    \centering
    \begin{tabular}{c c c}
         \arrvline{} & $C$ & $D$ \\
         \hline \hline
        $C$  \arrvline{} & $(3x_1, 3x_1)$ & $(0, 4x_1)$  \\
        $D$  \arrvline{} & $(4x_1, 0)$  & $(x_1, x_1)$ \\ \\
        \multicolumn{3}{c}{Period one}
        \end{tabular}\quad
        \begin{tabular}{c c c}
        \arrvline{} & $C$ & $D$ \\
        \hline \hline
        $C$  \arrvline{} & $(3x_2, 3x_2)$  & $(0, 4x_2)$  \\
        $D$  \arrvline{} & $(4x_2, 0)$  & $(x_2, x_2)$ \\ \\
        \multicolumn{3}{c}{Period two}
        \end{tabular}  
    \caption{Payoff structure in the experiment conducted by \cite{RePEc:eee:jetheo:v:127:y:2006:i:1:p:117-154}. The choice $C$ denotes ``cooperate'' and the choice $D$ ``defect.''}
    \label{tab: andreoni-samuelson stage game}
\end{table}

Let $\lambda = \frac{x_1}{x_1 + x_2}$, then for each round of the experiment, 22 subjects were grouped into 11 pairs, and each pair was randomly assigned with a $\lambda \sim \text{Unif}\{0, 0.1, \cdots, 0.9, 1\}$. The outcomes were the total payoffs. Since there are five sessions in total, we have 110 subjects and 2200 outcomes. We use this panel experiment to illustrate that the convex combination estimator effectively reduces the estimates' variance and thus produces more reliable estimates. To this end, following \cite{bojinov2020panel}, we define treatment to be $\lambda > 0.6$ and control to be $\lambda \leq 0.6$. This results in a panel experiment with binary treatments and Bernoulli design with treated probability $\frac{5}{11}$. Under this setup, we generally expect a positive treatment affect as the payoffs are more concentrated in period two. % Moreover, treatment means that $x_1$ is larger relative to $x_2$, and the payoffs are more concentrated in period two than in period one. The authors' model will make the players more often in period one and end up with higher total payoffs; in other words, we expect a positive treatment effect. 

We next build a social network among all subjects in the experiment. If the players have played each other in the first few rounds, then they should have some influence on each other for the later rounds. Hence, we consider any players that played each other in the first five rounds of the game as being connected. We then use the remaining 15 rounds as our experimental data. So, for our panel experiment, we have $n = 110$ and $T = 15$. As \cite{bojinov2020panel} showed little evidence of carryover effects, we assume there is only population interference. Then, we use the exposure model in \eqref{exposure-model-frac} and the temporal exposure contrast we are interested in is the exposure contrast between $(0, \leq 0.2)$ and $(1, \geq 0.8)$ for each time step. We now report the Horvitz-Thompson estimates of the temporal exposure contrast for the last 10 time steps, the estimates from the 2-step, and the 5-steps convex combination estimator estimates. Table~\ref{table:results_andreoni} shows the results.

\begin{table} 
\centering
\scalebox{0.7} {
\begin{tabular}{c l c c c c c c c c c c c} 
\toprule 
\multicolumn{2}{c}{Time Step} & $T = 6$ & $T = 7$ & $T = 8$ & $T = 9$ & $T = 10$ & $T = 11$ & $T = 12$ & $T = 13$ & $T = 14$ & $T = 15$ & Variance\\
\midrule
    \multirow{1}{*}{Horvitz-Thompson} 
            & & -8.08 &  11.03 &  29.39 &  -3.53 &  57.48 &  -3.76 &  10.29 &  23.40 & -13.70 &  16.19 & 452.80\\
\midrule
    \multirow{1}{*}{2-step} 
            & & -6.84 &   9.48 &  25.40 &  -2.99 &  33.81 &  -2.98 &   9.83 &  22.10 & -11.20 &  14.64 & 226.13\\
\midrule
    \multirow{1}{*}{5-steps} 
            & & -6.83 &   9.46 &  25.38 &  -2.98 &  33.74 &  -2.97 &   9.82 & 22.09 & -11.17 &  14.63 & 225.37\\
\bottomrule 
\end{tabular}}
\caption{Estimates for temporal exposure contrasts from the panel experiment in \cite{RePEc:eee:jetheo:v:127:y:2006:i:1:p:117-154}} 
\label{table:results_andreoni} % A label for referencing this table elsewhere, references are used in text as \ref{label}
\end{table}

In general, we do not expect the temporal exposure contrasts to be different for different time steps since all the 15 rounds of games were done together in one session. And as we can see from the table, the convex combination estimator leads to estimates with much smaller variance. Note that the estimates from 2-step and 5-steps convex combination estimators are similar, illustrating that the choice of $k$ is not crucial since the estimator itself takes care of it. Moreover, as we pointed out earlier, we would expect positive exposure contrast and the estimates from convex combination estimator are more reliable in the sense that it shrinks the estimates towards zero when the Horvitz-Thompson estimator gives a negative value (this is possible since we only have $n = 110$ subjects which is a small number).

\subsection{Facebook network semi-synthetic experiment}
We now describe a semi-synthetic experiment using the Swarthmore College social network from the Facebook 100 dataset \citep{TRAUD20124165}. All networks in this dataset are complete online friendship networks for one hundred colleges and universities collected from a single-day snapshot of Facebook in September 2005. The network we use is of size 1657 with 61049 edges. We use this network as the graph that describes population interference among units and generate an assignment vector using a Bernoulli design with a success probability of 1/2. We first show mean squared error reduction of using convex combination estimator to estimate temporal exposure contrast between $(0, 0)$ and $(1, 1)$ at $T = 20$. Let $\rho_{i, t} = \frac{1}{|\mathcal{N}_i|}\sum_{j \in \mathcal{N}_i}w_{j, t}$, we assume the following exposure mappings: 
\begin{equation*}
    f_i(w_{1:n, t}) = (w_{i, t}, \tilde{\rho}_{i, t}), \quad \text{where } \tilde{\rho}_{i, t} = \begin{cases}
    0 & \text{ if } \rho_{i, t} \leq 0.3, \\
    \rho_{i, t} & \text{ if } 0.3 \leq \rho_{i, t} \leq 0.7, \\
    1 & \text { if } \rho_{i, t} > 0.7.
    \end{cases}
\end{equation*}
Now we make a panel experiment with $T = 20$. We generate the outcomes at each time step according to a linear model that is linear in $w_{i, t}$ and $\tilde{\rho}_{i, t}$ and add a time-varying component $\epsilon_t$ that is uniformly distributed on $[-0.5, 0.5]$. Table~\ref{table:cvxest_semi-synthetic} shows the empirical bias, the variance, and the root mean squared errors (RMSE) of the estimates of the temporal exposure contrast at time step $T = 20$ by using Horvitz-Thompson estimator, 2-step convex combination estimator, and 5-step convex combination estimator. As expected, the convex combination estimator reduces the RMSE significantly. Though the biases seem large compared to the Horvitz-Thompson estimator, as we mentioned previously, we can also control the amount of bias we tolerate, which implicitly accounts for the time effect.

\begin{table} 
\centering
\scalebox{1.0} {
\begin{tabular}{c c c c} 
\toprule 
\multicolumn{1}{c}{Estimator of TEC} & Horvitz-Thompson & 2-step CVX & 5-step CVX\\
\midrule
    \multirow{1}{*}{RMSE} 
            & 50.48 & 3.57 & 3.31\\
    \midrule
    \multirow{1}{*}{Empirical bias} 
            & 0.26 & 2.75 & 2.80\\
    \midrule
    \multirow{1}{*}{Empirical variance} 
            & 2548.495 & 5.19 & 3.09\\
\bottomrule 
\end{tabular}}
\caption{RMSE for different estimators of temporal exposure contrast at $T = 20$} 
\label{table:cvxest_semi-synthetic} % A label for referencing this table elsewhere, references are used in text as \ref{label}
\end{table}

The maximal degree for the network is 577, which is far greater than the $\sqrt{n}$. To make Theorem~\ref{thm: atecclt} hold approximately for this network, we require having an extremely large $T$. Below we illustrate this empirically through a semi-synthetic experiment. Let 
\begin{equation*}
    f_i(w_{1:n, t}) = (w_{i, t}, \tilde{\rho}_{i, t}), \quad \text{where } \tilde{\rho}_{i, t} = \begin{cases}
    0 & \text{ if } \rho_{i, t} \leq 0.35, \\
    1 & \text{ if } 0.35 < \rho_{i, t} \leq 0.5, \\
    2 & \text{ if } 0.5 < \rho_{i, t} \leq 0.65, \\
    3 & \text { if } \rho_{i, t} > 0.65.
    \end{cases}
\end{equation*}
We are interested in the average temporal exposure contrast between $(1, 3)$ and $(0, 0)$. Since the network is dense, with an average degree of 73.69, we expect the Horvitz-Thompson estimator to be inaccurate since units with exposure values $(1, 3)$ or $(0, 0)$ will unlikely to be those units with many neighbors. Figure~\ref{fig: semi-synthetic1} and \ref{fig: semi-synthetic2} show the histograms of Horvitz-Thompson estimates for $T = 20$ and $T = 100$ respectively. Here, we calculate ATEC for 10,000 realizations, and since the computation of the variance estimate is time-consuming, we do not rescale the estimates.

\begin{figure}[t]
    \centering
    \begin{minipage}{0.5\textwidth}
        \centering
        \includegraphics[width=\textwidth]{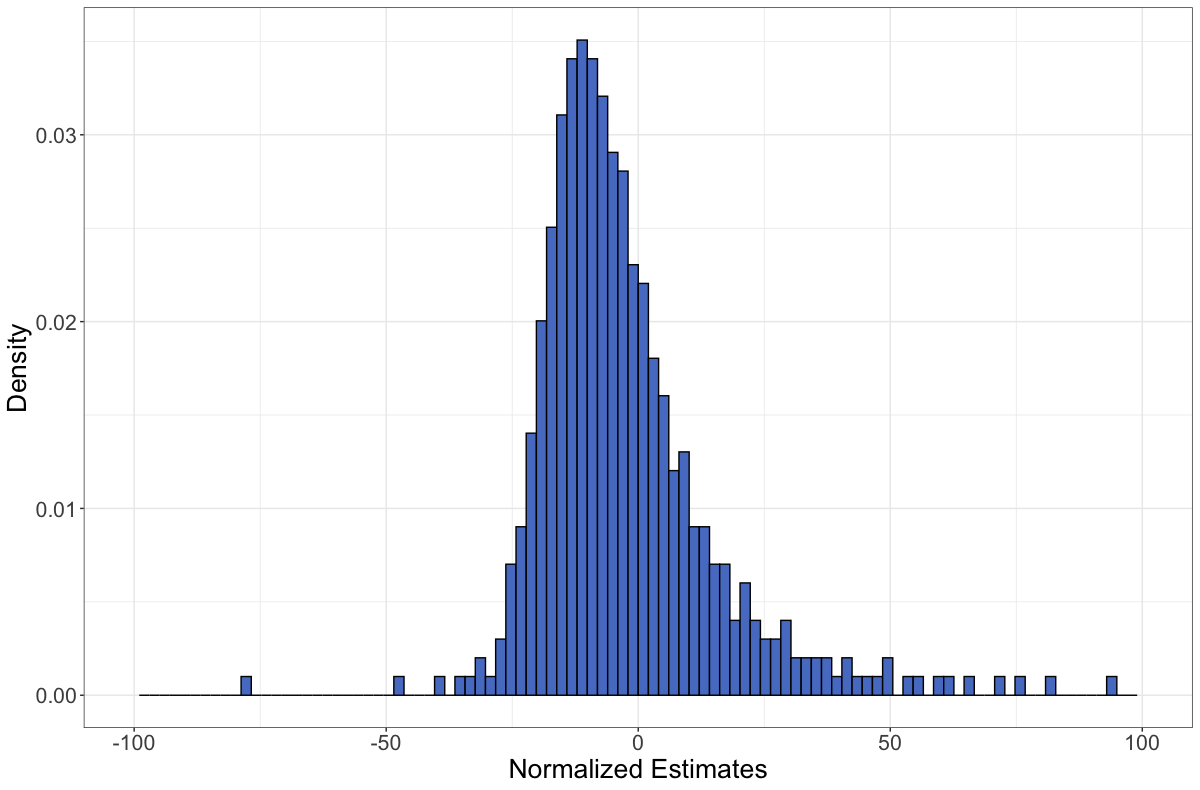} % first figure itself
        \caption{Histogram, $T = 20$}
        \label{fig: semi-synthetic1}
    \end{minipage}\hfill
    \begin{minipage}{0.5\textwidth}
        \centering
        \includegraphics[width=\textwidth]{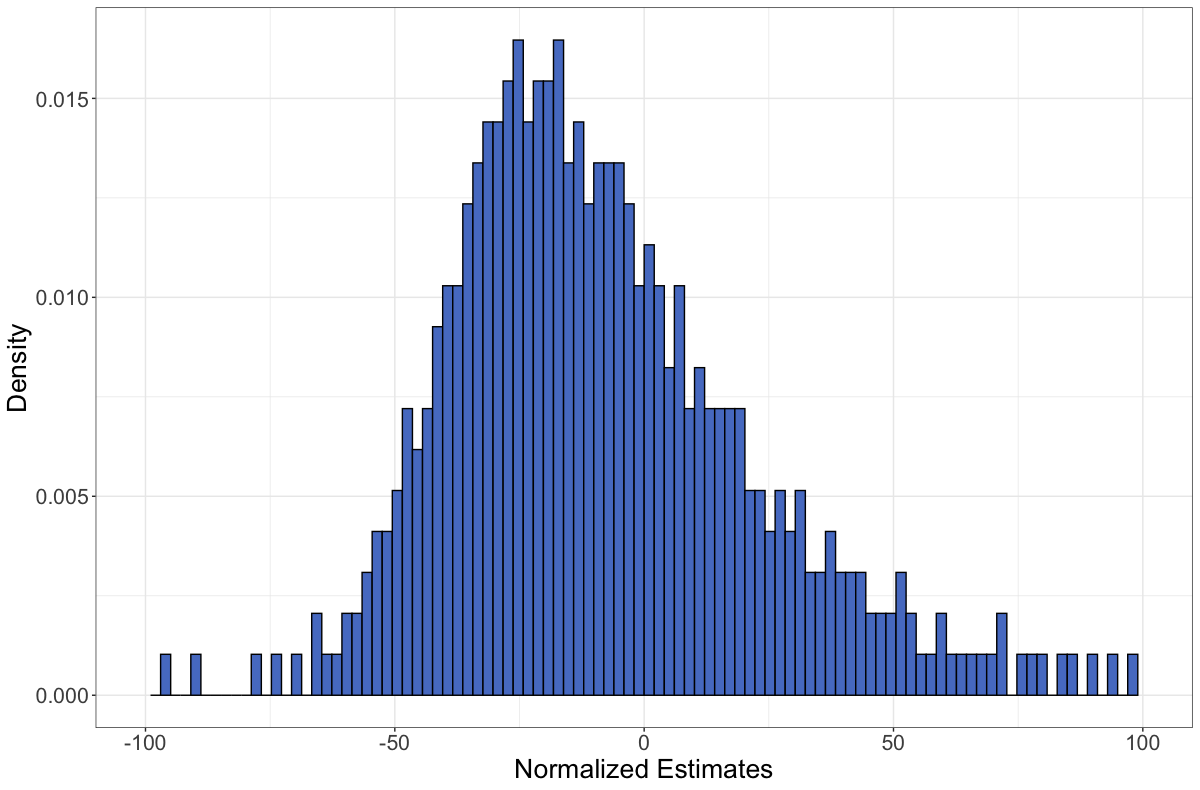} % second figure itself
        \caption{Histogram, $T = 100$}
        \label{fig: semi-synthetic2}
    \end{minipage}
\end{figure}

Figure \ref{fig: semi-synthetic1} shows that when $T = 20$ the histogram is far from normally distributed. Figure \ref{fig: semi-synthetic2} shows that when $T = 100$, although the data are much closer to being normally distributed, they still are not. Also, note that the centers of these two histograms are away from 0; as we stated above, since some of the neighborhoods are extremely large, we cannot observe the exposure value we would need for units with large neighborhoods. This illustrates the necessity of condition \eqref{atecclt: degcond} --- reliable inference requires more experiments if we have a dense network. We also report the coverage using a Gaussian confidence interval here. For both $T = 20$ and $T = 100$, the empirical coverage of naive Gaussian confidence interval is around 80\%.

% Compared to the simulations we did in Section~\ref{sec: simulations}, we are now using real networks here. This is why we call our experiments semi-synthetic experiments since we use real networks but we generate the outcomes ourselves.

\section{Conclusion}
\label{sec: conclusion}

% In this paper, we have developed estimation and inference results for panel experiments with population interference. We first consider 
% In the standard setting with pure population interference, we prove a central limit theorem under weaker conditions than previous results in the existing literature and highlight the trade-off between flexibility in the design and the interference structure. When population interference and carryover effects co-exist, we propose a novel central limit theorem. Finally, we introduce a new type of assumptions ---stability assumptions --- as an alternative to (or complement of) exposure mappings for controlling interference in temporal settings.

In this paper, we developed estimation and inference results for panel experiments with population interference. Our work shows that the added temporal dimension in experiments with population interference may either help or hurt our ability to do inference. In the absence of carryover effects, causal effects averaged across time can be estimated more easily than at single time points. Here we also introduced a variance reduction technique that incorporated a novel stability assumption and showed that if the potential outcomes are well-behaved, a simple convex combination type estimator can effectively reduce the mean squared error. In the presence of carryover effects, we can still obtain novel central limit theorems; however, these require additional assumptions and are not as flexible. Finally, our results also explicitly capture the relationship between the growth rate of the dependency graph of the exposure values, allowing us to consider the trade-off between the assignment mechanism and the interference.  

Many interesting avenues of investigation around interference in panel experiments have been left unexplored in this manuscript and will be the object of future work. First, our results mostly consider the temporally independent assignment mechanism: this is, of course, limiting as it removes all dynamic designs, but it does present a useful benchmark. Second, while our simulations show that our convex combination estimators seem to behave well, our formal results under this new stability assumption are still limited. Third, we do not explicit discussions hypothesis testing. Although our results do provide a way to test specific hypotheses by inverting the confidence intervals, there has been literature that discusses how to conduct a Fisher randomization test for the sharp null hypothesis in panel experiments \citep{bojinov2020panel}. 
% For recent works that analyze experiments 
% in the presence of interference, see \cite{basse2018} and \cite{10.1093/biomet/asy036}.

% As discussed in Section \ref{sec: panel}, when time only enters the analysis independently with the assignment mechanism and the interference, by considering the average temporal exposure contrast, we are able to relax the growth rate assumption on $d_n$. Such averaging estimator is also well-studied in \cite{bojinov19} and \cite{bojinov2020panel} for purely temporal interference setting. We pointed out that when time does interact with the interference, our results are still restrictive in the sense that we only consider a specific type of assignment mechanism and interference. Extending the results to more general settings remains to be conducted. 

% Finally, we used extensive simulations to show that our theorems and results actually work for many well-studied settings in finite-population regime. 

\bibliographystyle{agsm}

\bibliography{submission}

\newpage

\appendix
\numberwithin{equation}{section}

% \bigskip
% \begin{center}
% {\large\bf SUPPLEMENTARY MATERIAL}
% \end{center}

% \begin{description}

% \item[Title:] Proofs and further discussions

% \end{description}

%%%%%%%%%%%%%
% \section{Cross-Sectional Setting}
\section{Standard population interference}
\label{sec: cross-sec}
%%%%%%%%%%%%%

This appendix focuses on estimating TEC under population interference and assumes that either the experiment was conducted over a single time point or that there are no carryover effects. In both cases, we drop the subscript $t$ for the remainder of the section. Our setup is now equivalent to the one studied in \cite{liuhudgens2014}, \cite{aronow2017}, \cite{chin2018central} and \cite{leung2022}. Our Horvitz-Thompson type estimator $\hat{\tau}^{k, k^{'}}$ now simplifies to,
\begin{align}
    \hat{\tau}^{k, k^{'}} &= \frac{1}{n}\sum_{i = 1}^{n}\left[\frac{\mathbf{1}(H_i = k)}{\pi_i(k)}Y_i(k) - \frac{\mathbf{1}(H_i = k^{'})}{\pi_i(k^{'})}Y_i(k^{'})\right] \label{est: HT_at_t_beta},
\end{align}
where $\pi_i(k) = \mathbb{P}(H_i = k)$ and $\pi_i(k^{'}) = \mathbb{P}(H_i = k^{'})$.

\cite{aronow2017} showed that if the potential outcomes and inverse exposure probabilities are bounded, and the number of dependent pairs of $H_i$'s is of order $o(n^2)$, then the estimator $\hat{\tau}^{k, k^{'}}$ is consistent,
\begin{equation*}
    \left(\hat{\tau}^{k, k^{'}} - \tau^{k, k^{'}}\right) \rightarrow_\mathbb{P} 0.
\end{equation*}
In addition, the authors provided an asymptotically conservative confidence interval of $\hat{\tau}^{k, k^{'}}$ and implicitly outlined a version of a central limit theorem in the proof. However, the conditions stated in their derivations were sufficient but not necessary. Below, we establish a central limit theorem for $\hat{\tau}^{k, k^{'}}$ under weaker conditions and provide a detailed proof that builds on recent results by \cite{chin2018central}. We then illustrate the trade-offs between the strength of the interference structure assumption and the assignment mechanism's flexibility.

\subsection{A central limit theorem}

We can now state the following central limit theorem for temporal exposure contrast.
\begin{theorem}
\label{thm: tec_clt_pure_spat_inter}
Under Assumptions \ref{assump: bddpo}-\ref{assump: nondegeneratevar} and the condition that $d_n = o(n^{1/4})$, we have
\begin{equation*}
    \frac{\sqrt{n}(\hat{\tau}^{k, k^{'}} - \tau^{k, k^{'}})}{\text{Var}(\sqrt{n}\hat{\tau}^{k, k^{'}})^{1/2}} \xrightarrow{d} \mathcal{N}(0, 1)
\end{equation*}
as $n \rightarrow \infty$.
\end{theorem}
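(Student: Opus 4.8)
The plan is to recast the estimation error as a normalized sum of random variables whose dependence is governed by the graph $G_n$, and then apply a Stein's-method normal-approximation bound for sums with bounded dependency graphs, in the spirit of \cite{chin2018central}. First write $\hat{\tau}^{k,k'} - \tau^{k,k'} = \frac{1}{n}\sum_{i=1}^{n} Z_i$, where
\begin{equation*}
Z_i = \frac{\mathbf{1}(H_i = k)}{\pi_i(k)}Y_i(k) - \frac{\mathbf{1}(H_i = k')}{\pi_i(k')}Y_i(k') - \big(Y_i(k) - Y_i(k')\big).
\end{equation*}
Since $\mathbb{E}[\mathbf{1}(H_i = k)/\pi_i(k)] = 1$ and the potential outcomes are fixed, each $Z_i$ has mean zero; moreover $Z_i$ is a deterministic function of $H_i$ alone, so the dependency graph of $(Z_1,\dots,Z_n)$ is a subgraph of $G_n$ and in particular has maximal degree at most $d_n$. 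Writing $S_n = \sum_{i=1}^{n} Z_i$ and $\sigma_n^2 = \mathrm{Var}(S_n) = n\,\mathrm{Var}(\sqrt{n}\,\hat{\tau}^{k,k'})$, the quantity whose limiting law we must identify is exactly $W_n := S_n/\sigma_n$.

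Next I would verify the two ingredients the dependency-graph CLT requires: uniform boundedness of the summands and a lower bound on the variance. By Assumption~\ref{assump: bddpo} and Assumption~\ref{assump: overlap} we have $|Z_i| \le 2M(1 + \eta^{-1}) =: C$ for every $i$ and every $n$, so all moments of $Z_i$ are bounded by powers of $C$ uniformly in $n$. By Assumption~\ref{assump: nondegeneratevar} there is a constant $c > 0$ with $\mathrm{Var}(\sqrt{n}\,\hat{\tau}^{k,k'}) \ge c$ for all large $n$, i.e.\ $\sigma_n^2 \ge cn$ eventually.

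Then I would invoke the Stein's-method normal-approximation bound for a sum over a dependency graph (as used in \cite{chin2018central}, refining the classical local-dependence bounds): for $W_n = \sigma_n^{-1}\sum_i Z_i$ with the $Z_i$ mean-zero, bounded by $C$, and dependency graph of maximal degree $d_n$,
\begin{equation*}
d_{\mathcal W}\!\big(W_n,\mathcal N(0,1)\big) \;\lesssim\; \frac{d_n^{2}\, n\, C^{3}}{\sigma_n^{3}} + \frac{d_n^{3/2}\sqrt{n}\,C^{2}}{\sigma_n^{2}} .
\end{equation*}
Substituting $\sigma_n^2 \ge cn$ bounds the right-hand side by a constant (depending only on $C$ and $c$) times $d_n^{2} n^{-1/2} + d_n^{3/2} n^{-1/2}$. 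Under the hypothesis $d_n = o(n^{1/4})$ the first, dominant term vanishes, and the second vanishes a fortiori; hence $d_{\mathcal W}(W_n, \mathcal N(0,1)) \to 0$, which implies $W_n \xrightarrow{d} \mathcal N(0,1)$, as claimed.

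The main obstacle is not any single estimate but ensuring the normalization is legitimate: after dividing by $\sigma_n$ each summand has size $O(n^{-1/2})$ only because $\sigma_n^2$ grows linearly in $n$, and it is precisely Assumption~\ref{assump: nondegeneratevar} that guarantees this and rules out a pathological subsequence inflating the Stein bound. Tracking the exponents shows the binding term is $d_n^{2}/\sqrt{n}$, which is exactly why $d_n = o(n^{1/4})$ is the natural (and, compared with the sufficient conditions implicit in \cite{aronow2017}, weaker) requirement. One further point to handle carefully is that the bound is stated for a fixed configuration of potential outcomes and design; here it is applied along the hypothetical sequence of growing populations described in Section~\ref{sec: cross-sec}, with $C$, $c$, and the implied constants uniform in $n$.
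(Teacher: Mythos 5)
Your proposal is correct and follows essentially the same route as the paper's proof: both decompose $\sqrt{n}(\hat{\tau}^{k,k'}-\tau^{k,k'})$ into mean-zero, uniformly bounded summands (via Assumptions~\ref{assump: bddpo}--\ref{assump: overlap}) whose dependency graph inherits maximal degree $d_n$ from the $H_i$'s, use Assumption~\ref{assump: nondegeneratevar} to bound the variance below, and apply the Stein's-method Wasserstein bound for sums over dependency graphs (Lemma~\ref{lemma: steinbound}), with the binding term $d_n^2/\sqrt{n}$ vanishing under $d_n = o(n^{1/4})$. The only cosmetic difference is that you apply the bound to the unnormalized summands $Z_i$ rather than the $\sqrt{n}$-scaled ones, which is equivalent by scale invariance of the bound.
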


The assumption that $d_n = o(n^{1/4})$ quantifies the dependence among observations due to interference. Let $d_n$ be the maximal degree in this graph, which is equal to the maximal number of dependent exposure values for each unit; this is different from the definition of $d_n$ in the main paper, where it is defined as a limit or a maximal over $t$.

Theorem \ref{thm: tec_clt_pure_spat_inter} strengthens the result of \cite{aronow2017} in two ways. First, our Assumption \ref{assump: nondegeneratevar} weakens Condition 6 of \cite{aronow2017}, which requires the convergence of $\text{Var}(\sqrt{n}\hat{\tau}^{k, k^{'}}_{t})$. Second, we allow for a higher range of dependence ($d_n = o(n^{1/4})$ compared to $d_n = O(1)$ as in \cite{aronow2017}) among exposure values. The proof of this theorem relies on recent results in \cite{chin2018central}. 

\subsection{Design and interference structure: a trade-off}\label{subsec: design trade-off}

Intuitively, Theorem~\ref{thm: tec_clt_pure_spat_inter} asserts that asymptotic normality holds so long as the dependency relations among the $H_i$'s are moderate. However, since $H_i = f_i(W_{1:n})$ is determined by both function $f_i$ and assignment $W$, the dependence structure among the $H_i$'s --- and therefore the value of $d_n$ --- depends on both the exposure specification and the assignment mechanism.

This suggests that there exists a trade-off between the strength of the dependence in the $W_i$'s induced by the assignment mechanism and the dependence induced by the interference structure. The less restricted the interference structure is, the more restricted the assignment mechanism must be; in reverse, the more restricted the interference structure, the more flexible one can be with the design. We illustrate these insights with three special cases of Theorem~\ref{thm: tec_clt_pure_spat_inter}, applied to popular settings. We should also note that our condition on $d_n$ is not a sufficient condition for the central limit theorem. For example, if we consider $f_i(W_i) = W_i$ (i.e., there is no interference) and $W$ follows completely randomized design, then the central limit theorem still holds (see Theorem 1 in \cite{ding2017}). The discussion here mainly illustrates the entanglement between the assignment mechanism and the interference structure from a general perspective.

\begin{example}
\label{cor: cor1}
Suppose that the interference structure among $n$ units is adequately described by a social network $\mathcal{A}_n$, and 
assume that the exposure mapping is of the form $f_{i}(W_{1:n}) = f_i(W_{\mathcal{N}_i})$; that is, only the neighbors' assignments  matter.
%, there is a social network  and one's potential outcomes depend only on the treatment assignments of its neighbors. 
Let $\delta_n$ be the maximal number of neighbors a unit can have in the network 
$\mathcal{A}_n$ --- which is distinct from the dependency graph. Then if 
$\delta_n = o(n^{1/8})$ and the $W_i$'s are independent (i.e., the design is Bernoulli),
then $d_n=o(n^{1/4})$ as required by Theorem~\ref{thm: tec_clt_pure_spat_inter}.
%$d_n = o(n^{1/4})$ if $\delta_n = o(n^{1/8})$ and $W_i$'s are independent 
%(i.e. we have the Bernoulli design).
\end{example}
This first example explores one extreme end of the trade-off, in which the 
assignment mechanism is maximally restricted --- the $W_i$'s are independent --- 
which allows for a comparatively large amount of interference. 

\begin{example}
\label{cor: cor3}
We consider the graph cluster randomization approach (\cite{ugander2013}) in which case we group units into clusters and randomize at the cluster level. Following the notations in \cite{ugander2013}, we let the vertices be partitioned into $n_c$ clusters $C_1, \cdots, C_{n_c}$. The graph cluster randomization approach assigns either treatment or control to all the units in each cluster. Suppose one's potential outcomes depend only on the assignments of its neighbors. Let $\delta_n$ be the maximal number of neighbors one can have and $c_n$ be the maximal size of the cluster. Then $d_n = o(n^{1/4})$ for $\delta_n^2+ \delta_nc_n = o(n^{1/4})$.
\end{example}
\begin{example}
\label{cor: cor2}
Another commonly studied scenario is the ``household" interference (\cite{basse2018}; \cite{duflo2003}). In household interference, we assume that each unit belongs to a ``household" and their potential outcomes depend only on the assignments of the units within the ``household". Suppose we have a two-stage design such that we first assign each household into treatment group or control group independently and then we assign treatments to units in each household depending on the assignment of their associated household. Let $r_n$ be the maximal size of the ``household", then $d_n = o(n^{1/4})$ for $r_n = o(n^{1/4})$.
\end{example}

%There is a trade-off between the assignment mechanism and the interference. In fact, we can see this trade-off from the above corollaries. 
Table \ref{table:trade-off} summarizes the above three examples. In Example \ref{cor: cor1}, to have a general network interference setting with the maximum possible number of neighbors for each unit, we constrain the design to be the Bernoulli design. Further limiting the interference, like in Example \ref{cor: cor2} where the interference is restricted within households, we can have a more complex two-stage design. In the same spirit, Example~\ref{cor: cor3} shows that for a highly dependent design, we need an even stronger condition on the interference structure, indicated by a stronger rate condition on $\delta_n$. In general, a weaker assumption on the interference structure induces a more complex dependence graph for the exposures, which in turn reduces our flexibility in the choice of design. 
%Therefore, if one wants to use a more complex design where the treatments may be dependent across units, care must be taken on the interference side to make Theorem \ref{thm: tec_clt_pure_spat_inter} hold.
%%%%%%%%%%%%%
\begin{table}[H]
\centering
\scalebox{0.8} {
\begin{tabular}{c l c c c c c c c} 
\toprule 
\multicolumn{2}{c}{Interference} & Design & Conditions\\
\midrule
    \multirow{1}{*}{Network Interference} 
            & & Bernoulli Design & $\delta_n = o(n^{1/8})$\\
    \midrule
    \multirow{1}{*}{Network Interference} 
            & & Graph Cluster Randomization & $\delta_n^2 + \delta_nc_n = o(n^{1/4})$\\
    \midrule
    \multirow{1}{*}{Group Interference} 
            & & Two-stage Design & $r_n = o(n^{1/4})$\\
\bottomrule 
\end{tabular}}
\caption{Trade-off between design and interference} 
\label{table:trade-off} % A label for referencing this table elsewhere, references are used in text as \ref{label}
\end{table}

\subsection{Inference}
\label{subsec: tec_inference}

The central limit theorem stated in Theorem~\ref{thm: tec_clt_pure_spat_inter} serves as our basis for inference. 

\begin{proposition}
\label{prop: ci}
Assuming all the assumptions in Theorem \ref{thm: tec_clt_pure_spat_inter}, then for any $\delta > 0$, we have that
\begin{equation*}
    \mathbb{P}\left(\frac{\widehat{\text{Var}}(\hat{\tau}^{k, k^{'}})}{\text{Var}(\hat{\tau}^{k, k^{'}})} \geq 1 - \delta\right) \rightarrow 1.
\end{equation*}
where $\widehat{\text{Var}}(\hat{\tau}^{k, k^{'}}) = n^{-1} \widehat{\text{Var}}(\sqrt{n}\hat{\tau}^{k, k^{'}})$.
Therefore, we can construct asymptotically conservative confidence interval based on the variance estimator:
for any $\delta > 0$,
\begin{equation*}
\begin{split}
    \mathbb{P}\left(\tau^{k, k^{'}} \in \left[\hat{\tau}^{k, k^{'}} - \frac{z_{1 - \frac{\alpha}{2}}}{\sqrt{1 - \delta}}\sqrt{\widehat{\text{Var}}(\hat{\tau}^{k, k^{'}})}, \right.\right.\\
    \left.\left.\hat{\tau}^{k, k^{'}} + \frac{z_{1 - \frac{\alpha}{2}}}{\sqrt{1 - \delta}}\sqrt{\widehat{\text{Var}}(\hat{\tau}^{k, k^{'}})}\right]\right) \geq 1 - \alpha
    \end{split}
\end{equation*}
for large $n$.
\end{proposition}
$\widehat{\text{Var}}(\sqrt{n}\hat{\tau}^{k, k^{'}})$ is the same as the one given in Proposition~\ref{prop: ci_atec}.
% Here, $\widehat{\text{Var}}(\hat{\tau}^{k, k^{'}})$ is just the estimator $\widehat{\text{Var}}(\sqrt{n}\hat{\tau}^{k, k^{'}})$ divided by $n$. 
Once again, this result strengthens that of \cite{aronow2017} by both removing the 
requirement that $n\text{Var}(\hat{\tau}^{k, k^{'}})$ converge, and by relaxing the 
constraint on the interference mechanism. Note that here $\delta > 0$ is arbitrary and we present detailed simulations in Section~\ref{sec: simulations} with $\delta = 0.04$.

\newpage
\section{Proofs and additional discussions} \label{appA}

To begin with, we provide technical tools that we will use in our proofs. We first state a lemma from \cite{ross2011}:
\begin{lemma}
\label{lemma: steinbound}
Let $X_1, \cdots, X_n$ be a collection of random variables such that $\mathbb{E}\left[X_i^4\right] < \infty$ and $\mathbb{E}\left[X_i\right] = 0$. Let $\sigma^2 = $\textup{Var}$(\sum_iX_i)$ and $S = \sum_iX_i$. Let $d$ be the maximal degree of the dependency graph of $(X_1, \cdots, X_n)$. Then for constants $C_1$ and $C_2$ which do not depend on $n, d$ or $\sigma^2$,
\begin{equation}
    d_\mathcal{W}(S/\sigma) \leq C_1\frac{d^{3/2}}{\sigma^2}\left(\sum_{i = 1}^{n}\mathbb{E}\left[X_i^4\right]\right)^{1/2} + C_2\frac{d^2}{\sigma^3}\sum_{i = 1}^{n}\mathbb{E}|X_i|^3, \label{bound}
\end{equation}
where $d_\mathcal{W}(S/\sigma)$ is the Wasserstein distance between $S/\sigma$ and standard Gaussian.
\end{lemma}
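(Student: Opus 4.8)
This lemma is the quantitative normal-approximation bound of \cite{ross2011}, so the plan is to reconstruct it by Stein's method in the form adapted to a dependency graph. Recall that for $Z\sim\mathcal{N}(0,1)$ and any $1$-Lipschitz test function $h$, the Stein equation $f'(w)-wf(w)=h(w)-\mathbb{E}h(Z)$ has a solution $f=f_h$ with $\|f_h'\|_\infty\le\sqrt{2/\pi}$ and $\|f_h''\|_\infty\le 2$, and that $d_\mathcal{W}(S/\sigma)=\sup_h\big|\mathbb{E}[f_h'(W)-Wf_h(W)]\big|$ with $W=S/\sigma$, the supremum being over $1$-Lipschitz $h$. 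Hence it suffices to bound $\big|\mathbb{E}[f'(W)-Wf(W)]\big|$ uniformly over solutions $f$ satisfying those derivative bounds.

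Next I would exploit the dependency graph. For each $i$ let $N_i$ be the set of indices $j$ (including $i$) adjacent to $i$, so $|N_i|\le d+1$ and $X_i$ is independent of $\{X_j:j\notin N_i\}$, and set $W_i=\sigma^{-1}\sum_{j\in N_i}X_j$, so $W-W_i$ is independent of $X_i$. Write $\mathbb{E}[Wf(W)]=\sigma^{-1}\sum_i\mathbb{E}[X_if(W)]$ and Taylor-expand $f(W)=f(W-W_i)+W_if'(W-W_i)+\tfrac12W_i^2f''(\xi_i)$; the first term vanishes since $\mathbb{E}[X_i]=0$ and $X_i$ is independent of $W-W_i$. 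Expanding once more, $f'(W-W_i)=f'(W)-\big(f'(W)-f'(W-W_i)\big)$, the $f'(W)$ contribution carries the coefficient $\sigma^{-2}\sum_i\sum_{j\in N_i}\mathbb{E}[X_iX_j]$, which equals $\sigma^{-2}\mathrm{Var}(S)=1$ because $\mathbb{E}[X_iX_j]=0$ for $j\notin N_i$, and therefore cancels the $\mathbb{E}[f'(W)]$ term. (Pulling $\mathbb{E}[f'(W)]$ out of the sum is not literally valid; one keeps $f'(W)$ inside each summand and absorbs the discrepancy, controlled by $\|f''\|_\infty$, into the error terms below.) What remains are $E_1=\sigma^{-1}\sum_i\mathbb{E}\big[X_iW_i(f'(W)-f'(W-W_i))\big]$ and $E_2=\tfrac12\sigma^{-1}\sum_i\mathbb{E}[X_iW_i^2f''(\xi_i)]$.

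Finally I would bound $E_1$ and $E_2$ by moments, using $|f'(W)-f'(W-W_i)|\le\|f''\|_\infty|W_i|$, $\|f''\|_\infty\le 2$, Hölder's inequality in the form $\mathbb{E}|X_iX_jX_k|\le\tfrac13(\mathbb{E}|X_i|^3+\mathbb{E}|X_j|^3+\mathbb{E}|X_k|^3)$, and the counting fact that each triple sum $\sum_i\sum_{j,k\in N_i}(\cdots)$ has at most $(d+1)^2$ pairs $(j,k)$ per $i$; this produces the $d^2\sigma^{-3}\sum_i\mathbb{E}|X_i|^3$ summand. For the piece that is naturally a variance of a sum over the neighborhood $N_i$, a Cauchy--Schwarz step trades one factor $d+1$ for $\sqrt{d+1}$ and $\mathbb{E}|X_i|^3$ for $(\mathbb{E}X_i^4)^{1/2}$, giving the $d^{3/2}\sigma^{-2}(\sum_i\mathbb{E}X_i^4)^{1/2}$ summand; the numerical factors come only from $\|f'\|_\infty$, $\|f''\|_\infty$ and the Hölder/Cauchy--Schwarz constants, hence are universal $C_1,C_2$.

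The main obstacle is exactly the bookkeeping in the middle step: arranging that the leading $\mathbb{E}[f'(W)]$ term cancels cleanly against $\mathbb{E}[Wf(W)]$ while verifying that the genuinely remaining discrepancy splits into precisely a $d^{3/2}$-weighted fourth-moment piece and a $d^2$-weighted third-moment piece — rather than a single cruder bound — and that no constant secretly depends on $n$, $d$, or $\sigma$. Since the statement is quoted verbatim from \cite{ross2011}, the paper itself would simply cite it; the above is the route one would take to reprove it.
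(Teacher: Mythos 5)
This lemma is not proved in the paper at all: it is quoted verbatim from \cite{ross2011} (Theorem~3.6 there), so the paper's ``proof'' is just the citation, and your sketch is essentially a reconstruction of the argument in that reference --- Stein equation with the standard derivative bounds, the neighborhood decomposition $W_i$ independent of $X_i$, a Taylor expansion, a third-moment/$d^2$ remainder, and a Cauchy--Schwarz/variance step producing the $d^{3/2}$ fourth-moment term --- which is the right route and correct in outline. The only loose spot is the middle cancellation: in the actual proof the $f'(W)$ term is handled by bounding $\mathbb{E}\bigl[f'(W)\bigl(1-\sigma^{-2}\sum_i\sum_{j\in N_i}X_iX_j\bigr)\bigr]$ by $\|f'\|_\infty$ times $\bigl(\mathrm{Var}\bigl(\sigma^{-2}\sum_i\sum_{j\in N_i}X_iX_j\bigr)\bigr)^{1/2}$ (this is exactly where $d^{3/2}$ and $\bigl(\sum_i\mathbb{E}[X_i^4]\bigr)^{1/2}$ enter), rather than by a second expansion of $f'$ controlled by $\|f''\|_\infty$ as your $E_1$ suggests; with that bookkeeping fixed, your sketch matches the cited proof.
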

Second, we provide the expression for the variance of $\hat{\tau}^{k, k^{'}}$:
\begin{lemma}[Variance of Horvitz-Thompson estimator]
We have that (\cite{aronow2017}):
\begin{equation*}
\begin{split}
    \text{Var}(\sqrt{n}\hat{\tau}^{k, k^{'}}) = \frac{1}{n}\sum_{i = 1}^{n}\pi_i(k)(1 - \pi_i(k))\left(\frac{Y_i(k)}{\pi_i(k)}\right)^2 \\
    + \frac{1}{n}\sum_{i = 1}^{n}\pi_i(k^{'})(1 - \pi_i(k^{'}))\left(\frac{Y_i(k^{'})}{\pi_i(k^{'})}\right)^2 + \frac{2}{n}\sum_{i = 1}^{n}Y_i(k)Y_i(k^{'}) \\
    + \frac{1}{n}\sum_{i = 1}^{n}\sum_{j \neq i}\left\{\left[\pi_{ij}(k) - \pi_i(k)\pi_j(k)\right]\frac{Y_i(k)}{\pi_i(k)}\frac{Y_j(k)}{\pi_j(k)} \right.\\
    + \left.\left[\pi_{ij}(k^{'}) - \pi_i(k^{'})\pi_j(k^{'})\right]\frac{Y_i(k^{'})}{\pi_i(k^{'})}\frac{Y_j(k^{'})}{\pi_j(k^{'})} \right\}\\
    - \frac{2}{n}\sum_{i = 1}^{n}\sum_{j \neq i}\left\{\left[\pi_{ij}(k, k^{'}) - \pi_i(k)\pi_j(k^{'})\right]\frac{Y_i(k)}{\pi_i(k)}\frac{Y_j(k^{'})}{\pi_j(k^{'})}\right\}
\end{split}
\end{equation*} \label{htvar}
\end{lemma}
Here $\pi_{ij}(k) = \mathbb{P}(H_i = k \text{ and } H_j = k)$
\begin{proof} [Proof of Theorem \ref{thm: tec_clt_pure_spat_inter}]
Note that $\hat{\tau}^{k, k^{'}} = \sum_{i = 1}^{n}\tilde{\tau}_i$ where
\begin{equation*}
    \tilde{\tau}_i = \frac{1}{n}\left[\frac{\mathbf{1}(H_i = k)}{\pi_i(k)}Y_i(k) - \frac{\mathbf{1}(H_i = k^{'})}{\pi_i(k^{'})}Y_i(k^{'})\right]
\end{equation*}
and $\mathbb{E}\left[\tilde{\tau}_i\right] = \frac{1}{n}\left[Y_i(k) - Y_i(k^{'})\right]$, hence if we let $X_i = \sqrt{n}(\tilde{\tau}_i - \mathbb{E}\left[\tilde{\tau}_i\right])$, then $\sqrt{n}(\hat{\tau}^{k, k^{'}} - \tau^{k, k^{'}}) = \sum_{i = 1}^{n}X_i = S$. By Assumption \ref{assump: bddpo} and Assumption \ref{assump: overlap}, we know that $X_i = O_p(n^{-1/2})$, hence there exist some constants $C_1$ and $C_2$ such that for sufficiently large $n$, both
\begin{equation*}
    \left(\sum_{i = 1}^n\mathbb{E}\left[X_i^4\right]\right)^{1/2} \leq C_1n^{-1/2}
\end{equation*}
and
\begin{equation*}
    \sum_{i = 1}^{n}\mathbb{E}|X_i|^3 \leq C_2n^{-1/2}
\end{equation*}
hold. Moreover, by Assumption \ref{assump: nondegeneratevar},
\begin{equation*}
    \sigma^2 = \text{Var}(\sum_iX_i) = n\text{Var}(\hat{\tau}^{k, k^{'}})
\end{equation*}
is at least $O(1)$. Note that $X_i$ is a function of $H_i$, hence $X_i$ and $X_j$ are not independent if and only if $H_i$ and $H_j$ are not independent. Since $d_n = o(n^{1/4})$, we know that the maximal degree of the dependency graph of $X_i$'s is $o(n^{1/4})$. Now we apply Lemma \ref{lemma: steinbound}. Since $\sigma^2$ is at least $O(1)$, we get:
\begin{equation*}
    \text{RHS of } (\ref{bound}) = o(n^{-1/8}) + o(1) \rightarrow 0
\end{equation*}
We're done.
\end{proof}
\begin{remark}
In fact, with the tools in \cite{leung2022}, we can proof this theorem with a weaker condition on $d_n$: $d_n = O(\log n)$.
\end{remark}
\begin{proof}[Proof of Example \ref{cor: cor1}]
Note that $H_i$ is a function of $W_i$ and $W_j$'s for $j$ being a neighbor of $i$. If $H_i$ and $H_j$ are dependent, there must be the case that $(\{i\} \cup \mathcal{N}_i) \cap (\{j\} \cup \mathcal{N}_j)$ is nonempty since we have the Bernoulli design. Hence, for each fixed unit $i$, there are at most $\delta_n$ units such that the above intersection is nonempty.
\end{proof}
\begin{proof}[Proof of Example \ref{cor: cor2}]
We use the same reasoning as in the above proof. The only change is that now we know that each unit is belonged to a group and units in the group are connected. Therefore, for each fixed unit $i$, all the units outside the group will not have effect on unit $i$. As a result, we can have $r_n = o(n^{1/4})$.
\end{proof}
\begin{proof}[Proof of Example \ref{cor: cor3}]
Since we do not have Bernoulli design anymore, there might be the case that $W_i$ and $W_j$ are dependent, hence except $(\{i\} \cup \mathcal{N}_i) \cap (\{j\} \cup \mathcal{N}_j)$ is nonempty, there is another case that makes $H_i$ and $H_j$ dependent: a neighbor of $i$ is in the same cluster as a neighbor of $j$. For this case, we have at most $\delta_n c_n$ such $j$'s for a fixed unit $i$. Hence, in total, there are at most $\delta_n^2 + \delta_n c_n$ $j$'s such that $H_i$ and $H_j$ are dependent.
\end{proof}

\begin{proof} [Proof of Proposition \ref{prop: ci}]
We first prove the first part of the proposition. The proof is based on A.7 in \cite{aronow2017}. To start with, for any $(i, j) \in \{1, \cdots, \} \times \{1, \cdots, n\}$, we define $e_{ij} = 1$ if $H_i$ and $H_j$ are dependent and $0$ otherwise. Let $a_{ij}(H_i, H_j)$ be the sum of the elements in $\widehat{\text{Var}}(\hat{\tau}^{k, k^{'}})$ that incorporate $i$ and $j$, then
\begin{equation*}
\begin{split}
    \text{Var}\left(\widehat{\text{Var}}(\hat{\tau}^{k, k^{'}})\right) \leq n^{-4}\text{Var}\left[\sum_{i = 1}^{n}\sum_{j = 1}^{n}e_{ij}a_{ij}(H_i, H_j)\right] \\
    = n^{-4}\sum_{i = 1}^n\sum_{j = 1}^n\sum_{k = 1}^n\sum_{l = 1}^n\text{Cov}\left[e_{ij}a_{ij}(H_i, H_j), e_{kl}a_{kl}(H_k, H_l)\right]
\end{split}
\end{equation*}
Note that $\text{Cov}\left[e_{ij}a_{ij}(H_i, H_j), e_{kl}a_{kl}(H_k, H_l)\right]$ is nonzero if and only if $e_{ij} = 1, e_{kl} = 1$ and at least one of $e_{ik}, e_{il}, e_{jk}, e_{jl}$ is 1. In total, there are at most $4nd_n^3$ $(i, j, k, l)$'s satisfying this condition. And by Assumption \ref{assump: bddpo} and \ref{assump: overlap}, each covariance term is bounded, so we know that $\text{Var}\left(\widehat{\text{Var}}(\hat{\tau}^{k, k^{'}})\right) = o(n^{-4}\times n\times n^{3/4}) \rightarrow 0$ as $n \rightarrow \infty$. Then by Chebyshev's inequality,
\begin{equation*}
\left|\widehat{\text{Var}}(\sqrt{n}\hat{\tau}^{k, k^{'}}) - \mathbb{E}\left[\widehat{\text{Var}}(\sqrt{n}\hat{\tau}^{k, k^{'}})\right]\right| = o_p(1). 
\end{equation*}
Since $\mathbb{E}\left[\widehat{\text{Var}}(\hat{\tau}^{k, k^{'}})\right] \geq \text{Var}(\hat{\tau}^{k, k^{'}})$,
\begin{equation*}
    \mathbb{P}\left(\frac{\widehat{\text{Var}}(\hat{\tau}^{k, k^{'}})}{\text{Var}(\hat{\tau}^{k, k^{'}})} \geq 1 - \delta\right) \rightarrow 1
\end{equation*}
for any $\delta > 0$.

Now we can prove the second part of the proposition. We have that
\small
\begin{align}
     LHS &= \mathbb{P}\left(\left|\frac{\sqrt{n}(\hat{\tau}^{k, k^{'}} - \tau^{k, k^{'}})}{\sqrt{\text{Var}(\sqrt{n}\hat{\tau}^{k, k^{'}})}}\right| \leq \frac{z_{1 - \frac{\alpha}{2}}}{\sqrt{1 - \delta}}\sqrt{\frac{\widehat{\text{Var}}(\sqrt{n}\hat{\tau}^{k, k^{'}})}{\text{Var}(\sqrt{n}\hat{\tau}^{k, k^{'}})}}\right) \notag \\ 
     &\geq \mathbb{P}\left(\left|\frac{\sqrt{n}(\hat{\tau}^{k, k^{'}} - \tau^{k, k^{'}})}{\sqrt{\text{Var}(\sqrt{n}\hat{\tau}^{k, k^{'}})}}\right| \leq \frac{z_{1 - \frac{\alpha}{2}}}{\sqrt{1 - \delta}}\sqrt{\frac{\widehat{\text{Var}}(\sqrt{n}\hat{\tau}^{k, k^{'}})}{\text{Var}(\sqrt{n}\hat{\tau}^{k, k^{'}})}}
     \text{ and } \frac{\widehat{\text{Var}}(\sqrt{n}\hat{\tau}^{k, k^{'}})}{\text{Var}(\sqrt{n}\hat{\tau}^{k, k^{'}})} \geq 1 - \delta\right) \notag \\
     &\geq \mathbb{P}\left(\left|\frac{\sqrt{n}(\hat{\tau}^{k, k^{'}} - \tau^{k, k^{'}})}{\sqrt{\text{Var}(\sqrt{n}\hat{\tau}^{k, k^{'}})}}\right| \leq z_{1 - \frac{\alpha}{2}} 
     \text{ and } \frac{\widehat{\text{Var}}(\sqrt{n}\hat{\tau}^{k, k^{'}})}{\text{Var}(\sqrt{n}\hat{\tau}^{k, k^{'}})} \geq 1 - \delta\right) \notag \\
     &= \mathbb{P}\left(\left|\frac{\sqrt{n}(\hat{\tau}^{k, k^{'}} - \tau^{k, k^{'}})}{\sqrt{\text{Var}(\sqrt{n}\hat{\tau}^{k, k^{'}})}}\right| \leq z_{1 - \frac{\alpha}{2}}\right)
     - \mathbb{P}\left(\left|\frac{\sqrt{n}(\hat{\tau}^{k, k^{'}} - \tau^{k, k^{'}})}{\sqrt{\text{Var}(\sqrt{n}\hat{\tau}^{k, k^{'}})}}\right| \leq z_{1 - \frac{\alpha}{2}}
     \text{ and }\frac{\widehat{\text{Var}}(\sqrt{n}\hat{\tau}^{k, k^{'}})}{\text{Var}(\sqrt{n}\hat{\tau}^{k, k^{'}})} < 1 - \delta\right) \label{eq: prop1_temp}
\end{align}
Now,
\begin{align*}
    \eqref{eq: prop1_temp} &\geq \mathbb{P}\left(\left|\frac{\sqrt{n}(\hat{\tau}^{k, k^{'}} - \tau^{k, k^{'}})}{\sqrt{\text{Var}(\sqrt{n}\hat{\tau}^{k, k^{'}})}}\right| \leq z_{1 - \frac{\alpha}{2}}\right) \\ 
     &- \mathbb{P}\left(\frac{\widehat{\text{Var}}(\sqrt{n}\hat{\tau}^{k, k^{'}})}{\text{Var}(\sqrt{n}\hat{\tau}^{k, k^{'}})} < 1 - \delta\right) \\
     &= \mathbb{P}\left(\left|\frac{\sqrt{n}(\hat{\tau}^{k, k^{'}} - \tau^{k, k^{'}})}{\sqrt{\text{Var}(\sqrt{n}\hat{\tau}^{k, k^{'}})}}\right| \leq z_{1 - \frac{\alpha}{2}}\right)
     - \mathbb{P}\left(\frac{\widehat{\text{Var}}(\hat{\tau}^{k, k^{'}})}{\text{Var}(\hat{\tau}^{k, k^{'}})} < 1 - \delta\right) \\
     &\rightarrow 1 - \alpha
\end{align*}
\normalsize
as $n \rightarrow \infty$ by the first part and Theorem \ref{thm: tec_clt_pure_spat_inter}.
\end{proof}

\begin{proof} [Proof of Theorem \ref{thm: temporalavgclt_fixedT}]
We use a characteristic function argument. We first note that
\begin{align*}
    \frac{\sqrt{nT}(\hat{\bar{\tau}}^{k, k^{'}} - \bar{\tau}^{k, k^{'}})}{\sqrt{\frac{1}{T}\sum_{t=1}^{T}\sigma^2_{n, t}}} &= \frac{\sqrt{nT}(\frac{1}{T}\sum_{t = 1}^{T}\hat{\tau}^{k, k^{'}}_{t} - \frac{1}{T}\sum_{t = 1}^{T}\tau^{k, k^{'}}_t)}{\sqrt{\frac{1}{T}\sum_{t=1}^{T}\sigma^2_{n, t}}} \\
    &= \frac{\sqrt{T}\frac{1}{T}\sum_{t = 1}^{T}\sqrt{n}(\hat{\tau}^{k, k^{'}}_{t} - \tau^{k, k^{'}}_t)}{\sqrt{\frac{1}{T}\sum_{t=1}^{T}\sigma^2_{n, t}}} \\
    &= \frac{\frac{1}{\sqrt{T}}\sum_{t = 1}^{T}X_{n, t}}{\sqrt{\frac{1}{T}\sum_{t=1}^{T}\sigma^2_{n, t}}},
\end{align*}
where $X_{n, t} = \sqrt{n}(\hat{\tau}^{k, k^{'}}_{t} - \tau^{k, k^{'}}_t)$. Now,
\begin{align}
    &\mathbb{E}\left[\exp\Bigg\{i\lambda\frac{\sqrt{nT}(\hat{\bar{\tau}}^{k, k^{'}} - \bar{\tau}^{k, k^{'}})}{\sqrt{\frac{1}{T}\sum_{t=1}^{T}\sigma^2_{n, t}}}\Bigg\}\right] \\
    &= \mathbb{E}\left[\exp\Bigg\{i\lambda\frac{\frac{1}{\sqrt{T}}\sum_{t = 1}^{T}X_{n, t}}{\sqrt{\frac{1}{T}\sum_{t=1}^{T}\sigma^2_{n, t}}}\Bigg\}\right] \notag \\
    &= \prod_{t = 1}^{T}\mathbb{E}\left[\exp\Bigg\{i\lambda\frac{\frac{1}{\sqrt{T}}X_{n, t}}{\sqrt{\frac{1}{T}\sum_{t=1}^{T}\sigma^2_{n, t}}}\Bigg\}\right] \notag \\
    &= \prod_{t = 1}^{T}\mathbb{E}\left[\exp\Bigg\{i\frac{\lambda\sigma_{n, t}}{\sqrt{\sum_{t=1}^{T}\sigma^2_{n, t}}}\frac{X_{n, t}}{\sigma_{n, t}}\Bigg\}\right] \notag \\
    &= \prod_{t = 1}^{T}\phi_{\frac{X_{n, t}}{\sigma_{n, t}}}\left(\frac{\lambda\sigma_{n, t}}{\sqrt{\sum_{t=1}^{T}\sigma^2_{n, t}}}\right) \label{prod}
\end{align}
The second equality follows from our assumption that assignment vectors are independent across time and $\phi_X$ denotes the characteristic function of a random variable $X$. Pick $\epsilon > 0$. Now, to conclude the proof, we note that
\begin{equation*}
    \phi_{\frac{X_{n, t}}{\sigma_{n, t}}}(\theta) \rightarrow e^{-\frac{\theta^2}{2}}
\end{equation*}
for any $t \in \{1, \cdots, T\}$. Moreover, for each $t$, the convergence is actually uniform on any bounded interval. Therefore, for any $t \in \{1, \cdots, T\}$, 
\begin{equation*}
    \phi_{\frac{X_{n, t}}{\sigma_{n, t}}}(\theta) \rightarrow e^{-\frac{\theta^2}{2}} \,\, \text{uniformly on} \,\, (0, 1).
\end{equation*}
Note that 
\begin{equation*}
    \frac{\lambda\sigma_{n, t}}{\sqrt{\sum_{t=1}^{T}\sigma^2_{n, t}}} \in (0, 1),
\end{equation*}
so for any $t$, $\exists N_t \in \mathbb{N}$ such that for any $n \geq N_t$,
\begin{equation*}
\begin{split}
    \Bigg|\phi_{\frac{X_{n, t}}{\sigma_{n, t}}}\left(\frac{\lambda\sigma_{n, t}}{\sqrt{\sum_{t=1}^{T}\sigma^2_{n, t}}}\right) - \exp\Bigg\{-\frac{1}{2}\frac{\lambda^2\sigma_{n, t}^2}{\sum_{t=1}^{T}\sigma^2_{n, t}}\Bigg\}\Bigg| = |\epsilon_t| \\
    \leq \frac{1}{2^K}.
\end{split}
\end{equation*}
Let $N = \max\{N_1, \cdots, N_T\}$, then for all $n \geq N$, and for all $t \in \{1, \cdots, T\}$, 
\begin{equation*}
    \Bigg|\phi_{\frac{X_{n, t}}{\sigma_{n, t}}}\left(\frac{\lambda\sigma_{n, t}}{\sqrt{\sum_{t=1}^{T}\sigma^2_{n, t}}}\right) - \exp\Bigg\{-\frac{1}{2}\frac{\lambda^2\sigma_{n, t}^2}{\sum_{t=1}^{T}\sigma^2_{n, t}}\Bigg\}\Bigg| = |\epsilon_t| \leq \frac{1}{2^K},
\end{equation*}
where $K$ is any big number we want. Now,
\begin{align*}
    (\ref{prod}) &= \prod_{t = 1}^{T}\left(\exp\Bigg\{-\frac{1}{2}\frac{\lambda^2\sigma_{n, t}^2}{\sum_{t=1}^{T}\sigma^2_{n, t}}\Bigg\} + \epsilon_t\right) \\
    &= \exp\bigg\{-\frac{1}{2}\lambda^2\bigg\} + R(\epsilon_t),
\end{align*}
where $R(\epsilon_t)$ is a remainder term that is the sum of several monomial terms of $\epsilon_t$'s. Note that $\exp\Bigg\{-\frac{1}{2}\frac{\lambda^2\sigma_{n, t}^2}{\sum_{t=1}^{T}\sigma^2_{n, t}}\Bigg\}$ is actually bounded by 1, hence by making $K$ sufficiently large, we can make $R(\epsilon_t)$ arbitrarily small. Pick such $K$, then we know that for sufficiently large $n$,
\begin{equation*}
    \Bigg|\mathbb{E}\left[\exp\Bigg\{i\lambda\frac{\sqrt{nT}(\hat{\bar{\tau}}^{k, k^{'}} - \bar{\tau}^{k, k^{'}})}{\sqrt{\frac{1}{T}\sum_{t=1}^{T}\sigma^2_{n, t}}}\Bigg\}\right] - \exp\bigg\{-\frac{1}{2}\lambda^2\bigg\} \Bigg| \leq \epsilon.
\end{equation*}
Hence, by standard characteristic function argument, we complete the proof of the theorem.
\end{proof}
To prove Theorem \ref{thm: atecclt}, we first state the following version of Lindeberg-Feller central limit theorem.
\begin{lemma}[Lindeberg-Feller CLT]
\label{lemma: lindebergclt}
Let $\{k_n\}_{n \geq 1}$ be a sequence of positive integers increasing to infinity. For each $n$, let $\{X_{n, i}\}_{1 \leq i \leq k_n}$ is a collection of independent random variables. Let $\mu_{n, i} \coloneqq \mathbb{E}(X_{n, i})$ and
\begin{equation*}
    s_n^2 \coloneqq \sum_{i = 1}^{k_n}\text{Var}(X_{n, i}).
\end{equation*}
Suppose that for any $\epsilon > 0$,
\begin{equation}
    \lim_{n \rightarrow \infty}\frac{1}{s_n^2}\sum_{i = 1}^{k_n}\mathbb{E}\left((X_{n, i} - \mu_{n, i})^2; |X_{n, i} - \mu_{n, i}| \geq \epsilon s_n\right) = 0. \label{lindebergcondition}
\end{equation}
Then the random variable
\begin{equation*}
    \frac{\sum_{i = 1}^{k_n}(X_{n, i} - \mu_{n, i})}{s_n} \xrightarrow{d} \mathcal{N}(0, 1)
\end{equation*}
as $n \rightarrow \infty$.
\end{lemma}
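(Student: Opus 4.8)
The stated version of the Lindeberg-Feller theorem is classical, so in the paper one could simply cite a standard probability reference; for completeness, the proof I would write out is the usual characteristic-function argument, organized as follows. First I would pass to the normalized summands $Z_{n,i} := (X_{n,i}-\mu_{n,i})/s_n$, with $\sigma_{n,i}^2 := \text{Var}(Z_{n,i}) = \text{Var}(X_{n,i})/s_n^2$, so that $S_n := \sum_{i=1}^{k_n} Z_{n,i}$ has mean $0$, $\sum_{i=1}^{k_n}\sigma_{n,i}^2 = 1$, and condition \eqref{lindebergcondition} becomes $L_n(\epsilon) := \sum_{i=1}^{k_n}\mathbb{E}\!\left[Z_{n,i}^2;\,|Z_{n,i}|\ge\epsilon\right]\to 0$ for every $\epsilon>0$. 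By L\'{e}vy's continuity theorem it then suffices to show that the characteristic functions $\varphi_{S_n}(\lambda)$, which by independence factor as $\prod_{i=1}^{k_n}\varphi_{Z_{n,i}}(\lambda)$, converge to $e^{-\lambda^2/2}$ for each fixed $\lambda\in\mathbb{R}$.

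The one genuinely necessary preliminary is uniform asymptotic negligibility: for every $\epsilon>0$ we have $\sigma_{n,i}^2 \le \epsilon^2 + \mathbb{E}\!\left[Z_{n,i}^2;\,|Z_{n,i}|\ge\epsilon\right]\le \epsilon^2 + L_n(\epsilon)$, hence $\max_{1\le i\le k_n}\sigma_{n,i}^2\to 0$; this is the conceptual crux, since it is exactly what legitimizes the product estimates that follow. For the main estimate I would apply the elementary inequality $\bigl|e^{ix}-1-ix+\tfrac{x^2}{2}\bigr|\le\min\{\tfrac{|x|^3}{6},\,x^2\}$ with $x=\lambda Z_{n,i}$; taking expectations (using $\mathbb{E} Z_{n,i}=0$) and splitting the resulting bound according to $|Z_{n,i}|<\epsilon$ or $|Z_{n,i}|\ge\epsilon$ gives $\bigl|\varphi_{Z_{n,i}}(\lambda)-(1-\tfrac{\lambda^2}{2}\sigma_{n,i}^2)\bigr|\le \tfrac{|\lambda|^3\epsilon}{6}\,\sigma_{n,i}^2+\lambda^2\,\mathbb{E}\!\left[Z_{n,i}^2;\,|Z_{n,i}|\ge\epsilon\right]$, and summing over $i$ (recall $\sum_i\sigma_{n,i}^2=1$) yields $\sum_{i}\bigl|\varphi_{Z_{n,i}}(\lambda)-(1-\tfrac{\lambda^2}{2}\sigma_{n,i}^2)\bigr|\le \tfrac{|\lambda|^3\epsilon}{6}+\lambda^2 L_n(\epsilon)$; letting $n\to\infty$ and then $\epsilon\to0$ forces this quantity to $0$. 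Since $|\varphi_{Z_{n,i}}(\lambda)|\le 1$ always and $|1-\tfrac{\lambda^2}{2}\sigma_{n,i}^2|\le 1$ for all large $n$ by negligibility, the elementary bound $\bigl|\prod_i a_i-\prod_i b_i\bigr|\le\sum_i|a_i-b_i|$ for complex numbers of modulus at most $1$ gives $\bigl|\varphi_{S_n}(\lambda)-\prod_i(1-\tfrac{\lambda^2}{2}\sigma_{n,i}^2)\bigr|\to 0$.

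It remains to verify $\prod_i(1-\tfrac{\lambda^2}{2}\sigma_{n,i}^2)\to e^{-\lambda^2/2}$, which I would do by comparing term by term with $\prod_i e^{-\frac{\lambda^2}{2}\sigma_{n,i}^2}=e^{-\frac{\lambda^2}{2}\sum_i\sigma_{n,i}^2}=e^{-\lambda^2/2}$: using $|1-x-e^{-x}|\le x^2$ for $x\ge 0$ and the same product inequality, the discrepancy is at most $\tfrac{\lambda^4}{4}\sum_i\sigma_{n,i}^4\le\tfrac{\lambda^4}{4}\bigl(\max_i\sigma_{n,i}^2\bigr)\sum_i\sigma_{n,i}^2\to 0$. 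Chaining the two comparisons gives $\varphi_{S_n}(\lambda)\to e^{-\lambda^2/2}$, completing the proof. I expect the only real friction to be bookkeeping --- keeping straight the order in which the $n$- and $\epsilon$-limits are taken, and checking that the quadratic approximants $1-\tfrac{\lambda^2}{2}\sigma_{n,i}^2$ lie in $[-1,1]$ uniformly so that the product-comparison inequality applies --- and both are disposed of by the negligibility step; everything else is routine.
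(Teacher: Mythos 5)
Your proof is correct: it is the standard characteristic-function proof of the Lindeberg--Feller theorem, with the uniform-negligibility step, the third-moment/truncation bound on $\bigl|\varphi_{Z_{n,i}}(\lambda)-(1-\tfrac{\lambda^2}{2}\sigma_{n,i}^2)\bigr|$, and the product-comparison inequalities all correctly applied. The paper itself gives no proof of this lemma --- it states the triangular-array Lindeberg--Feller CLT as a classical tool (used to prove its Theorem on the ATEC), so citing a standard reference, as you suggest, is exactly what is intended; writing out the argument as you have done is fine but not required.
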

\begin{proof} [Proof of Theorem \ref{thm: atecclt}]
We first prove the theorem with condition (\ref{atecclt: nodegcond}). We note that $\sqrt{nT}(\hat{\bar{\tau}}^{k, k^{'}} - \bar{\tau}^{k, k^{'}}) = \sum_{t = 1}^{T}\sqrt{\frac{n}{T}}(\hat{\tau}^{k, k^{'}}_{t} - \tau_t^{k, k^{'}})$. Let $X_{n, t} = \sqrt{\frac{n}{T}}\hat{\tau}^{k, k^{'}}_{t}$, then $\mu_{n, t} = \sqrt{\frac{n}{T}}\tau_t^{k, k^{'}}$, so the numerator is exactly $\sum_{t = 1}^{T}(X_{n, t} - \mu_{n, t})$. Moreover, note that for any $n$, $X_{n, 1}, \cdots, X_{n, T}$ are independent by the pure population interference assumption. Now,
\begin{align*}
    s_n^2 &=  \sum_{t = 1}^{T}\text{Var}(X_{n, t}) \\
    &= \sum_{t = 1}^{T}\text{Var}\left(\sqrt{\frac{n}{T}}\hat{\tau}^{k, k^{'}}_{t}\right) \\
    &= \frac{1}{T}\sum_{t = 1}^{T}\text{Var}(\sqrt{n}\hat{\tau}^{k, k^{'}}_{t}) \\
    &= \frac{1}{T}\sum_{t=1}^{T}\sigma^2_{n, t}.
\end{align*}
Hence, to finish the proof, we only need to check (\ref{lindebergcondition}) is satisfied. Notice that for any $\epsilon > 0$,
\begin{align*}
    |X_{n, t} - \mu_{n, t}| \geq \epsilon s_n &\Leftrightarrow \left|\sqrt{\frac{n}{T}}\hat{\tau}^{k, k^{'}}_{t} - \sqrt{\frac{n}{T}}\tau_t^{k, k^{'}}\right| \geq \epsilon\sqrt{\frac{1}{T}\sum_{t=1}^{T}\sigma^2_{n, t}} \\
    &\Leftrightarrow \left|\hat{\tau}^{k, k^{'}}_{t} - \tau_t^{k, k^{'}}\right| \geq \epsilon \sqrt{\frac{1}{n}\sum_{t=1}^{T}\sigma^2_{n, t}}
\end{align*}
By Assumption \ref{assump: nondegeneratevar}, $\sigma_{n, t}^2 \geq c$ for some $c > 0$ and for all $n$ large. Hence
\begin{equation*}
    \epsilon \sqrt{\frac{1}{n}\sum_{t=1}^{T}\sigma^2_{n, t}} \geq \epsilon \sqrt{\frac{T}{n}c} \rightarrow \infty.
\end{equation*}
Note that by Assumptions \ref{assump: bddpo} and \ref{assump: overlap}, $\left|\hat{\tau}^{k, k^{'}}_{t} - \tau_t^{k, k^{'}}\right|$ is uniformly bounded. Hence for sufficiently large $n$, $\left|\hat{\tau}^{k, k^{'}}_{t} - \tau_t^{k, k^{'}}\right| < \epsilon \sqrt{\frac{1}{n}\sum_{t=1}^{T}\sigma^2_{n, t}}$ for all $t$. Therefore, for sufficiently large $n$, 
\begin{equation*}
    \frac{1}{s_n^2}\sum_{t = 1}^{T}\mathbb{E}\left((X_{n, t} - \mu_{n, t})^2; |X_{n, t} - \mu_{n, t}| \geq \epsilon s_n\right) = 0.
\end{equation*}
As a result, (\ref{lindebergcondition}) is satisfied. We're done.
The proof of this theorem with condition (\ref{atecclt: degcond}) is exactly the same as in single time step case once we notice that the numerator is just a sum of $nT$ mean 0 dependent random variables.
\end{proof}
To prove Theorem \ref{thm: ateclyapunov}, we need the following version of Lyapunov central limit theorem.
\begin{lemma} [Lyapunov CLT]
\label{lemma: lyapunovclt}
Let $\{X_n\}_{n = 1}^{\infty}$ be a sequence of independent random variables. Let $\mu_i \coloneqq \mathbb{E}(X_i)$ and
\begin{equation*}
    s_n^2 = \sum_{i = 1}^{n}\text{Var}(X_i).
\end{equation*}
If for some $\delta > 0$,
\begin{equation}
    \lim_{n \rightarrow \infty}\frac{1}{s_n^{2 + \delta}}\sum_{i = 1}^{n}\mathbb{E}|X_i - \mu_i|^{2 + \delta}= 0, \label{lyapunov}
\end{equation}
then the random variable
\begin{equation*}
    \frac{\sum_{i = 1}^n(X_i - \mu_i)}{s_n} \xrightarrow{d} \mathcal{N}(0, 1)
\end{equation*}
\end{lemma}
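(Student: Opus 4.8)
The plan is to deduce this from the Lindeberg--Feller central limit theorem (Lemma~\ref{lemma: lindebergclt}) by checking that Lyapunov's condition~(\ref{lyapunov}) implies Lindeberg's condition~(\ref{lindebergcondition}); this is the textbook reduction, and one could alternatively simply cite a standard reference for it.

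First I would recast the single sequence $\{X_n\}$ as a triangular array by setting $k_n = n$ and $X_{n,i} = X_i$ for $1 \le i \le n$, so that $\mu_{n,i} = \mu_i$ and the quantity $s_n^2$ in Lemma~\ref{lemma: lindebergclt} coincides with the $s_n^2$ defined in the statement; within-row independence is inherited from the independence of $\{X_n\}$. It then remains only to verify~(\ref{lindebergcondition}). Fix $\epsilon > 0$. On the event $\{|X_i - \mu_i| \ge \epsilon s_n\}$ one has $|X_i - \mu_i|^{\delta} \ge (\epsilon s_n)^{\delta}$, and therefore $(X_i - \mu_i)^2 \le (\epsilon s_n)^{-\delta}\,|X_i - \mu_i|^{2+\delta}$ there. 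Summing over $i$ and dividing by $s_n^2$ yields
\[
\frac{1}{s_n^2}\sum_{i=1}^{n}\mathbb{E}\!\left[(X_i-\mu_i)^2 ;\, |X_i-\mu_i|\ge \epsilon s_n\right]
\;\le\; \frac{1}{\epsilon^{\delta}}\,\cdot\,\frac{1}{s_n^{2+\delta}}\sum_{i=1}^{n}\mathbb{E}|X_i-\mu_i|^{2+\delta},
\]
whose right-hand side tends to $0$ by~(\ref{lyapunov}). Hence~(\ref{lindebergcondition}) holds, and applying Lemma~\ref{lemma: lindebergclt} gives the stated convergence.

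There is no genuine obstacle here: the argument is a short truncation estimate. The only points needing a moment's attention are the bookkeeping of matching the triangular-array hypotheses of Lemma~\ref{lemma: lindebergclt} to the single-sequence setting of the present lemma, and implicitly noting that~(\ref{lyapunov}) already forces each $\mathbb{E}|X_i - \mu_i|^{2+\delta}$ (hence each $\text{Var}(X_i)$) to be finite and $s_n^2 > 0$ eventually, so that every ratio written above is well defined.
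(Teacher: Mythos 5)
Your argument is correct: the truncation bound $(X_i-\mu_i)^2 \le (\epsilon s_n)^{-\delta}|X_i-\mu_i|^{2+\delta}$ on the event $\{|X_i-\mu_i|\ge \epsilon s_n\}$ shows that Lyapunov's condition~(\ref{lyapunov}) implies Lindeberg's condition~(\ref{lindebergcondition}), and embedding the sequence as the triangular array $X_{n,i}=X_i$, $k_n=n$ lets you invoke Lemma~\ref{lemma: lindebergclt} directly. The paper itself states this lemma as a classical result and offers no proof, so your write-up is exactly the standard reduction one would cite or reproduce; nothing is missing.
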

\begin{proof} [Proof of Theorem \ref{thm: ateclyapunov}]
This time, we let $X_t$ = $\sqrt{\frac{n}{T}}\hat{\tau}^{k, k^{'}}_{t}$ then the numerator is $\sum_{t = 1}^{T}(X_t - \mu_t)$. Since we have pure population interference, $\{X_t\}_{t = 1}^{\infty}$ are independent. Now,
\begin{align*}
    s_T^2 &= \sum_{t = 1}^{T}\text{Var}(X_t) \\
          &= \frac{1}{T}\sum_{t=1}^{T}\sigma^2_{n, t}.
\end{align*}
Hence, we only need to check (\ref{lyapunov}). We have that
\begin{align*}
\begin{split}
    &\lim_{T \rightarrow \infty}\frac{1}{s_T^{2 + \delta}}\sum_{t = 1}^{T}\mathbb{E}|X_t - \mu_t|^{2 + \delta} \\ &= \lim_{T \rightarrow \infty}\frac{1}{s_T^{2 + \delta}}\left(\frac{n}{T}\right)^{1 + \frac{\delta}{2}}\sum_{t = 1}^{T}\mathbb{E}\left|\hat{\tau}^{k, k^{'}}_{t} - \tau_t^{k, k^{'}}\right|^{2 + \delta}
\end{split}
\end{align*}
Now, by Assumptions \ref{assump: bddpo} and \ref{assump: overlap}, $\exists M > 0$ such that $\left|\hat{\tau}^{k, k^{'}}_{t} - \tau_t^{k, k^{'}}\right| \leq M$ for all $t$. Hence,
\begin{align*}
    &\frac{1}{s_T^{2 + \delta}}\left(\frac{n}{T}\right)^{1 + \frac{\delta}{2}}\sum_{t = 1}^{T}\mathbb{E}\left|\hat{\tau}^{k, k^{'}}_{t} - \tau_t^{k, k^{'}}\right|^{2 + \delta}\\
    &\leq \frac{1}{s_T^{2 + \delta}}\left(\frac{n}{T}\right)^{1 + \frac{\delta}{2}}TM^{2 + \delta} \\
    &= \frac{1}{s_T^{2 + \delta}}\frac{n^{1 + \frac{\delta}{2}}}{T}M^{2 + \delta}
\end{align*}
If $T \rightarrow \infty$, $\frac{1}{s_T^{2 + \delta}}\frac{n^{1 + \frac{\delta}{2}}}{T}M^{2 + \delta} \rightarrow 0$. Therefore, (\ref{lyapunov}) is satisfied. We're done.
\end{proof}

\begin{proof} [Proof of Proposition \ref{prop: ci_atec}]
Now we can prove the second part of the proposition. We have that
\begin{align}
     LHS &= \mathbb{P}\left(\left|\frac{\sqrt{nT}(\hat{\bar{\tau}}^{k, k^{'}} - \bar{\tau}^{k, k^{'}})}{\sqrt{\frac{1}{T}\sum_{t = 1}^T\text{Var}(\sqrt{n}\hat{\tau}^{k, k^{'}}_t)}}\right| \leq \frac{z_{1 - \frac{\alpha}{2}}}{\sqrt{1 - \delta}}\sqrt{\frac{\frac{1}{T}\sum_{t = 1}^T\widehat{\text{Var}}(\sqrt{n}\hat{\tau}^{k, k^{'}}_t)}{\frac{1}{T}\sum_{t = 1}^T\text{Var}(\sqrt{n}\hat{\tau}^{k, k^{'}}_t)}}\right) \notag \\
     &\geq \mathbb{P}\left(\left|\frac{\sqrt{nT}(\hat{\bar{\tau}}^{k, k^{'}} - \bar{\tau}^{k, k^{'}})}{\sqrt{\frac{1}{T}\sum_{t = 1}^T\text{Var}(\sqrt{n}\hat{\tau}^{k, k^{'}}_t)}}\right| \leq \frac{z_{1 - \frac{\alpha}{2}}}{\sqrt{1 - \delta}}\sqrt{\frac{\frac{1}{T}\sum_{t = 1}^T\widehat{\text{Var}}(\sqrt{n}\hat{\tau}^{k, k^{'}}_t)}{\frac{1}{T}\sum_{t = 1}^T\text{Var}(\sqrt{n}\hat{\tau}^{k, k^{'}}_t)}}\text{ and } \frac{\frac{1}{T}\sum_{t = 1}^T\widehat{\text{Var}}(\hat{\tau}^{k, k^{'}}_t)}{\frac{1}{T}\sum_{t = 1}^T\text{Var}(\hat{\tau}^{k, k^{'}}_t)} \geq 1 - \delta\right) \notag \\
     &\geq \mathbb{P}\left(\left|\frac{\sqrt{nT}(\hat{\bar{\tau}}^{k, k^{'}} - \bar{\tau}^{k, k^{'}})}{\sqrt{\frac{1}{T}\sum_{t = 1}^T\text{Var}(\sqrt{n}\hat{\tau}^{k, k^{'}}_t)}}\right| \leq z_{1 - \frac{\alpha}{2}} \text{ and } \frac{\frac{1}{T}\sum_{t = 1}^T\widehat{\text{Var}}(\hat{\tau}^{k, k^{'}}_t)}{\frac{1}{T}\sum_{t = 1}^T\text{Var}(\hat{\tau}^{k, k^{'}}_t)} \geq 1 - \delta\right) \label{eq: temp_prop5}
\end{align}
Furthermore, we have that
\begin{align*}
    \eqref{eq: temp_prop5} &= \mathbb{P}\left(\left|\frac{\sqrt{nT}(\hat{\bar{\tau}}^{k, k^{'}} - \bar{\tau}^{k, k^{'}})}{\sqrt{\frac{1}{T}\sum_{t = 1}^T\text{Var}(\sqrt{n}\hat{\tau}^{k, k^{'}}_t)}}\right| \leq z_{1 - \frac{\alpha}{2}}\right) \notag \\ 
    &\qquad - \mathbb{P}\left(\left|\frac{\sqrt{nT}(\hat{\bar{\tau}}^{k, k^{'}} - \bar{\tau}^{k, k^{'}})}{\sqrt{\frac{1}{T}\sum_{t = 1}^T\text{Var}(\sqrt{n}\hat{\tau}^{k, k^{'}}_t)}}\right| \leq z_{1 - \frac{\alpha}{2}} \text{ and } \frac{\frac{1}{T}\sum_{t = 1}^T\widehat{\text{Var}}(\hat{\tau}^{k, k^{'}}_t)}{\frac{1}{T}\sum_{t = 1}^T\text{Var}(\hat{\tau}^{k, k^{'}}_t)} < 1 - \delta\right) \notag \\
     &\geq \mathbb{P}\left(\left|\frac{\sqrt{nT}(\hat{\bar{\tau}}^{k, k^{'}} - \bar{\tau}^{k, k^{'}})}{\sqrt{\frac{1}{T}\sum_{t = 1}^T\text{Var}(\sqrt{n}\hat{\tau}^{k, k^{'}}_t)}}\right| \leq z_{1 - \frac{\alpha}{2}}\right) - \mathbb{P}\left(\frac{\frac{1}{T}\sum_{t = 1}^T\widehat{\text{Var}}(\hat{\tau}^{k, k^{'}}_t)}{\frac{1}{T}\sum_{t = 1}^T\text{Var}(\hat{\tau}^{k, k^{'}}_t)} < 1 - \delta\right)
\end{align*}
So if we can show $\mathbb{P}\left(\frac{\frac{1}{T}\sum_{t = 1}^T\widehat{\text{Var}}(\hat{\tau}^{k, k^{'}}_t)}{\frac{1}{T}\sum_{t = 1}^T\text{Var}(\hat{\tau}^{k, k^{'}}_t)} \geq 1 - \delta\right) \rightarrow 0$ then we are done. Notice that
\begin{equation}
    \text{Var}\left(\frac{1}{T}\sum_{t = 1}^T\widehat{\text{Var}}(\hat{\tau}^{k, k^{'}}_t)\right) = \frac{1}{T^2}\sum_{t = 1}^{T}\text{Var}\left(\widehat{\text{Var}}(\hat{\tau}^{k, k^{'}}_t)\right). \label{var_eq: atec}
\end{equation}
If $T$ is fixed (i.e., Theorem \ref{thm: temporalavgclt_fixedT} holds), then by what we have in Proposition \ref{prop: ci}, we immediately have that (\ref{var_eq: atec}) $\rightarrow 0$ and we are done. Now suppose Theorem \ref{thm: atecclt} holds. Recall that
\begin{equation*}
\begin{split}
    \text{Var}\left(\widehat{\text{Var}}(\hat{\tau}^{k, k^{'}})\right) \leq n^{-4}\sum_{i = 1}^n\sum_{j = 1}^n\sum_{k = 1}^n\sum_{l = 1}^n\text{Cov}\left[e_{ij}a_{ij}(H_i, H_j), \right.\\
    \left.e_{kl}a_{kl}(H_k, H_l)\right],
\end{split}
\end{equation*}
which implies that $\text{Var}\left(\widehat{\text{Var}}(\hat{\tau}^{k, k^{'}}_t)\right)$ is uniformly bounded by a constant $M$ by Assumption \ref{assump: bddpo} and \ref{assump: overlap}. So
\begin{align*}
     \frac{1}{T^2}\sum_{t = 1}^{T}\text{Var}\left(\widehat{\text{Var}}(\hat{\tau}^{k, k^{'}}_t)\right) \leq \frac{1}{T^2}\sum_{t = 1}^{T}M = \frac{M}{T} \rightarrow 0
\end{align*}
as $T \rightarrow 0$. So in the regime where both $n$ and $T$ go to infinity (i.t. Theorem \ref{thm: atecclt} holds) or $T$ goes to infinity (i.e., Theorem \ref{thm: ateclyapunov} holds), (\ref{var_eq: atec}) $\rightarrow 0$ and we are done.
\end{proof}

\begin{proof} [Proof of Theorem \ref{thm: tec_mixed_inter}]
This should be exactly the same as our proof of Theorem \ref{thm: tec_clt_pure_spat_inter}.
\end{proof}

\begin{proof} [Proof of Proposition \ref{prop: k=2cvxbias}]
\begin{align}
    |\mathbb{E}[\hat{\tau}^c_t] - \tau_t^{TE}| &= |\mathbb{E}[\alpha\hat{\tau}_t^{TE} + (1 - \alpha)\hat{\tau}_{t - 1}^{TE}] - \tau_t^{TE}| \notag \\
    &= |\alpha\tau_t^{TE} + (1 - \alpha)\tau_{t - 1}^{TE} - \tau_t^{TE}| \notag \\
    &= |(1-\alpha)(\tau_{t - 1}^{TE} - \tau_t^{TE})| \notag \\
    &= (1 - \alpha)|\tau_{t - 1}^{TE} - \tau_t^{TE}| \notag
\end{align}
The second equality follows from unbiasedness of $\hat{\tau}_t^{TE}$ and $\hat{\tau}_{t - 1}^{TE}$. To further bound the bias, we need to bound $|\tau_{t - 1}^{TE} - \tau_t^{TE}|$. We do this below.
\begin{align}
    |\tau_{t - 1}^{TE} - \tau_t^{TE}| &= \bigg|\left(\frac{1}{n}\sum_{i = 1}^{n}Y_{i, t}(h_i^1) - \frac{1}{n}\sum_{i = 1}^{n}Y_{i, t}(h_i^0)\right) - \left(\frac{1}{n}\sum_{i = 1}^{n}Y_{i, t-1}(h_i^1) - \frac{1}{n}\sum_{i = 1}^{n}Y_{i, t-1}(h_i^0)\right)\bigg| \notag \\
    &= \bigg|\frac{1}{n}\sum_{i = 1}^{n}(Y_{i, t}(h_i^1) - Y_{i, t-1}(h_i^1)) - \frac{1}{n}\sum_{i = 1}^{n}(Y_{i, t}(h_i^0) - Y_{i, t-1}(h_i^0))\bigg| \notag \\
    &\leq \frac{1}{n}\sum_{i = 1}^{n}|Y_{i, t}(h_i^1) - Y_{i, t-1}(h_i^1)| + \frac{1}{n}\sum_{i = 1}^{n}|Y_{i, t}(h_i^0) - Y_{i, t-1}(h_i^0)| \notag \\
    &\leq 2\epsilon, \notag
\end{align}
by our $\epsilon$-weak-stability assumption. Hence, 
\begin{equation*}
    |\mathbb{E}[\hat{\tau}^c_t] - \tau_t^{TE}| \leq 2(1 - \alpha)\epsilon.
\end{equation*}
\end{proof}
\begin{remark}
Note that following the exact derivation, we can know that
\begin{equation}
    |\tau_t^{TE} - \tau_{t^{'}}^{TE}| \leq 2|t - t^{'}|\epsilon \label{equa: generaldiff}
\end{equation}
\end{remark}
\begin{proposition} [Variance and Covariance of Horvitz-Thompson Type Estimators] 
\label{prop: var_and_cov_of_tauhat}
For each $i \in \{1, \cdots, n\}, t \in \{1, \cdots, T\}$, we let $\mathbb{P}(H_{i, t} = h_i^1) = \pi_{i, t}^1$, $\mathbb{P}(H_{i, t} = h_i^0) = \pi_{i, t}^{0}$, $\mathbb{P}(H_{j, t} = h_j^1) = \pi_{j, t}^{1}$ and $\mathbb{P}(H_{j, t} = h_j^0) = \pi_{j, t}^{0}$. Moreover, for each $i \neq j$ and $t$, we let $\mathbb{P}(H_{i, t} = h_i^1, H_{j, t} = h_j^1) = \pi_{i, j, t}^{1, 1}$, $\mathbb{P}(H_{i, t} = h_i^0, H_{j, t} = h_j^1) = \pi_{i, j, t}^{0, 1}$, $\mathbb{P}(H_{i, t} = h_i^1, H_{j, t} = h_j^0) = \pi_{i, j, t}^{1, 0}$ and $\mathbb{P}(H_{i, t} = h_i^0, H_{j, t} = h_j^0) = \pi_{i, j, t}^{0, 0}$, then
\small
\begin{equation}
\begin{split}
    Var(\hat{\tau}_t^{TE}) = \frac{1}{n^2}\sum_{i = 1}^{n}\left[\frac{Y_{i, t}^2(h_i^1)(1 - \pi_{i, t}^1)}{\pi_{i, t}^{1}} + \frac{Y_{i, t}^2(h_i^0)(1 - \pi_{i, t}^0)}{\pi_{i, t}^0} + 2Y_{i, t}(h_i^1)Y_{i, t}(h_i^0)\right] \\
    + \frac{2}{n^2}\sum_{1 \leq i < j \leq n}\left[\frac{Y_{i, t}(h_i^1)Y_{j, t}(h_j^1)(\pi_{i, j, t}^{1, 1} - \pi_{i, t}^{1}\pi_{j, t}^1)}{\pi_{i, t}^1\pi_{j, t}^1} - \frac{Y_{i, t}(h_i^0)Y_{j, t}(h_j^1)(\pi_{i, j, t}^{0, 1} - \pi_{i, t}^{0}\pi_{j, t}^1)}{\pi_{i, t}^0\pi_{j, t}^1} \right.\\
    - \left.\frac{Y_{i, t}(h_i^1)Y_{j, t}(h_j^0)(\pi_{i, j, t}^{1, 0} - \pi_{i, t}^{1}\pi_{j, t}^0)}{\pi_{i, t}^1\pi_{j, t}^0} + \frac{Y_{i, t}(h_i^0)Y_{j, t}(h_j^0)(\pi_{i, j, t}^{0, 0} - \pi_{i, t}^{0}\pi_{j, t}^0)}{\pi_{i, t}^0\pi_{j, t}^0}\right] \label{equa: var_formula}
\end{split}
\end{equation}
\normalsize
As for $Cov(\hat{\tau}_t^{TE}, \hat{\tau}_{t^{'}}^{TE})$, if we let $\mathbb{P}(H_{i, t} = h_i^1, H_{i, t^{'}} = h_i^1) = \pi_{i, t, t^{'}}^{1, 1}$, $\mathbb{P}(H_{i, t} = h_i^0, H_{i, t^{'}} = h_i^1) = \pi_{i, t, t^{'}}^{0, 1}$, $\mathbb{P}(H_{i, t} = h_i^1, H_{i, t^{'}} = h_i^0) = \pi_{i, t, t^{'}}^{1, 0}$, $\mathbb{P}(H_{i, t} = h_i^0, H_{i, t^{'}} = h_i^0) = \pi_{i, t, t^{'}}^{0, 0}$ and $\mathbb{P}(H_{i, t} = h_i^1, H_{j, t^{'}} = h_j^1) = \pi_{i, t, j, t^{'}}^{1, 1}$, $\mathbb{P}(H_{i, t} = h_i^0, H_{j, t^{'}} = h_j^1) = \pi_{i, t, j, t^{'}}^{0, 1}$, $\mathbb{P}(H_{i, t} = h_i^1, H_{j, t^{'}} = h_j^0) = \pi_{i, t, j, t^{'}}^{1, 0}$, $\mathbb{P}(H_{i, t} = h_i^0, H_{j, t^{'}} = h_j^0) = \pi_{i, t, j, t^{'}}^{0, 0}$, then we have the following expression for $Cov(\hat{\tau}_t^{TE}, \hat{\tau}_{t^{'}}^{TE})$:
\small
\begin{multline}
    \frac{1}{n^2}\sum_{i = 1}^{n}\left[\frac{Y_{i, t}(h_i^1)Y_{i, t^{'}}(h_i^1)(\pi_{i, t, t^{'}}^{1, 1} - \pi_{i, t}^{1}\pi_{i, t^{'}}^1)}{\pi_{i, t}^{1}\pi_{i, t^{'}}^1} - 
    \frac{Y_{i, t}(h_i^0)Y_{i, t^{'}}(h_i^1)(\pi_{i, t, t^{'}}^{0, 1} - \pi_{i, t}^{0}\pi_{i, t^{'}}^1)}{\pi_{i, t}^{0}\pi_{i, t^{'}}^1} \right.\\
    - \left.\frac{Y_{i, t}(h_i^1)Y_{i, t^{'}}(h_i^0)(\pi_{i, t, t^{'}}^{1, 0} - \pi_{i, t}^{1}\pi_{i, t^{'}}^0)}{\pi_{i, t}^{1}\pi_{i, t^{'}}^0} + \frac{Y_{i, t}(h_i^0)Y_{i, t^{'}}(h_i^0)(\pi_{i, t, t^{'}}^{0, 0} - \pi_{i, t}^{0}\pi_{i, t^{'}}^0)}{\pi_{i, t}^{0}\pi_{i, t^{'}}^0}\right] \\
    + \frac{2}{n^2}\sum_{1 \leq i < j \leq n}\left[\frac{Y_{i, t}(h_i^1)Y_{j, t^{'}}(h_j^1)(\pi_{i, t, j, t^{'}}^{1,1} - \pi_{i, t}^{1}\pi_{j, t^{'}}^{1})}{\pi_{i, t}^{1}\pi_{j, t^{'}}^{1}} - \frac{Y_{i, t}(h_i^0)Y_{j, t^{'}}(h_j^1)(\pi_{i, t, j, t^{'}}^{0,1}
    - \pi_{i, t}^{0}\pi_{j, t^{'}}^{1})}{\pi_{i, t}^{0}\pi_{j, t^{'}}^{1}} \right.\\
    - \left.\frac{Y_{i, t}(h_i^1)Y_{j, t^{'}}(h_j^0)(\pi_{i, t, j, t^{'}}^{1,0} - \pi_{i, t}^{1}\pi_{j, t^{'}}^{0})}{\pi_{i, t}^{1}\pi_{j, t^{'}}^{0}}
    + \frac{Y_{i, t}(h_i^0)Y_{j, t^{'}}(h_j^0)(\pi_{i, t, j, t^{'}}^{0,0} - \pi_{i, t}^{0}\pi_{j, t^{'}}^{0})}{\pi_{i, t}^{0}\pi_{j, t^{'}}^{0}}\right] \label{equa: cov_formula}
\end{multline}
\normalsize
\end{proposition}
\begin{proof} [Proof of Proposition \ref{prop: var_and_cov_of_tauhat}]
This can be done by direct calculations.
\end{proof}

\begin{proof} [Proof of Proposition \ref{prop: msereduction, k=2}]
We'd like to have reduction in MSE by using $\hat{\tau}_t^c$. By the bias-variance decomposition and note that $\hat{\tau}_t^{TE}$ is unbiased, this boils down to
\begin{equation*}
    Var(\hat{\tau}_t^c) + |\mathbb{E}[\hat{\tau}^c_t] - \tau_t^{TE}|^2 \leq Var(\hat{\tau}_t^{TE})
\end{equation*}
By Proposition \ref{prop: k=2cvxbias}, it suffices to have
\begin{equation*}
    Var(\hat{\tau}_t^c) + 4(1 - \alpha)^2\epsilon^2 \leq Var(\hat{\tau}_t^{TE}),
\end{equation*}
which is further equivalent to
\begin{equation}
\begin{split}
    \alpha^2 Var(\hat{\tau}_t^{TE}) + (1 - \alpha)^2 Var(\hat{\tau}_{t - 1}^{TE}) \\
    + 2\alpha(1 - \alpha)Cov(\hat{\tau}_t^{TE}, \hat{\tau}_{t-1}^{TE}) + 4(1 - \alpha)^2\epsilon^2 \\
    \leq Var(\hat{\tau}_t^{TE}) \label{varineq}
\end{split}
\end{equation}
Rewrite (\ref{varineq}), we have
\begin{multline}
    \left(4\epsilon^2 + Var(\hat{\tau}_t^{TE}) + Var(\hat{\tau}_{t - 1}^{TE}) - 2Cov(\hat{\tau}_t^{TE}, \hat{\tau}_{t-1}^{TE})\right)\alpha^2 \\
    - \left(8\epsilon^2 + 2Var(\hat{\tau}_{t - 1}^{TE}) - 2Cov(\hat{\tau}_t^{TE}, \hat{\tau}_{t-1}^{TE})\right)\alpha\\
    + \left(4\epsilon^2 + Var(\hat{\tau}_{t - 1}^{TE}) - Var(\hat{\tau}_{t}^{TE})\right) \leq 0 \label{msereducineq}
\end{multline}
Now we look at the left hand side of (\ref{msereducineq}), which is quadratic in $\alpha$. To ease notations, let $A = Var(\hat{\tau}_t^{TE})$, $B = Var(\hat{\tau}_{t - 1}^{TE})$ and $C = Cov(\hat{\tau}_t^{TE}, \hat{\tau}_{t-1}^{TE})$. It's easy to see that the left hand side achieves its minimum at $\alpha = \delta = 1 - \frac{2(A - C)}{8\epsilon^2 + 2A + 2B - 4C}$ and is 0 at $\alpha = 1$. So if we have $\delta < 1$, then for some $\alpha \in (0, 1)$, we have reduction in MSE. Moreover, if $\delta < \frac{1}{2}$, we then know that for $\alpha = \frac{1}{2}$, we also have smaller MSE by the property of quadratic functions. And simple algebra shows that $\delta < \frac{1}{2}$ is equivalent to $A - B > 4\epsilon^2$.
\end{proof}

\begin{proposition} [Estimators of variance]
\label{prop: totaleffectvarest}
We define two estimators of the variance:
\small
\begin{multline}
    \widehat{Var}^u(\hat{\tau}_t^{TE}) = \frac{1}{n^2}\sum_{i = 1}^{n}\left[\mathbf{1}(H_{i, t} = h_i^1)(1 - \pi_{i, t}^1)\left(\frac{Y_{i, t}}{\pi_{i, t}^1}\right)^2
    + \mathbf{1}(H_{i, t} = h_i^0)(1 - \pi_{i, t}^0)\left(\frac{Y_{i, t}}{\pi_{i, t}^0}\right)^2\right. \\
    + \left.\frac{Y_{i, t}^2}{\pi_{i, t}^1}\mathbf{1}(H_{i, t} = h_i^1) + \frac{Y_{i, t}^2}{\pi_{i, t}^0}\mathbf{1}(H_{i, t} = h_i^0)\right] \\
    + \frac{2}{n^2}\sum_{1 \leq i < j \leq n}\left[\mathbf{1}(\pi_{i,j,t}^{1,1} \neq 0)\mathbf{1}(H_{i, t} = h_i^1)\mathbf{1}(H_{j, t} = h_j^1)\frac{(\pi_{i, j, t}^{1, 1} - \pi_{i, t}^1\pi_{j, t}^1)Y_{i, t}Y_{j, t}}{\pi_{i, t}^1\pi_{j, t}^1\pi_{i, j, t}^{1, 1}} \right.\\
    - \left(\mathbf{1}(\pi_{i, j, t}^{0, 1} \neq 0)\mathbf{1}(H_{i, t} = h_i^0)\mathbf{1}(H_{j, t} = h_j^1)\frac{(\pi_{i, j, t}^{0, 1} - \pi_{i, t}^0\pi_{j, t}^1)Y_{i, t}Y_{j, t}}{\pi_{i, t}^0\pi_{j, t}^1\pi_{i, j, t}^{0, 1}} \right.\\
    \left.-\mathbf{1}(\pi_{i, j, t}^{0, 1} = 0)\left(\frac{\mathbf{1}(H_{i, t} = h_i^0)Y_{i, t}^2}{2\pi_{i, t}^{0}} + \frac{\mathbf{1}(H_{j, t} = h_j^1)Y_{j, t}^2}{2\pi_{j, t}^1}\right)\right) \\
    - \left(\mathbf{1}(\pi_{i, j, t}^{1, 0} \neq 0)\mathbf{1}(H_{i, t} = h_i^1)\mathbf{1}(H_{j, t} = h_j^0)
    \times\frac{(\pi_{i, j, t}^{1, 0} - \pi_{i, t}^1\pi_{j, t}^0)Y_{i, t}Y_{j, t}}{\pi_{i, t}^1\pi_{j, t}^0\pi_{i, j, t}^{1, 0}} \right.\\
    \left.-\mathbf{1}(\pi_{i, j, t}^{1, 0} = 0)\left(\frac{\mathbf{1}(H_{i, t} = h_i^1)Y_{i, t}^2}{2\pi_{i, t}^{1}} + \frac{\mathbf{1}(H_{j, t} = h_j^0)Y_{j, t}^2}{2\pi_{j, t}^0}\right)\right) \\
    \left.+ \mathbf{1}(\pi_{i, j, t}^{0, 0} \neq 0)\frac{\mathbf{1}(H_{i, t} = h_i^0)\mathbf{1}(H_{j, t} = h_j^0)(\pi_{i, j, t}^{0, 0} - \pi_{i, t}^0\pi_{j, t}^0)Y_{i, t}Y_{j, t}}{\pi_{i, t}^0\pi_{j, t}^0\pi_{i, j, t}^{0, 0}}\right]
\end{multline}
\normalsize
and
\small
\begin{multline}
    \widehat{Var}^d(\hat{\tau}_t^{TE}) = \frac{1}{n^2}\sum_{i = 1}^{n}\left[\mathbf{1}(H_{i, t} = h_i^1)(1 - \pi_{i, t}^1)\left(\frac{Y_{i, t}}{\pi_{i, t}^1}\right)^2 + \mathbf{1}(H_{i, t} = h_i^0)(1 - \pi_{i, t}^0)\left(\frac{Y_{i, t}}{\pi_{i, t}^0}\right)^2 \right] \\
    + \frac{2}{n^2}\sum_{1 \leq i < j \leq n}
    \left[\left(\mathbf{1}(\pi_{i, j, t}^{1, 1} \neq 0)\frac{\mathbf{1}(H_{i, t} = h_i^1)\mathbf{1}(H_{j, t} = h_j^1)(\pi_{i, j, t}^{1, 1} - \pi_{i, t}^1\pi_{j, t}^1)Y_{i, t}Y_{j, t}}{\pi_{i, t}^1\pi_{j, t}^1\pi_{i, j, t}^{1, 1}}\right.\right. \\
    \left.-\mathbf{1}(\pi_{i, j, t}^{1, 1} = 0)\left(\frac{\mathbf{1}(H_{i, t} = h_i^1)Y_{i, t}^2}{2\pi_{i, t}^{1}} + \frac{\mathbf{1}(H_{j, t} = h_j^1)Y_{j, t}^2}{2\pi_{j, t}^1}\right)\right) \\
    \left.- \mathbf{1}(\pi_{i,j,t}^{0, 1} \neq 0)\frac{\mathbf{1}(H_{i, t} = h_i^0)\mathbf{1}(H_{j, t} = h_j^1)(\pi_{i, j, t}^{0, 1} - \pi_{i, t}^0\pi_{j, t}^1)Y_{i, t}Y_{j, t}}{\pi_{i, t}^0\pi_{j, t}^1\pi_{i, j, t}^{0, 1}}\right. \\
    \left.- \mathbf{1}(\pi_{i,j,t}^{1, 0} \neq 0)\frac{\mathbf{1}(H_{i, t} = h_i^1)\mathbf{1}(H_{j, t} = h_j^0)(\pi_{i, j, t}^{1, 0} - \pi_{i, t}^1\pi_{j, t}^0)Y_{i, t}Y_{j, t}}{\pi_{i, t}^1\pi_{j, t}^0\pi_{i, j, t}^{1, 0}} \right. \\
    + \left.\left(\mathbf{1}(\pi_{i, j, t}^{0, 0} \neq 0)\frac{\mathbf{1}(H_{i, t} = h_i^0)\mathbf{1}(H_{j, t} = h_j^0)(\pi_{i, j, t}^{0, 0} - \pi_{i, t}^0\pi_{j, t}^0)Y_{i, t}Y_{j, t}}{\pi_{i, t}^0\pi_{j, t}^0\pi_{i, j, t}^{0, 0}}\right.\right. \\
    \left.-\mathbf{1}(\pi_{i, j, t}^{0, 0} = 0)\left(\frac{\mathbf{1}(H_{i, t} = h_i^0)Y_{i, t}^2}{2\pi_{i, t}^{0}} + \frac{\mathbf{1}(H_{j, t} = h_j^0)Y_{j, t}^2}{2\pi_{j, t}^0}\right)\right].
\end{multline}
\normalsize
Assuming all the potential outcomes are non-negative, we then have that
\begin{equation*}
    \mathbb{E}\left[\widehat{Var}^u(\hat{\tau}_t^{TE})\right] \geq Var(\hat{\tau}_t^{TE})
\end{equation*}
and 
\begin{equation*}
    \mathbb{E}\left[\widehat{Var}^d(\hat{\tau}_t^{TE})\right] \leq Var(\hat{\tau}_t^{TE}).
\end{equation*}
\end{proposition}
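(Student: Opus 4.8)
The plan is to compute $\mathbb{E}[\widehat{Var}^u(\hat{\tau}_t^{TE})]$ and $\mathbb{E}[\widehat{Var}^d(\hat{\tau}_t^{TE})]$ term by term and compare them against the exact expression for $\text{Var}(\hat{\tau}_t^{TE})$ in \eqref{equa: var_formula} of Proposition~\ref{prop: var_and_cov_of_tauhat}. The only tools needed are linearity of expectation together with $\mathbb{E}[\mathbf{1}(H_{i,t}=h_i^a)]=\pi_{i,t}^a$ and $\mathbb{E}[\mathbf{1}(H_{i,t}=h_i^a)\mathbf{1}(H_{j,t}=h_j^b)]=\pi_{i,j,t}^{a,b}$, and the observed-outcome identity $\mathbf{1}(H_{i,t}=h_i^a)\,Y_{i,t}=\mathbf{1}(H_{i,t}=h_i^a)\,Y_{i,t}(h_i^a)$, which lets one replace observed outcomes by the relevant potential outcomes inside any indicator. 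This mirrors the variance-estimation argument of \cite{aronow2017}.

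First I would dispatch the identifiable pieces. For every ordered pair $(i,j)$ with $i\neq j$ and every $(a,b)\in\{0,1\}^2$ with $\pi_{i,j,t}^{a,b}>0$, the corresponding summand in either estimator has the form $\mathbf{1}(H_{i,t}=h_i^a)\mathbf{1}(H_{j,t}=h_j^b)\,(\pi_{i,j,t}^{a,b}-\pi_{i,t}^a\pi_{j,t}^b)\,Y_{i,t}Y_{j,t}\,/\,(\pi_{i,t}^a\pi_{j,t}^b\pi_{i,j,t}^{a,b})$; taking expectations and using the identity above, its expectation is exactly $(\pi_{i,j,t}^{a,b}-\pi_{i,t}^a\pi_{j,t}^b)\,Y_{i,t}(h_i^a)Y_{j,t}(h_j^b)\,/\,(\pi_{i,t}^a\pi_{j,t}^b)$, i.e.\ precisely the matching term of \eqref{equa: var_formula}. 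The diagonal ``$(1-\pi)Y^2/\pi$'' terms match in the same way. Hence both $\mathbb{E}[\widehat{Var}^u]-\text{Var}(\hat{\tau}_t^{TE})$ and $\mathbb{E}[\widehat{Var}^d]-\text{Var}(\hat{\tau}_t^{TE})$ collapse to a sum of discrepancies coming only from the two sources of non-identifiability: the diagonal cross term $\tfrac{2}{n^2}\sum_i Y_{i,t}(h_i^1)Y_{i,t}(h_i^0)$, and the pairs $(i,j)$ and labels $(a,b)$ with $\pi_{i,j,t}^{a,b}=0$.

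The final step is to sign these discrepancies, and this is exactly where non-negativity of the potential outcomes enters. On the diagonal, $\widehat{Var}^u$ replaces $\tfrac{2}{n^2}\sum_i Y_{i,t}(h_i^1)Y_{i,t}(h_i^0)$ by a quantity of expectation $\tfrac1{n^2}\sum_i\bigl(Y_{i,t}^2(h_i^1)+Y_{i,t}^2(h_i^0)\bigr)$, which dominates it by AM--GM; $\widehat{Var}^d$ simply omits it, which lowers the value because $Y_{i,t}(h_i^1)Y_{i,t}(h_i^0)\ge 0$. For a pair with $\pi_{i,j,t}^{a,b}=0$ one has $\pi_{i,j,t}^{a,b}-\pi_{i,t}^a\pi_{j,t}^b=-\pi_{i,t}^a\pi_{j,t}^b$, so the corresponding summand of \eqref{equa: var_formula} equals $-Y_{i,t}(h_i^a)Y_{j,t}(h_j^b)$ for $(a,b)\in\{(1,1),(0,0)\}$ and $+Y_{i,t}(h_i^a)Y_{j,t}(h_j^b)$ for $(a,b)\in\{(0,1),(1,0)\}$. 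In $\widehat{Var}^u$ the $(1,1)$ and $(0,0)$ zero-probability terms are dropped (replacing a non-positive true term by $0$) while the $(0,1)$ and $(1,0)$ zero-probability terms are replaced by a half-sum of expectation $\tfrac12\bigl(Y_{i,t}^2(h_i^a)+Y_{j,t}^2(h_j^b)\bigr)\ge Y_{i,t}(h_i^a)Y_{j,t}(h_j^b)$; every discrepancy is $\ge 0$, giving $\mathbb{E}[\widehat{Var}^u]\ge\text{Var}(\hat{\tau}_t^{TE})$. The construction of $\widehat{Var}^d$ is the mirror image --- drop the $(0,1)$ and $(1,0)$ zero-probability terms (true value non-negative) and replace the $(1,1)$ and $(0,0)$ ones by minus that same half-sum (which is $\le$ the true value) --- so every discrepancy is $\le 0$ and $\mathbb{E}[\widehat{Var}^d]\le\text{Var}(\hat{\tau}_t^{TE})$.

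I expect the main obstacle to be purely organizational: carefully parsing the long defining expressions for $\widehat{Var}^u$ and $\widehat{Var}^d$ and tracking, for each of the four exposure-pair types, which outer sign it carries in \eqref{equa: var_formula}, so that each term-dropping or AM--GM replacement is applied in the direction that makes the one-sided inequality go the right way. No inequality beyond AM--GM and the non-negativity of a product of non-negative reals is required; the role of the hypothesis $Y_{(l,q),t}(k)\ge 0$ is precisely to sign the unidentifiable cross-products $Y_{i,t}(h_i^1)Y_{i,t}(h_i^0)$ and $Y_{i,t}(h_i^a)Y_{j,t}(h_j^b)$, without which neither the conservative nor the anti-conservative bound would hold.
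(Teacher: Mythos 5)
Your proposal is correct and is exactly the argument the paper leaves implicit for this proposition (which is stated without a written proof, deferring to the \citet{aronow2017}-style bounding of unobservable products): take expectations term by term against \eqref{equa: var_formula} of Proposition~\ref{prop: var_and_cov_of_tauhat}, note that all identifiable terms match exactly, and sign the remaining discrepancies --- the diagonal cross term $\tfrac{2}{n^2}\sum_i Y_{i,t}(h_i^1)Y_{i,t}(h_i^0)$ and the zero-joint-probability pairs --- via $a^2+b^2\ge 2ab$ together with non-negativity of the potential outcomes, with the sign bookkeeping you describe (drop vs.\ half-sum replacement for each exposure-pair type) being the only delicate point. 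No gap; your identification of where non-negativity is genuinely needed (to sign dropped cross-products, not for the AM--GM replacements) is accurate.
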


\begin{proposition} [Estimator of the covariance]
\label{prop: totaleffectcovest}
We have the following unbiased estimator of $Cov(\hat{\tau}_t^{TE}, \hat{\tau}_{t^{'}}^{TE})$:
\small
\begin{multline*}
    \widehat{Cov}(\hat{\tau}_t^{TE}, \hat{\tau}_{t^{'}}^{TE}) = \frac{1}{n^2}\sum_{i = 1}^{n}
    \left[\frac{\mathbf{1}(H_{i, t} = h_i^1)\mathbf{1}(H_{i, t^{'}} = h_i^1)Y_{i, t}Y_{i, t^{'}}(\pi_{i, t, t^{'}}^{1, 1} - \pi_{i, t}^{1}\pi_{i, t^{'}}^1)}{\pi_{i, t, t^{'}}^{1, 1}\pi_{i, t}^{1}\pi_{i, t^{'}}^1} \right.\\
    - 
    \left. \frac{\mathbf{1}(H_{i, t} = h_i^0)\mathbf{1}(H_{i, t^{'}} = h_i^1)Y_{i, t}Y_{i, t^{'}}(\pi_{i, t, t^{'}}^{0, 1} - \pi_{i, t}^{0}\pi_{i, t^{'}}^1)}{\pi_{i, t, t^{'}}^{0, 1}\pi_{i, t}^{0}\pi_{i, t^{'}}^1}\right. \\
    - \frac{\mathbf{1}(H_{i, t} = h_i^1)\mathbf{1}(H_{i, t^{'}} = h_i^0)Y_{i, t}Y_{i, t^{'}}(\pi_{i, t, t^{'}}^{1, 0} - \pi_{i, t}^{1}\pi_{i, t^{'}}^0)}{\pi_{i, t, t^{'}}^{1, 0}\pi_{i, t}^{1}\pi_{i, t^{'}}^0}\\ + 
    \left.\frac{\mathbf{1}(H_{i, t} = h_i^0)\mathbf{1}(H_{i, t^{'}} = h_i^0)Y_{i, t}Y_{i, t^{'}}(\pi_{i, t, t^{'}}^{0, 0} - \pi_{i, t}^{0}\pi_{i, t^{'}}^0)}{\pi_{i, t, t^{'}}^{0, 0}\pi_{i, t}^{0}\pi_{i, t^{'}}^0}\right] \\
    + \frac{2}{n^2}\sum_{1 \leq i < j \leq n}
    \left[\frac{\mathbf{1}(H_{i,t} = h_i^1)\mathbf{1}(H_{j, t^{'}} = h_j^1)Y_{i, t}Y_{j, t^{'}}(\pi_{i, t, j, t^{'}}^{1,1} - \pi_{i, t}^{1}\pi_{j, t^{'}}^{1})}{\pi_{i, t, j, t^{'}}^{1,1}\pi_{i, t}^{1}\pi_{j, t^{'}}^{1}} \right.\\
    - \frac{\mathbf{1}(H_{i,t} = h_i^0)\mathbf{1}(H_{j, t^{'}} = h_j^1)Y_{i, t}Y_{j, t^{'}}(\pi_{i, t, j, t^{'}}^{0,1} - \pi_{i, t}^{0}\pi_{j, t^{'}}^{1})}{\pi_{i, t, j, t^{'}}^{0,1}\pi_{i, t}^{0}\pi_{j, t^{'}}^{1}} \\
    - \frac{\mathbf{1}(H_{i,t} = h_i^1)\mathbf{1}(H_{j, t^{'}} = h_j^0)Y_{i, t}Y_{j, t^{'}}(\pi_{i, t, j, t^{'}}^{1,0} - \pi_{i, t}^{1}\pi_{j, t^{'}}^{0})}{\pi_{i, t, j, t^{'}}^{1,0}\pi_{i, t}^{1}\pi_{j, t^{'}}^{0}} \\
    + \left.\frac{\mathbf{1}(H_{i,t} = h_i^0)\mathbf{1}(H_{j, t^{'}} = h_j^0)Y_{i, t}Y_{j, t^{'}}(\pi_{i, t, j, t^{'}}^{0,0} - \pi_{i, t}^{0}\pi_{j, t^{'}}^{0})}{\pi_{i, t, j, t^{'}}^{0,0}\pi_{i, t}^{0}\pi_{j, t^{'}}^{0}}\right]
\end{multline*}
\end{proposition}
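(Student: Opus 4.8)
The plan is a direct term-by-term expectation calculation, matching each summand of $\widehat{Cov}(\hat{\tau}_t^{TE},\hat{\tau}_{t'}^{TE})$ against the corresponding summand of the covariance formula \eqref{equa: cov_formula} in Proposition~\ref{prop: var_and_cov_of_tauhat}, which I take as given (here $t\neq t'$, so that $\pi^{a,b}_{i,t,t'}$ and the diagonal indicator products are the genuinely cross-temporal objects). The two elementary facts that do all the work are: the pointwise identity $\mathbf{1}(H_{i,t}=h_i^a)\,Y_{i,t}=\mathbf{1}(H_{i,t}=h_i^a)\,Y_{i,t}(h_i^a)$ for $a\in\{0,1\}$ (on the event where the indicator fires, the observed outcome is the relevant potential outcome), and the Horvitz--Thompson identity $\mathbb{E}\!\left[\mathbf{1}(H_{i,t}=h_i^a)\mathbf{1}(H_{j,t'}=h_j^b)\right]=\pi^{a,b}_{i,t,j,t'}$, with the understanding that for $j=i$ this reads $\pi^{a,b}_{i,t,t'}$.

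Concretely, I would compute the expectation of each summand of $\widehat{Cov}$ in isolation. For the diagonal ($i=j$) block, take the first summand $\frac{\mathbf{1}(H_{i,t}=h_i^1)\mathbf{1}(H_{i,t'}=h_i^1)Y_{i,t}Y_{i,t'}\,(\pi_{i,t,t'}^{1,1}-\pi_{i,t}^1\pi_{i,t'}^1)}{\pi_{i,t,t'}^{1,1}\pi_{i,t}^1\pi_{i,t'}^1}$; the pointwise identity replaces $Y_{i,t}Y_{i,t'}$ by $Y_{i,t}(h_i^1)Y_{i,t'}(h_i^1)$ on the event in question, and taking expectations the factor $\pi_{i,t,t'}^{1,1}$ in the numerator cancels the denominator, leaving $\frac{Y_{i,t}(h_i^1)Y_{i,t'}(h_i^1)(\pi_{i,t,t'}^{1,1}-\pi_{i,t}^1\pi_{i,t'}^1)}{\pi_{i,t}^1\pi_{i,t'}^1}$, which is precisely the first diagonal term of \eqref{equa: cov_formula}; the remaining three diagonal summands and their signs match identically. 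The off-diagonal ($i<j$, prefactor $2/n^2$) block is handled the same way, with $\pi^{a,b}_{i,t,j,t'}$ in place of $\pi^{a,b}_{i,t,t'}$: after the cancellation each of the four summands reduces to the corresponding off-diagonal term of \eqref{equa: cov_formula}. Summing over $i$, over $i<j$, and over the four sign patterns yields $\mathbb{E}[\widehat{Cov}(\hat{\tau}_t^{TE},\hat{\tau}_{t'}^{TE})]=Cov(\hat{\tau}_t^{TE},\hat{\tau}_{t'}^{TE})$.

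The calculation is entirely mechanical; the only point that deserves a word of care is that every joint exposure probability appearing in a denominator must be strictly positive, so that the Horvitz--Thompson ratios are well defined and the cancellations are legitimate. Unlike the variance estimators of Proposition~\ref{prop: totaleffectvarest}, no square-sum correction term and no non-negativity hypothesis on the potential outcomes is needed here, since the estimator is exactly unbiased rather than merely conservative; I would therefore record as a standing assumption that the relevant $\pi^{a,b}_{i,t,t'}$ and $\pi^{a,b}_{i,t,j,t'}$ are bounded away from $0$ (automatic, for instance, under Assumption~\ref{assump: overlap} with temporally independent assignments, where these joint probabilities factor into products of marginals, or one can insert the $\mathbf{1}(\pi^{a,b}\neq 0)$ indicators and paired corrections of Proposition~\ref{prop: totaleffectvarest} to avoid it). I expect the bookkeeping of the eight terms and this positivity proviso to be the only things worth spelling out; there is no analytic obstacle.
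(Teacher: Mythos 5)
Your proposal is correct and follows essentially the route the paper leaves implicit: the paper states this proposition without proof (and proves the companion covariance formula in Proposition~\ref{prop: var_and_cov_of_tauhat} only ``by direct calculations''), and your term-by-term expectation matching via $\mathbf{1}(H_{i,t}=h_i^a)Y_{i,t}=\mathbf{1}(H_{i,t}=h_i^a)Y_{i,t}(h_i^a)$ and $\mathbb{E}\bigl[\mathbf{1}(H_{i,t}=h_i^a)\mathbf{1}(H_{j,t'}=h_j^b)\bigr]=\pi^{a,b}_{i,t,j,t'}$ is exactly that calculation. Your explicit positivity proviso on the joint exposure probabilities in the denominators is a reasonable (and slightly more careful) addition that the paper only assumes implicitly.
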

\normalsize

Proving Theorem~\ref{thm: mixedgeneral} relies on results in $m$-dependence central limit theorem. We need the following result as a lemma.
\begin{lemma}
\label{lemma: romanowolf}
Let $\{X_{n, i}\}$ be a triangular array of mean zero random variables. For each $n = 1, 2, \cdots$ let $d = d_n$, and suppose $X_{n, 1}, \cdots, X_{n, d}$ is an $m$-dependent sequence of random variables for some $m \in \mathbb{N}$. Define
\begin{equation*}
    B_{n, k, a}^2 = \text{Var}\left(\sum_{i = a}^{a + k - 1}X_{n, i}\right),
    B_n^2 = B_{n, d, 1} = \text{Var}\left(\sum_{i = 1}^{d}X_{n, i}\right).
\end{equation*}
Assume the following conditions hold. For some $\delta > 0$, $-1 \leq \gamma < 1$ and $g = g_n > 2m$ is such that $\frac{m}{g} \rightarrow 0$:
\begin{equation}
    \mathbb{E}|X_{n, i}|^{2 + \delta} \leq \Delta_n \,\, \text{for all} \,\, i,
\end{equation}
\begin{equation}
    B_{n, k, a}^2/(k^{1 + \gamma}) \leq K_n \,\, \text{for all} \,\, a \,\, \text{and for all}\,\, k \geq m,
\end{equation}
\begin{equation}
    B_n^2/(dm^\gamma) \geq L_n,
\end{equation}
\begin{equation}
\frac{K_n}{L_n}\cdot\frac{m}{g} \rightarrow 0, \label{romanocondition4}
\end{equation}
\begin{equation} \label{romanocondition5}
\frac{K_n}{L_n}\cdot\left(\frac{m}{g}\right)^{(1 - \gamma)/2} \rightarrow 0,
\end{equation}
\begin{equation}
    \Delta_nL_n^{-(2 + \delta)/2}g^{\delta/2 + (1 - \gamma)(2 + \delta)/2}d^{-\delta/2}\left(\frac{m}{g}\right)^{(1 - \gamma)(2 + \delta)/2} \rightarrow 0. \label{romanocondition6}
\end{equation}
Then, $B_n^{-1}(X_{n, 1} + \cdots X_{n, d}) \xrightarrow{d} \mathcal{N}(0, 1)$.
\end{lemma}
\begin{proof}
This is essentially Theorem 2.1 in \cite{ROMANO2000115}. We replace the original conditions 4, 5 and 6 by the last three conditions. In fact, the last three conditions are needed to establish the theorem and the conditions 4, 5 and 6 in Theorem 2.1 in \cite{ROMANO2000115} are sufficient conditions.
\end{proof}
Now, we are ready to prove Theorem~\ref{thm: mixedgeneral}.
\begin{proof} [Proof of Theorem~\ref{thm: mixedgeneral}]
We define $\tilde{\tau}_{i, t} = \frac{\mathbf{1}(H_{i,t} = k)}{\mathbb{P}(H_{i, t} = k)}Y_{i, t} - \frac{\mathbf{1}(H_{i, t} = k')}{\mathbb{P}(H_{i, t} = k')}Y_{i, t} = \frac{\mathbf{1}(H_{i,t} = k)}{\mathbb{P}(H_{i, t} = k)}Y_{i, t}(k) - \frac{\mathbf{1}(H_{i, t} = k')}{\mathbb{P}(H_{i, t} = k')}Y_{i, t}(k')$. Then the ATEC can be written as
\begin{equation*}
    \hat{\bar{\tau}}^{k, k^{'}} = \sum_{i=1}^{n}\sum_{t=1}^T\frac{1}{nT}\tilde{\tau}_{i, t}.
\end{equation*}
Similarly, we define $\tau_{i, t} = Y_{i, t}(k) - Y_{i, t}(k')$, which is the true individual exposure contrast. Now,
\begin{equation*}
    \sqrt{nT}(\hat{\bar{\tau}}^{k, k^{'}} - \bar{\tau}^{k, k^{'}}) = \sum_{i=1}^{n}\sum_{t=1}^T\frac{1}{\sqrt{nT}}(\tilde{\tau}_{i, t} - \tau_{i, t}).
\end{equation*}
To proceed, we let $X_{n, i, t} = \frac{1}{\sqrt{nT}}(\tilde{\tau}_{i, t} - \tau_{i, t})$. We view $\{X_{n, i, t}\}$ as a single sequence of random variables by enumerating $X_{n, i, t}$ following the order $X_{n, 1, 1}, \cdots, X_{n, 1, T}, X_{n, 2, 1}, \cdots, X_{n, 2, T}, \cdots, X_{n, n, T}$. Using the language in the lemma, $d = nT$. Since $\{H_{i, t}\}_{i=1}^{n}$ is a sequence of $s$-dependent random variables and $X_{n, i, t}$ is a function of $H_{i, t}$, we know that $\{X_{n, i, t}\}$ is a $sT$-dependent sequence of random variables. In other words, $m = sT$ in the above lemma. Note that $|X_{n, i, t}| \leq \frac{C_1}{\sqrt{nT}}$ by uniform boundedness of potential outcomes. Hence, for any $\delta > 0$, $\Delta_n = C_2(nT)^{-1-\delta/2}$. Now, we calculate $B_{n, k, a}^2$ and $B_n^2$. We start with $B_{n, k, a}^2$. For all $(i_1, t_1)$ and $k \geq m$, let $(i_2, t_2)$ be the index such that when we order $X$'s there are exactly $k$ indices from $(i_1, t_1)$ to $(i_2, t_2)$.
\begin{align*}
    B_{n, k, a}^2 &= \text{Var}\left(\sum_{(i, t) = (i_1, t_1)}^{(i, t) = (i_2, t_2)}X_{n, i, t}\right) \\
    &= \frac{1}{nT} \text{Var}\left(\sum_{(i, t) = (i_1, t_1)}^{(i, t) = (i_2, t_2)}\tilde{\tau}_{i, t}\right) \\
    &= \frac{1}{nT}\left[\sum_{(i, t) = (i_1, t_1)}^{(i, t) = (i_2, t_2)}\text{Var}(\tilde{\tau}_{i, t}) + 2\sum_{(u, v) \neq (p, q)}\text{Cov}(\tilde{\tau}_{u, v}, \tilde{\tau}_{p, q})\right]
\end{align*}
Since $k \geq m = sT$, we know that at most $mk$ covariance terms are non-zero. Given uniform boundedness of potential outcomes and overlap, all the variance and covariance terms are upper bounded by constants $M_1 > 0$ and $M_2 > 0$ respectively. Hence,
\begin{equation*}
    B_{n, k, a}^2 \leq \frac{1}{nT}(kM_1 + 2mkM_2) \leq M_3\frac{mk}{nT} = M_3\frac{sk}{n}.
\end{equation*}
Therefore,
\begin{equation*}
    B_{n, k, a}^2/k \leq M_3\frac{sk}{n}/k = M_3\frac{s}{n} = K_n.
\end{equation*}
Now we look at $B_n^2$. By Assumption~\ref{assump: nonvanishvarmixed}, $\text{Var}(\sqrt{nT}\hat{\bar{\tau}}^{k, k^{'}}) \geq \epsilon > 0$, hence, for sufficiently large $n$,
\begin{equation*}
    B_n^2 = \text{Var}(\sqrt{nT}\hat{\bar{\tau}}^{k, k^{'}}) \geq \epsilon > 0,
\end{equation*}
and
\begin{equation*}
    B_n^2/d = B_n^2/(nT) \geq \epsilon/(nT) = L_n.
\end{equation*}
We let $\gamma = 0, \delta = 2$. Pick $g = g_n = s^3T^3n^{\alpha}$. With such $g$, $m/g$ obviously goes to 0. Now,
\begin{equation*}
    \frac{K_n}{L_n}\cdot\frac{m}{g} = \epsilon M_3 sT\cdot\frac{1}{s^2T^2n^{\alpha}} \rightarrow 0,
\end{equation*}
\begin{equation*}
    \frac{K_n}{L_n}\cdot\left(\frac{m}{g}\right)^{(1 - \gamma)/2} = \epsilon M_3 sT\cdot\frac{1}{sTn^{0.5\alpha}} \rightarrow 0,
\end{equation*}
\begin{align*}
    &\Delta_nL_n^{-(2 + \delta)/2}g^{\delta/2 + (1 - \gamma)(2 + \delta)/2}d^{-\delta/2}\left(\frac{m}{g}\right)^{(1 - \gamma)(2 + \delta)/2} \\
    &= C_2(nT)^{-1-\delta/2}\epsilon^{-(2 + \delta)/2}(nT)^{(2 + \delta)/2}g^{\delta/2}(nT)^{-\delta/2}(sT)^{1 + \delta/2} \\
    &= C_4gs^2T/n \qquad \text{when } \delta = 2 \\
    &= C_4s^5T^4n^{\alpha}/n.
\end{align*}
Since $s^5T^4 = o(n^{1-\alpha})$, $s^5T^4n^{\alpha} = o(n)$ and hence $\Delta_nL_n^{-(2 + \delta)/2}g^{\delta/2 + (1 - \gamma)(2 + \delta)/2}d^{-\delta/2}\left(\frac{m}{g}\right)^{(1 - \gamma)(2 + \delta)/2} = o(1)$. Having checked all the conditions, by Lemma~\ref{lemma: romanowolf}, we are done.
\end{proof}

\begin{proof} [Proof of Theorem \ref{thm: groupclt}]
As in the above proof, we check the six conditions in Lemma~\ref{lemma: romanowolf} are satisfied with $\gamma = 0$ and $\delta = 2$. Note that since now $X_{n, i, t}$ and $X_{n, j, t}$ are correlated if and only if $i$ and $j$ are in the same group, we can reorder $X_{n, i, t}$'s as follows:
\begin{equation*}
    X_{n, 1, 1}, \cdots, X_{n, r, 1}, X_{n, 1, 2}, \cdots, X_{n, r, 2}, \cdots, X_{n, r, T}, X_{n, r+1, 1}, \cdots, X_{n, nr, T}.
\end{equation*}
Now, this sequence is actually $(2r)$-dependent, i.e., $m = 2r, s = r$. Then
\begin{equation*}
    K_n = M_4/(nT), \quad L_n = \epsilon/(nrT).
\end{equation*}
Hence $K_n/L_n = M_5r$. Pick $g = g_n$ such that $g \rightarrow \infty$ and $g = (nT)^{3/4}$. Then with $r = o((nT)^{\frac{1}{4}})$, $r^2/g \rightarrow 0$ and $r^3/g \rightarrow 0$.
\begin{equation*}
    \frac{K_n}{L_n}\cdot\frac{m}{g} = M_5r\cdot\frac{2r}{g} \rightarrow 0,
\end{equation*}
\begin{equation*}
    \frac{K_n}{L_n}\cdot\left(\frac{m}{g}\right)^{(1 - \gamma)/2} = M_5r\cdot\sqrt{\frac{2r}{g}} \rightarrow 0
\end{equation*}
and
\begin{align*}
    &\Delta_nL_n^{-(2 + \delta)/2}g^{\delta/2 + (1 - \gamma)(2 + \delta)/2}d^{-\delta/2}\left(\frac{m}{g}\right)^{(1 - \gamma)(2 + \delta)/2} \\
    &= M_6g^{3}(nrT)^{-1}\left(\frac{r}{g}\right)^{2} \\
    &= M_6rg/(nT) \\
    &= o(nT)/(nT) \rightarrow 0
\end{align*}
Hence all the conditions are satisfied. It is also easy to see that instead of just 2 time steps, any finite $p$ time steps would work.
\end{proof}

\begin{proof} [Proof of Proposition \ref{prop: householdclt-ci}]
Let $X_{n, t} = \sqrt{\frac{nr}{T}}(\hat{\tau}^{k, k^{'}}_{t} - \tau^{k, k^{'}}_{t})$. The key ingredients are the following two expressions:
\begin{multline}
    \text{Var}(X_{n, t}) = \frac{1}{nrT}\left[\sum_{l = 1}^{n}\sum_{q = 1}^{r}(2^{2r} - 1)Y_{(l, q), t}(k)^2 \right.\\
    + \sum_{l = 1}^{n}\sum_{q = 1}^{r}(2^{2r} - 1)Y_{(l, q), t}(k^{'})^2 + 2\sum_{l = 1}^n\sum_{q = 1}^{r}Y_{(l, q), t}(k)Y_{(l, q), t}(k^{'})\\
    + \sum_{l = 1}^{n}\sum_{q_1 = 1}^{r}\sum_{q_2 \neq q_1}\left((2^{2r} - 1)Y_{(l, q_1), t}(k)Y_{(l, q_2), t}(k) + (2^{2r} - 1)Y_{(l, q_1), t}(k^{'})Y_{(l, q_2), t}(k^{'})\right)\\
    \left. + 2\sum_{l = 1}^{n}\sum_{q_1 = 1}^{r}\sum_{q_2 \neq q_1}Y_{(l, q_1), t}(k)Y_{(l, q_2), t}(k^{'})\right] \label{varxnt}
\end{multline}
and
\begin{multline}
    \text{Cov}(X_{n, t}, X_{n, t+1}) = \frac{1}{nrT}\sum_{l = 1}^n\sum_{q_1 = 1}^r\sum_{q_2 = 1}^r\\
    \left((2^r - 1)Y_{(l, q_1), t}(k)Y_{(l, q_2), t+1}(k)\right.\\
    \left.+ (2^r - 1)Y_{(l, q_1), t}(k^{'})Y_{(l, q_2), t+1}(k^{'}) \right.\\
    + \left.Y_{(l, q_1), t}(k^{'})Y_{(l, q_2), t+1}(k) + Y_{(l, q_1), t}(k)Y_{(l, q_2), t+1}(k^{'})\right) \label{covnt}
\end{multline}
We have that
\begin{equation*}
    B_n^2 = \sum_{t = 1}^{T}\text{Var}(X_{n, t}) + 2\sum_{t = 1}^{T - 1}\text{Cov}(X_{n, t}, X_{n, t + 1})
\end{equation*}
Plugging in \eqref{varxnt} and \eqref{covnt}, we have the expression of $B_n^2$. The estimator is obtained by replacing the non-identifiable terms by corresponding upper bound.
\end{proof}

% \begin{description}

% \item[Title:] General Framework

% \end{description}

\section{\texorpdfstring{$k-$}{k-}steps convex estimator}
\label{suppl_sec_b}

The approach we have described in Section~\ref{subsec: epsilonstab} naturally extends 
to using the $k - 1$ previous time steps, yielding the weighted combination estimator:
\begin{equation*}
    \hat{\tau}_t^c = \alpha_1\hat{\tau}_{t-k+1}^{TE} + \cdots + \alpha_k\hat{\tau}_{t}^{TE},
\end{equation*}
which exhibits the following absolute bias bound:
%As above, we first bound the bias of $\hat{\tau}_t^c$.
\begin{proposition} [Bound on the bias of $\hat{\tau}_t^c$]
\label{prop: biasgeneralk}
\begin{equation*}
    |\mathbb{E}[\hat{\tau}_t^c] - \tau_t^{TE}| \leq 2\left[(k - 1)\alpha_1 + (k - 2)\alpha_2 + \cdots + \alpha_{k - 1}\right]\epsilon
\end{equation*}
\end{proposition}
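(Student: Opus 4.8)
The plan is to mimic exactly the two-step argument used for Proposition~\ref{prop: k=2cvxbias}: first reduce the bias to a weighted sum of pairwise differences of the $\tau^{TE}$'s at different times, and then control each such difference via the weak stability assumption. Throughout I will use that the weights form a convex combination, i.e.\ $\sum_{j=1}^{k}\alpha_j = 1$ (this is part of the definition of the estimator $\hat{\tau}_t^c$), together with the unbiasedness of each Horvitz--Thompson estimator $\hat{\tau}_s^{TE}$ for $\tau_s^{TE}$ established in Section~\ref{subsec: po, est}.

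First I would compute the expectation. By linearity of expectation and unbiasedness,
\begin{equation*}
    \mathbb{E}[\hat{\tau}_t^c] = \sum_{j=1}^{k}\alpha_j\,\mathbb{E}[\hat{\tau}_{t-k+j}^{TE}] = \sum_{j=1}^{k}\alpha_j\,\tau_{t-k+j}^{TE}.
\end{equation*}
Since $\sum_{j=1}^k\alpha_j = 1$, we may write $\tau_t^{TE} = \sum_{j=1}^{k}\alpha_j\,\tau_t^{TE}$, and subtracting gives
\begin{equation*}
    \mathbb{E}[\hat{\tau}_t^c] - \tau_t^{TE} = \sum_{j=1}^{k}\alpha_j\left(\tau_{t-k+j}^{TE} - \tau_t^{TE}\right) = \sum_{j=1}^{k-1}\alpha_j\left(\tau_{t-k+j}^{TE} - \tau_t^{TE}\right),
\end{equation*}
where the $j=k$ term drops out because $\tau_{t-k+k}^{TE} = \tau_t^{TE}$.

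Next I would apply the triangle inequality and invoke the bound $|\tau_t^{TE} - \tau_{t'}^{TE}| \leq 2|t-t'|\epsilon$ from \eqref{equa: generaldiff} (the remark following Proposition~\ref{prop: k=2cvxbias}, itself an immediate consequence of Assumption~\ref{def: weakstability}). For the $j$-th summand the relevant time gap is $|(t-k+j) - t| = k-j$, so
\begin{equation*}
    |\mathbb{E}[\hat{\tau}_t^c] - \tau_t^{TE}| \leq \sum_{j=1}^{k-1}\alpha_j\,\big|\tau_{t-k+j}^{TE} - \tau_t^{TE}\big| \leq \sum_{j=1}^{k-1}\alpha_j\cdot 2(k-j)\epsilon = 2\big[(k-1)\alpha_1 + (k-2)\alpha_2 + \cdots + \alpha_{k-1}\big]\epsilon,
\end{equation*}
which is the claimed bound.

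There is no real obstacle here — the argument is entirely routine once \eqref{equa: generaldiff} is in hand. The only point worth flagging explicitly is the use of $\sum_j\alpha_j = 1$; if one wished to allow non-normalized weights, an extra term $(\sum_j\alpha_j - 1)\tau_t^{TE}$ would appear, so I would state the convexity of the weights as a standing hypothesis (as is already implicit in calling $\hat{\tau}_t^c$ a ``weighted combination'' with $\alpha$'s obtained from a convex optimization problem).
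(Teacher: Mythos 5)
Your proposal is correct and follows essentially the same route as the paper's own proof: unbiasedness of the Horvitz--Thompson estimators, the convexity constraint $\sum_j \alpha_j = 1$ to absorb $\tau_t^{TE}$ into the sum, the triangle inequality, and the stability bound $|\tau_t^{TE}-\tau_{t'}^{TE}|\leq 2|t-t'|\epsilon$ from \eqref{equa: generaldiff}. Your remark about stating the normalization of the weights explicitly is a fair (minor) point of presentation, but nothing more is needed.
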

As in the previous section, we can estimate $\alpha_1, \cdots, \alpha_k$ by 
solving the following convex optimization problem:
%A similar calculation of the mean squared error as in the using two time steps case leads to a general strategy of estimating $\alpha_1, \cdots, \alpha_k$ by solving the following convex optimization problem (details and general results when treatment assignments are not independent across time can be found in the Supplementary Material):
\begin{align*}
    &\argmin_{\alpha_1, \cdots, \alpha_k} \qquad \alpha_1^2\widehat{\text{Var}}(\hat{\tau}_{t-k+1}^{TE}) + \cdots + \alpha_k^2\widehat{\text{Var}}(\hat{\tau}_{t}^{TE}) \\
    &\qquad \qquad\qquad+ 4\left[(k - 1)\alpha_1 + \cdots + \alpha_{k - 1}\right]^2\epsilon^2 \\
    &\text{subject to} \qquad \alpha_1 + \cdots + \alpha_k = 1,
\end{align*}
where $\widehat{\text{Var}}(\hat{\tau}_{t-k+1}^{TE}), \cdots, \widehat{\text{Var}}(\hat{\tau}_{t}^{TE})$ are estimators of the associated 
variance terms, and are provided in Appendix \ref{appA}. This then 
suggests the following plug-in estimator:
%can be any estimator in Proposition \ref{prop: totaleffectvarest} and \ref{prop: totaleffectcovest} in the Supplementary Material. 
%After solving the above optimization problem, the final estimator of $\tau_{t}^{TE}$ will be
\begin{equation*}
    \hat{\tau}_t^c = \hat{\alpha}_1\hat{\tau}_{t-k+1}^{TE} + \cdots + \hat{\alpha}_k\hat{\tau}_{t}^{TE}.
\end{equation*}
%
%Similar to the $k = 2$ case, we can estimate the variance of $\hat{\tau}_t^c$ by $\widehat{\text{Var}}(\hat{\tau}_t^c) = \hat{\alpha}_1^2\widehat{\text{Var}}(\hat{\tau}_{t-k+1}^{TE}) + \cdots + \hat{\alpha}_k^2\widehat{\text{Var}}(\hat{\tau}_{t}^{TE})$.
We can assert stronger control over the bias of $\hat{\tau}^c$ by incorporating 
an additional constraint to the optimization problem:
%We can control the bias by specifying some conditions on $\alpha$'s. For example, if we have a maximum tolerance of the bias $\delta$, we can instead solve the following optimization problem:
\begin{align*}
    \argmin_{\alpha_1, \cdots, \alpha_k} &\quad \alpha_1^2\widehat{\text{Var}}(\hat{\tau}_{t-k+1}^{TE}) + \cdots + \alpha_k^2\widehat{\text{Var}}(\hat{\tau}_{t}^{TE}) \\
    &\qquad + 4\left[(k - 1)\alpha_1 + \cdots + \alpha_{k - 1}\right]^2\epsilon^2 \\
    \text{subject to} &\quad \alpha_1 + \cdots + \alpha_k = 1, \\
    &\quad 2\left[(k - 1)\alpha_1 + (k - 2)\alpha_2 + \cdots + \alpha_{k - 1}\right]\epsilon \leq \delta.
\end{align*}
Numerical solutions for either optimization problem are straightforward to 
obtain using standard numerical solvers. Variance estimator and confidence interval of $\tau_t^{TE}$ can be constructed in exactly the same way as in the case $k = 2$.

%Note that under the assumption that potential outcomes are uniformly bounded, $\epsilon$-stability of potential outcomes is always satisfied. We see that although we do not have a closed form solution when we use $k$ time steps, the resulting optimization problem is relatively easy to solve.

\begin{proof} [Proof of Proposition \ref{prop: biasgeneralk}]
\begin{align*}
    &|\mathbb{E}[\hat{\tau}_t^c] - \tau_t| \\
    &= |\alpha_1\tau_{t - k + 1}^{TE} + \cdots + \alpha_k\tau_{t}^{TE} - \tau_t^{TE}| \\
    &= |\alpha_1\tau_{t - k + 1}^{TE} + \cdots + \alpha_{k - 1}\tau_{t - 1}^{TE} - (1 - \alpha_k)\tau_t^{TE}| \\
    &= |\alpha_1\tau_{t - k + 1}^{TE} + \cdots + \alpha_{k - 1}\tau_{t - 1}^{TE} - (\alpha_1 + \cdots + \alpha_{k - 1})\tau_t^{TE}| \\
    &= |\alpha_1(\tau_{t - k + 1}^{TE} - \tau_t^{TE}) + \cdots + \alpha_{k - 1}(\tau_{t - 1}^{TE} - \tau_{t}^{TE})| \\
    &\leq \alpha_1|\tau_{t - k + 1}^{TE} - \tau_t^{TE}| + \cdots + \alpha_{k - 1}|\tau_{t - 1}^{TE} - \tau_{t}^{TE}| \\
    &\leq 2\alpha_1(k - 1)\epsilon + \cdots + 2\alpha_{k - 1}\epsilon \\
    &= 2\left[(k - 1)\alpha_1 + \cdots + \alpha_{k - 1}\right]\epsilon
\end{align*}
\end{proof}

We first give the optimization problem for the general case that assignments may be correlated across time:
\begin{align*}
    \argmin_{\alpha_1, \cdots, \alpha_k} &\quad \alpha_1^2\widehat{\text{Var}}(\hat{\tau}_{t-k+1}^{TE}) + \cdots + \alpha_k^2\widehat{\text{Var}}(\hat{\tau}_{t}^{TE}) \\
    & \quad + 2\alpha_{i}\alpha_{j}\sum_{1 \leq i < j \leq k}\widehat{\text{Cov}}(\hat{\tau}_{t -k + i}^{TE}, \hat{\tau}_{t - k + j}^{TE}) \\
    &\quad + 4\left[(k - 1)\alpha_1 + \cdots + \alpha_{k - 1}\right]^2\epsilon^2 \\
    \text{subject to} &\quad \alpha_1 + \cdots + \alpha_k = 1,
\end{align*}
where $\widehat{\text{Var}}(\hat{\tau}_{t-k+1}^{TE}), \cdots, \widehat{\text{Var}}(\hat{\tau}_{t}^{TE})$ and $\widehat{\text{Cov}}(\hat{\tau}_{t -k + i}^{TE}, \hat{\tau}_{t - k + j}^{TE})$ can be any estimator in Proposition \ref{prop: totaleffectvarest} and \ref{prop: totaleffectcovest}. Moreover, suppose that the assignments are independent across time, we know that $\text{Cov}(\hat{\tau}_{t -k + i}^{TE}, \hat{\tau}_{t - k + j}^{TE}) = 0$, hence we have an even simpler optimization problem as stated in the main text.
\begin{proof}
[Derivation of the optimization problem]
We first calculate the variance. 
\begin{align*}
    \text{Var}(\hat{\tau}_t^c) &= \text{Var}(\alpha_1\hat{\tau}_{t-k+1}^{TE} + \cdots + \alpha_k\hat{\tau}_{t}^{TE}) \\
    &= \alpha_1^2\text{Var}(\hat{\tau}_{t-k+1}^{TE}) + \cdots + \alpha_k^2\text{Var}(\hat{\tau}_{t}^{TE}) \\
    &\qquad + 2\alpha_{i}\alpha_{j}\sum_{1 \leq i < j \leq k}\text{Cov}(\hat{\tau}_{t -k + i}^{TE}, \hat{\tau}_{t - k + j}^{TE})
\end{align*}
Suppose we want to have smaller MSE by using $\hat{\tau}_t^c$, we need to have
\begin{equation*}
    \text{Var}(\hat{\tau}_t^c) + |\mathbb{E}[\hat{\tau}_t^c] - \tau_t^{TE}|^2 \leq \text{Var}(\hat{\tau}_t^{TE})
\end{equation*}
By Proposition \ref{prop: biasgeneralk}, it suffices to have
\begin{equation}
\begin{multlined}
    \alpha_1^2\text{Var}(\hat{\tau}_{t-k+1}^{TE}) + \cdots + \alpha_k^2\text{Var}(\hat{\tau}_{t}^{TE}) \\
    + 2\alpha_{i}\alpha_{j}\sum_{1 \leq i < j \leq k}\text{Cov}(\hat{\tau}_{t -k + i}^{TE}, \hat{\tau}_{t - k + j}^{TE}) \\
    + 4\left[(k - 1)\alpha_1 + \cdots + \alpha_{k - 1}\right]^2\epsilon^2
    \leq \text{Var}(\hat{\tau}_t^{TE})
\end{multlined} \label{eq: varineq, generalk}
\end{equation}
Now, the left hand side of (\ref{eq: varineq, generalk}) is convex in $\alpha_1, \cdots, \alpha_k$.
\end{proof}

\newpage 

\section{Additional simulation results} \label{appC}

\subsection{Simulation results for estimation under stability assumption}
\subsubsection{Parameters for Erd\H{o}s-R\'{e}nyi Model}
For the simulation study in Section~\ref{subsubsec:simul_stab_est}, we use $p = 0.1$ for $n = 50$ and then scale the probability $p$ accordingly for larger $n$ so that each unit has the same expected number of neighbors.
\subsubsection{The effect of estimated stability parameter}
Recall that our $\hat{\epsilon}$ is only a lower bound of the true $\epsilon$, hence may underestimate $\epsilon$. To investigate how our estimate of $\epsilon$ affects the results, we fix $n = 50$ and generate the social network according to Erd\H{o}s-R\'{e}nyi Model with $p = 0.1$. We generate 500 realizations of assignments and plug in $\hat{\epsilon}$, $1.5\hat{\epsilon}$, $2\hat{\epsilon}$, $2.5\hat{\epsilon}$ and $3\hat{\epsilon}$ for three kinds of estimators considered above.
\begin{table} 
\centering
\scalebox{0.8} {
\begin{tabular}{c l c c c c c c c} 
\toprule 
\multicolumn{2}{c}{Estimate of $\epsilon$} & $\hat{\epsilon}$ & $1.5\hat{\epsilon}$ & $2\hat{\epsilon}$ & $2.5\hat{\epsilon}$ & $3\hat{\epsilon}$\\
\midrule
    \multirow{1}{*}{RMSE for $\hat{\tau}_{20}^{TE}$} 
            & & 33.27 & 33.27 & 33.27 & 33.27 & 33.27\\
            \midrule
    \multirow{1}{*}{RMSE for $\hat{\tau}^c_{20}$, $k$ = 2} 
            & & 8.93 & 8.81 & 8.69 & 8.55 & 8.42\\
            \midrule
    \multirow{1}{*}{RMSE for $\hat{\tau}^c_{20}$, $k$ = 5} 
            & & 5.12 & 6.20 & 7.11 & 7.91 & 8.64\\
\bottomrule 
\end{tabular}}
\caption{Root mean squared errors (RMSE) for $\hat{\tau}_{20}^{TE}$, $\hat{\tau}^c_{20}$ with $k = 2$ and $\hat{\tau}^c_{20}$ with $k = 5$} 
\label{table:cvxest_diffeps} % A label for referencing this table elsewhere, references are used in text as \ref{label}
\end{table}
Table \ref{table:cvxest_diffeps} shows the results. We see that the convex combination type estimator with $k = 2$ is not sensitive to the estimate of $\epsilon$ while the convex combination type estimator with $k = 5$ is. Even we use $3\hat{\epsilon}$, two convex combination type estimators still show better performance in terms of root mean squared error.

\begin{figure}[H]
\centering
    \includegraphics[totalheight=6cm]{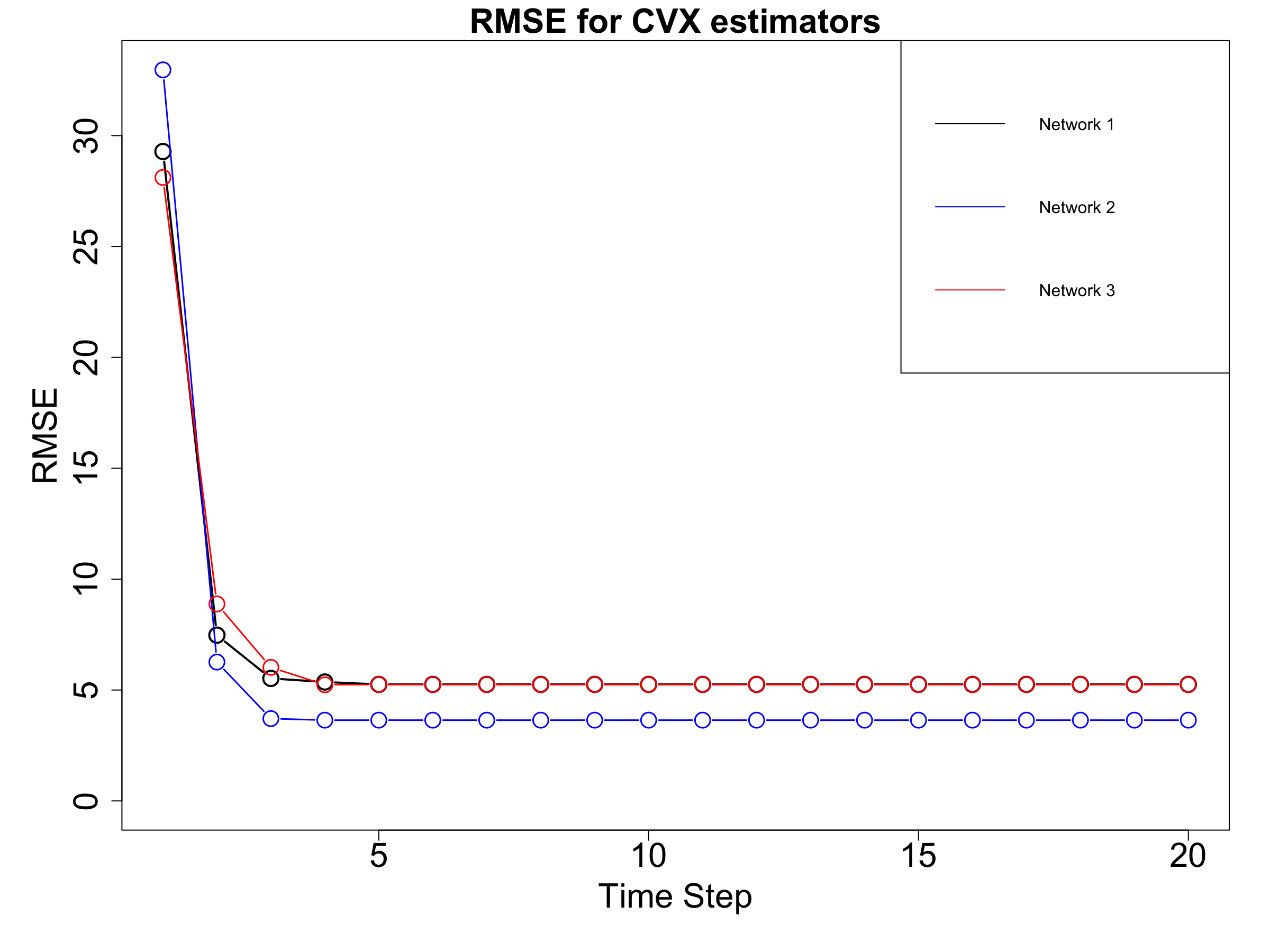}
    \caption{Root mean squared errors (RMSE) for $\hat{\tau}_{20}^{TE}$ and $\hat{\tau}^c$}
    \label{fig:rmse}
\end{figure}
\subsubsection{The effect of the number of time steps}
Finally, we investigate how $k$ affects the results. We generate three different social networks, and for each one, we plot the root mean squared errors of using 1 time step (i.e., the Horvitz-Thompson type estimator) to 20 time steps (i.e., we use all time steps to estimate the total effect at time step 20). From Figure \ref{fig:rmse} we can see that the RMSE curves stay flat after a certain value of $k$. Hence, we do not need to worry about using too many time steps as the optimization problem intrinsically pick the right $k$.

\subsubsection{Lengths of approximate confidence intervals}
\begin{table} 
\centering
\scalebox{0.75} {
\begin{tabular}{c l c c c c c} 
\toprule 
\multicolumn{2}{c}{Confidence Interval} & Network 1 & Network 2 & Network 3\\
\midrule
    \multirow{1}{*}{Gaussian CI with variance estimated by $\widehat{\text{Var}}^d$} 
            & & 27.38 & 26.62 & 27.02\\
            \midrule
    \multirow{1}{*}{Gaussian CI with variance estimated by $\widehat{\text{Var}}^u$} 
            & & 34.04 & 32.34 & 33.33\\
            \midrule
    \multirow{1}{*}{Chebyshev CI with variance estimated by $\widehat{\text{Var}}^d$} 
            & & 62.47 & 60.75 & 61.66\\
            \midrule
    \multirow{1}{*}{Chebyshev CI with variance estimated by $\widehat{\text{Var}}^u$} 
            & & 77.67 & 73.79 & 76.04\\
\bottomrule 
\end{tabular}}
\caption{Lengths of two approximate confidence intervals for $\tau_t^{TE}$ with $k = 2$} 
\label{table:cvx_est_ci_length} % A label for referencing this table elsewhere, references are used in text as \ref{label}
\end{table}
Table~\ref{table:cvx_est_ci_length} shows the average lengths of approximate confidence intervals. As expected, Gaussian confidence intervals are shorter.

\end{document}